\newcommand*{\ol}[1]{$\overline{\hbox{#1}}\m@th$}
\tikzstyle{background}=[rectangle,fill=gray!10, inner sep=0.1cm, rounded corners=0mm]
\tikzstyle{cir}=[fill=green,circle,minimum size=.4em,inner sep=0em] 
\tikzstyle{cirr}=[draw=black, fill=black!80,circle,minimum size=.4em,inner sep=0em]
\tikzstyle{cir1}=[fill=red,circle,minimum size=.4em,inner sep=0em]                                                                                                      
\tikzstyle{cir2}=[fill=violet,circle,minimum size=.4em,inner sep=0em]                                                                                                      
\DeclareDocumentCommand{\BSREACH}{}{
   \mathsf{BSREACH}
}
\newcommand{\cG}{\mathcal{G}}
\newcommand{\cA}{\mathcal{A}}
\newcommand{\calP}{\mathcal{P}}
\newcommand{\Nn}{\mathbb{N}}
\newcommand{\Zn}{\mathbb{Z}}
\newcommand{\mini}{{\mathop{\mathsf{min}}\,}}
\newcommand{\maxi}{{\mathop{\mathsf{max}}\,}}
\newcounter{todocounter}
\newcommand\assigned\leftarrow
\newcommand\lbl\lambda
\newcommand{\Aa}{\mathcal{A}}
\newcommand{\bbM}{\mathbb{M}}
\newcommand{\autstep}{\rightarrow}
\newcommand{\autsteps}{\xrightarrow{*}}
\newcommand\sr[2][]{\ext@arrow 0099{\longrightarrowfill@}{#1}{#2}}
\def\longrightarrowfill@{\arrowfill@\relbar\rightarrow}
\newcommand{\N}{\mathbb{N}}
\newcommand{\xsubparagraph}[1]{
  \vspace{0.1cm}
  \noindent
  {{\sffamily\normalsize\bfseries #1}}}
\title{Scope-Bounded Reachability in Valence Systems} %
\titlerunning{Scope-Bounded Valence Systems} %
\author{Aneesh K. Shetty}{Department of Computer Science \& Engineering, IIT Bombay}{}{}{}
\author{S. Krishna}{Department of Computer Science \& Engineering, IIT Bombay}{}{https://orcid.org/0000-0003-0925-398X}{}
\author{Georg Zetzsche}{Max Planck Institute for Software Systems, Kaiserslautern, Germany}{}{https://orcid.org/0000-0002-6421-4388}{}
\authorrunning{K. Aneesh, S. Krishna, G. Zetzsche} %
\keywords{multi-pushdown systems, underapproximations, valence systems, reachability} %
\begin{document}

\maketitle

\begin{abstract}
  Multi-pushdown systems are a standard model for concurrent recursive programs, but they have an undecidable reachability problem.
  Therefore, there have been several proposals to underapproximate their sets of runs so that reachability in this underapproximation becomes decidable.
  One such underapproximation that covers a relatively high portion of runs is \emph{scope boundedness}.
  In such a run, after each push to stack $i$, the corresponding pop operation must come within a bounded number of visits to stack $i$.

  In this work, we generalize this approach to a large class of infinite-state systems.
  For this, we consider the model of valence systems, which consist of a finite-state control and an infinite-state storage mechanism that is specified by a finite undirected graph.
  This framework captures pushdowns, vector addition systems, integer vector addition systems, and combinations thereof.
  For this framework, we propose a notion of scope boundedness that coincides with the classical notion when the storage mechanism happens to be a multi-pushdown.

  We show that with this notion, reachability can be decided in $\PSPACE$ for every storage mechanism in the framework.
  Moreover, we describe the full complexity landscape of this problem across all storage mechanisms, both in the case of (i)~the scope bound being given as input and (ii)~for fixed scope bounds.
  Finally, we provide an almost complete description of the complexity landscape if even a description of the storage mechanism is part of the input.
\end{abstract}

\section{Introduction}\label{sec:introduction}

Multi-pushdown systems are a natural model for recursive programs with threads that communicate via shared memory.
Unfortunately, even safety verification (state reachability) is undecidable for this model~\cite{ramalingam2000context}.
However, by considering \emph{underapproximations} of the set of all executions, it is still possible to discover safety violations.
The first such underapproximation in the literature was \emph{bounded context switching}~\cite{qadeer2005context}.
Here, one only considers executions that switch between threads a bounded number of times.
In terms of multi-pushdown systems, this places a bound on the number of times we can switch between stacks.

One underapproximation that covers a relatively large portion of all executions and still permits decidable reachability is \emph{scope-boundedness} as proposed by La Torre, Napoli, and Parlato~\cite{tcs12,ic20}.
Here, instead of bounding the number of context switches across the entire run, we bound the number of context switches \emph{per letter on a stack} (i.e. procedure execution).
More precisely, whenever we push a letter on some stack $i$, then we can switch back to stack $i$ at most $k$ times before we have to pop that letter again.
This higher coverage of runs comes at the cost of higher complexity: While reachability with bounded context switching is $\NP$-complete~\cite{DBLP:journals/toplas/EsparzaGP14,qadeer2005context}, the scope-bounded reachability problem is $\PSPACE$-complete (if the number of pushdowns or the scope bound is part of the input)~\cite{ic20}.

Aside from multi-pushdown systems, there is a wide variety of other infinite-state models that are used to model program behaviours.
For these, reachability problems are also sometimes undecidable or have prohibitively high complexity. 
For example, vector addition systems with states (VASS) is one of the most prominent models for concurrent systems, but its reachability problem has non-elementary complexity~\cite{DBLP:conf/stoc/CzerwinskiLLLM19}.
This raises the question of whether underapproximations for multi-pushdown systems can be interpreted in other infinite-state systems and what complexity would ensue.

The notion of bounded context switching has recently been generalized to a large class of infinite-state systems~\cite{DBLP:conf/concur/MeyerMZ18}, in the framework of \emph{valence systems over graph monoids}. 
These consist of a finite-state control that has access to a storage mechanism.
The shape of this storage mechanism is described by a finite, undirected graph.
By choosing an appropriate graph, one can realize many infinite-state models.
Examples include (multi-)pushdown systems, VASS, integer VASS, but also combinations thereof, such as pushdown VASS~\cite{DBLP:conf/icalp/LerouxST15} and sequential recursive Petri nets~\cite{haddad2007recursive}. 
Under this notion, bounded context reachability is in $\NP$ for each graph, and thus each storage mechanism in the framework~\cite{DBLP:conf/concur/MeyerMZ18}. 
Moreover, the paper \cite{DBLP:conf/concur/MeyerMZ18} presents some subclasses of graphs for which bounded context reachability has lower complexity ($\NL$ or $\P$). However, the exact complexity of reachability under bounded context switching remains open in many cases, such as the path with four nodes~\cite{DBLP:conf/concur/MeyerMZ18}.

\xsubparagraph{Contribution}
We present an \emph{abstract notion of scope-bounded runs} for valence systems over graph monoids. 
As we show, this notion always leads to a reachability problem decidable in $\PSPACE$. 
In particular, our notion applies to all infinite-state models mentioned above. 
Moreover, applied to multi-pushdown systems, it coincides with the notion of La Torre, Napoli, and Parlato.

We also obtain an almost complete complexity landscape of scope-bounded reachability.
First, we show that if both (i)~the graph $\Gamma$ describing the storage mechanism and (ii)~the scope bound $k$ are part of the input, the problem is $\PSPACE$-complete.
Second, we study how the complexity depends on the employed storage mechanism.
We show that for each $\Gamma$, the problem is either $\NL$-complete, $\P$-complete, or $\PSPACE$-complete, depending on $\Gamma$ (\cref{main:k-input-graph-fixed}).
Since the complexity drops below $\PSPACE$ only in extremely restricted cases, we also study the setting where the scope bound $k$ is fixed.
In this case, we show that the problem is either $\NL$-complete or $\P$-complete, depending on $\Gamma$ (\cref{main:k-fixed-graph-fixed}). 

Finally, applying scope-boundedness to classes of infinite-state systems requires understanding the complexity if $\Gamma$ is drawn from an infinite class of graphs.
For example, for each fixed dimension $d$, there is a graph $\Gamma_d$ such that valence automata over $\Gamma_d$ correspond to VASS of dimension $d$.
The class of \emph{all} VASS (of arbitrary dimension), however, corresponds to valence automata over all cliques.
Thus, we also study scope-bounded reachability if $\Gamma$ is restricted to a class of graphs $\cG$.
Under a mild assumption on $\cG$, we again obtain a complexity trichotomy of $\NL$-, $\P$-, or $\PSPACE$-completeness, both for $k$ as input (\cref{main:k-input}) and for fixed $k$ (\cref{main:k-fixed}). In fact, all results mentioned above follow from these general results.

\xsubparagraph{Related work}
Similar in spirit to our work are the lines of research on systems with bounded
tree-width by Madhusudan and Parlato~\cite{DBLP:conf/popl/MadhusudanP11} and on
bounded split-width by Aiswarya~\cite{DBLP:phd/hal/Cyriac14}. In these
settings, the storage mechanism is represented as a class of possible matching
relations on the positions of a computation. Then, under the assumption that
the resulting \emph{behavior graphs} have bounded tree-width or split-width,
respectively, there are general decidability results. In particular,
decidability of scope-bounded reachability in multi-pushdown systems 
has been deduced via tree-width~\cite{DBLP:conf/fsttcs/TorreP12} and via
split-width~\cite{DBLP:conf/concur/CyriacGK12}. Different from
underapproximations based on bounded tree-width or split-width, our framework
includes multi-counter systems (such as VASS or integer VASS), but also
counters nested in stacks. While VASS can be seen as special cases of
multi-pushdown systems, our framework allows us, e.g. to study the complexity
of scope-bounded reachability if the storage mechanism is restricted to
multi-counters. On the other hand, while tree-width and split-width can be
considered for queues~\cite{DBLP:conf/popl/MadhusudanP11,DBLP:conf/atva/AiswaryaGK14}, they
cannot be realized as storage mechanisms in valence systems.

Furthermore, after their introduction~\cite{tcs12} scope-bounded multi-pushdown
systems have been studied in terms of accepted languages~\cite{DBLP:journals/ijfcs/TorreNP16}, temporal logic
model checking~\cite{DBLP:conf/fsttcs/TorreP12,DBLP:conf/atva/AtigBKS12}.
Moreover, scope-boundedness has been studied in the timed
setting~\cite{DBLP:conf/tacas/AkshayGKR20},\cite{DBLP:conf/dlt/BhaveKPT19}.  

Over the last decade, the framework of valence automata over graph monoids has been used to study how several types of analysis are impacted by the choice of storage mechanism.
For example: For which storage mechanisms (i)~can silent transitions be algorithmically eliminated?~\cite{DBLP:conf/icalp/Zetzsche13}; (ii)~do we have a Parikh's theorem~\cite{DBLP:conf/mfcs/BuckheisterZ13}, (iii)~is (general) reachability decidable~\cite{DBLP:journals/iandc/Zetzsche21}; (iv)~is first-order logic with reachability decidable?~\cite{DBLP:conf/lics/DOsualdoMZ16}; (v)~can downward closures be computed effectively?~\cite{DBLP:conf/stacs/Zetzsche15}.

Details of all proofs can be found in the full version of the paper.

\section{Preliminaries}\label{sec:preliminaries}
In this section, we recall the basics of valence systems over graph monoids \cite{georgthesis}.

\xsubparagraph{Graph Monoids}
This class of monoids accommodate a variety of storage mechanisms.
They are defined by undirected graphs without parallel edges $\Gamma=(V, I)$ where $V$ is a finite set of vertices and $I \subseteq \{e\subseteq V \mid 1\le |e|\le 2\}$ is a finite set of undirected edges, which can be self-loops.
Thus, if $\{v\}\in I$, we say that $v$ is \emph{looped}; otherwise, $v$ is \emph{unlooped}.
The edge relation is also called an \emph{independence relation}.
We also write $uIv$ for $\{u,v\}\in I$.
A subset $U\subseteq V$ is a \emph{clique} if $uIv$ for any two distinct $u,v\in U$.
If in addition, all $v\in U$ are looped, then $U$ is a \emph{looped clique}.
If $U$ is a clique and all $v\in U$ are unlooped, then $U$ is an \emph{unlooped clique}.
We say that $U\subseteq V$ is an \emph{anti-clique} if we do not have $uIv$ for any distinct $u,v\in U$.
Given the graph, we define a monoid as follows.
We have the alphabet $X_\Gamma=\{v^+,v^-\mid v\in V\}$, where we write $xIy$ for $x,y\in X_\Gamma$ if for some $u,v\in V$, we have $x\in\{u^+,u^-\}$, $y\in\{v^+,v^-\}$, $x\ne y$, and $uIv$. Moreover, $\equiv_\Gamma$ is the smallest congruence on $X_\Gamma^*$ with $v^+v^-\equiv_\Gamma \varepsilon$ for $v\in V$ and $xy\equiv_\Gamma yx$ for $xIy$. Here, $\varepsilon$ denotes the empty word.
Thus, if $v$ has a self-loop, then $v^-v^+\equiv_\Gamma\varepsilon$.
We define the monoid $\bbM\Gamma:=X_{\Gamma}^*/\equiv_\Gamma$.

\xsubparagraph{Valence Systems}
Graph monoids are used in valence systems, which are finite automata whose edges are labeled with elements of a monoid.
Then, a run is considered valid if the product of the monoid elements is the neutral element.
Here, we only consider the case where the monoid is of the form $\bbM\Gamma$, so we define the concept directly for graphs.

Given a graph $\Gamma$, a valence system $\Aa$ over $\Gamma$ consists of a finite set of states $Q$, and a transition relation $\rightarrow\subseteq Q \times X_{\Gamma}^* \times Q$.
A \emph{configuration} of $\Aa$ is a tuple $(q, w)$ where $q \in Q$, $w \in X_{\Gamma}^*$ is the sequence of storage operations executed so far.
From a configuration $(q_1, u)$, on a transition $q_1 \stackrel{v}{\rightarrow} q_2$, we reach the configuration $(q_2,uv)$.
A run of $\Aa$ is a sequence of transitions.
The \emph{reachability problem} in valence systems is the following: Given states $q_{\mathit{init}}$ and $q_{\mathit{fin}}$, is there a run from $(q_{\mathit{init}}, \varepsilon)$ that reaches $(q_{\mathit{fin}}, w)$ for some $w\in X_\Gamma^*$ with $w\equiv_\Gamma \varepsilon$?

Many classical storage types can be realized with graph monoids.
Consider $\bar{\Gamma}_3 = (V, I)$ in \cref{valence}.
We have $I = \{\{a, c\}, \{b,c\}, \{c\}\}$.
For $w\in X_{\bar{\Gamma}_3}^*$ we have $w\equiv_{\bar{\Gamma}_3}\varepsilon$ if and only if two conditions are met: First, if we project to $\{a^+,a^-,b^+,b^-\}$, then the word corresponds to a sequence of push- and pop-operations that transform the empty stack into the empty stack.
Here, $x^+$ corresponds to pushing $x$, and $x^-$ to popping $x$, for $x\in\{a,b\}$.
Second, the number of $c^+$ is the same as the number of $c^-$ in $w$.
Thus, valence automata over $\bar{\Gamma}_3$ can be seen as pushdown automata that have access to a $\Zn$-valued counter.
Similarly, the storage mechanism of $\Gamma_2$ in \cref{valence} is a stack, where each stack entry is not a letter, but contains two $\Nn$-valued counters.
A push ($c^+$) starts a new stack entry and a pop ($c^-$) is only possible if the topmost two counters are zero.
For more examples and explanation, see~\cite{DBLP:journals/eatcs/Zetzsche16}.

\newcommand{\pushdown}[1]{\mathsf{P}_{#1}}
\newcommand{\multipushdown}[2]{\mathsf{MP}_{#1,#2}}
\newcommand{\uclique}[1]{\mathsf{UC}_{#1}}
\newcommand{\lclique}[1]{\mathsf{LC}_{#1}}
\newcommand{\cliquem}[1]{\mathsf{UC}^-_{#1}}
\newcommand{\stackedcounter}[1]{\mathsf{SC}_{#1}}

\begin{example}[Example storage mechanisms]\label{storage-examples}
	Let us mention a few particular (classes of) graphs and how they correspond to infinite-state systems.
	In the following, the \emph{direct product} of two graphs $\Gamma$ and $\Delta$ is the graph obtained by taking the disjoint union of $\Gamma$ and $\Delta$ and adding an edge between each vertex from $\Gamma$ and each vertex from $\Delta$.

	\begin{description}
		\item[Pushdown] For $s\in\N$, let $\pushdown{s}$ be the graph on $s$ vertices without edges.
			Then valence automata over $\pushdown{s}$ correspond to pushdown systems with $s$ stack symbols.

		\item[Multi-pushdown] Let $\multipushdown{r}{s}$ be the direct product of $r$ disjoint copies of $\pushdown{s}$.
			Then valence systems over $\multipushdown{r}{s}$ correspond to multi-pushdown systems with $r$ stacks, each of which has $s$ stack symbols.
			In Figure \ref{valence}, the induced subgraph of graph $\Gamma_1$ on $\{b_1, b_2, b_3, c_1, c_2, c_3\}$ represents $\multipushdown{2}{3}$.

		\item[VASS] If $\uclique{d}$ is an unlooped clique with $d$ vertices, then valence systems over $\uclique{d}$ correspond to $d$-dimensional vector addition systems with states.

		\item[Integer VASS] If $\lclique{d}$ be a looped clique with $d$ vertices, then valence systems over $\lclique{d}$ correspond to $d$-dimensional integer VASS.

		\item[Pushdown VASS] If $\cliquem{d}$ is the graph obtained from $\uclique{d+2}$ by removing a single edge, then valence systems over $\cliquem{d}$ correspond to $d$-dimensional pushdown VASS.
	\end{description}

\end{example}

\section{Scope-bounded runs in valence systems}\label{sec:scope-bounded}
In this section, we introduce our notion of bounded scope to valence systems over arbitrary graph monoids. For each of the used concepts, we will explain how they relate to the existing notion of scope-boundedness for multi-pushdown systems.
Fixing $\Gamma=(V,I)$ as before, first we introduce some preliminary notations and definitions.

\xsubparagraph{Dependent sets and contexts}
Recall that valence systems over the graph $\multipushdown{r}{s}$ realize a
storage consisting of $r$ pushdowns, each with $s$ stack symbols. The graph
$\multipushdown{r}{s}$ is a direct product of $r$-many disjoint anti-cliques,
each with $s$ vertices. Here, each anti-clique corresponds to a pushdown with
$s$ stack symbols: For a vertex $v$ in such an anti-clique, the symbol $v^+$ is
the push operation for this stack symbol, and $v^-$ is its pop operation.

In a multi-pushdown system, a run is naturally decomposed into contexts, where
each context is a sequence of operations belonging to one stack.
In~\cite{DBLP:conf/concur/MeyerMZ18}, the notion of context was generalized to
valence systems as follows.  A set $U\subseteq V$ is called \emph{dependent} if
it does not contain distinct vertices $u_1, u_2 \in V$ such that $u_1 I u_2$.
A set of operations $Y\subseteq X_\Gamma$ is \emph{dependent} if its underlying
set of vertices $\{v \in  V \mid v^+\in Y$ or $v^-\in Y\}$ is dependent.  A
computation is called \emph{dependent} if the set of operations occurring in it
is dependent.   A dependent computation is also called a \emph{context}.  In
$\Gamma_1$ of Figure \ref{valence}, contexts can be formed over $\{b_1, b_2,
b_3\}$, $\{c_1, c_2, c_3\}$ and $\{a\}$. 

\smallskip

\xsubparagraph{Context decomposition}
Note that a word $w\in X_\Gamma^*$ need not have a unique decomposition into contexts.
For example, for $\Gamma_2$ in \cref{valence}, the word $a^+c^+b^+$ can be decomposed as $(a^+c^+)b^+$ and as $a^+(c^+b^+)$. Therefore, we now define a canonical decomposition into
contexts, which decomposes the word from left to right.
Formally, the \emph{canonical context decomposition} of a computation $w \in X_\Gamma^+$ (that is, $|w| > 0$) is defined inductively. 
If $w$ is over a dependent set of operations, then $w$ is a single context. 
Otherwise, find the maximal, non-empty prefix $w_1$ of $w$ over a dependent set of operations. The canonical decomposition of $w$ into contexts is then $w = w_1 w_2 \dots w_m$ where $w_2 \dots w_m$ is the decomposition of the remaining word into contexts. 
In the following, when we mention the contexts of a word, we always mean those in the canonical decomposition.
Observe that in the case of $\multipushdown{r}{s}$, this is exactly the decomposition into contexts of multi-pushdown systems.

\xsubparagraph{Reductions}. 
Given a computation $w = x_1 \dots x_n$ where each $x_i \in X_\Gamma$, we identify each operation $x_i$ with its position. 
We denote by $w[i]$ the $i$th operation of $w$. 
A \emph{reduction} of $w$ is a finite sequence of applications of the following rewriting rules that transform $w$ into $\varepsilon$. 

\textbf{(R1)} $w'.w[x].w[y].w'' \mapsto_{red}w'w''$, applicable if $w[x]=o^+, w[y]=o^-$ for some $o$.

\textbf{(R2)} $w'.w[x].w[y].w'' \mapsto_{red}w'w''$, applicable if $w[x]=o^-, w[y]=o^+$ for some $oIo$.

\textbf{(R3)} $w'.w[x].w[y].w'' \mapsto_{red}w' w[y]w[x]w''$, applicable if $w[x]Iw[y]$.

Reducing a word $u$ to a word $v$ using these rules is denoted by $u \stackrel{*}{\mapsto_{red}}v$. 
A reduction of $w=a_1 \dots a_n \in X_\Gamma^*$  to $\varepsilon$ is the same as the free reduction of the sequence  $a_1,a_2, \dots, a_n$. 
For any computation $w\in X_\Gamma^*$, we have $w\equiv_\Gamma \varepsilon$ iff $w$ admits a reduction to $\varepsilon$~\cite[Equation (8.2)]{georgthesis}.

Assume that $\pi=w \stackrel{*}{\mapsto_{red}} \varepsilon$ is a reduction that transforms $w$ into $\varepsilon$. 
The relation $R_{\pi}$ relates positions of $w$ which cancel in $\pi$. 
$$w[x] R_{\pi} w[y]~~\text{if}~~w'.w[x].w[y].w'' \mapsto_{red} w'.w''~~\text{or}~~w'.w[y].w[x].w'' \mapsto_{red} w'.w''~~\text{is used in}~~\pi$$

This is inductively lifted to infixes of $w$ as $t_1s_1 R_{\pi} s_2t_2$ if there are contexts $w_i = w_{i1}.t_1.s_1.w_{i2}$ and  
$w_j=w_{j1}.s_2.t_2.w_{j2}$ of $w$ such that $s_1 R_{\pi} s_2$ and $t_1 R_{\pi} t_2$. 

\xsubparagraph{Greedy reductions} A word $w\in X_\Gamma^*$ is called
\emph{irreducible} if neither of the rules \textbf{R1} and \textbf{R2} is
applicable in $w$.  A reduction $\pi\colon w \stackrel{*}{\mapsto_{red}}
\varepsilon$ is called \emph{greedy} if it begins with a sequence of
applications of \textbf{R1} and \textbf{R2} for each context so that the
resulting context is irreducible.  Note that every word $w$ with
$w\equiv_\Gamma\varepsilon$ has a greedy reduction: One can first (greedily)
apply \textbf{R1} and \textbf{R2} until each context is irreducible. Since
the resulting word $w'$ still satisfies $w'\equiv_\Gamma\varepsilon$, there
exists a reduction $w'\stackrel{*}{\mapsto_{red}} \varepsilon$. In total,
this yields a greedy reduction.

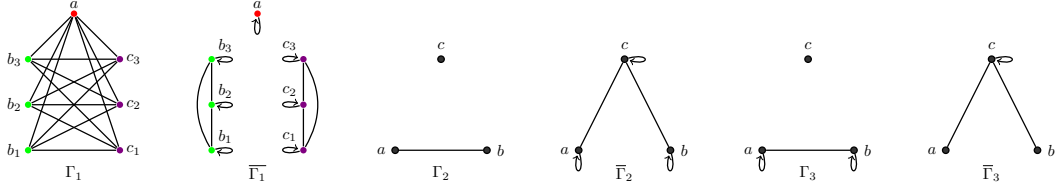
\begin{figure} [t]
  \begin{center}
    \resizebox{\textwidth}{!}{
      \begin{tikzpicture}[-,thick]
        \node[cir] at (-6,-1) (A)
        {} ; 
        
        \node[cir,fill=white] at (-6.3,-1) (AA)
        {$\tiny{b_1}$} ; 
        
        \node[cir1] at (-5,2) (A2)
        {} ;

       \node[cir,fill=white] at (-5,2.2) (AA)
        {$\tiny{a}$} ; 
         
       \node[state, draw=white] at (-5,-1.5) (A0)
        {$\Gamma_1$} ; 

       \node[state, draw=white] at (-1,-1.5) (A0)
        {$\overline{\Gamma_1}$} ; 

        \node[cir2] at (-4,-1) (A1)
        {} ; 
        \node[cir,fill=white] at (-3.7,-1) (AA)
        {$\tiny{c_1}$} ;

        \node[cir] at (-6,0) (B) {};
          \node[cir,fill=white] at (-6.3,0) (AA)
        {$\tiny{b_2}$} ; 
      
        \node[cir2] at (-4,0) (B1) {};
        \node[cir,fill=white] at (-3.7,0) (AA)
        {$\tiny{c_2}$} ;

       \node[cir] at (-6,1) (C) {};
         \node[cir,fill=white] at (-6.3,1) (AA)
        {$\tiny{b_3}$} ; 
      
       \node[cir2] at (-4,1) (C1) {};
       \node[cir,fill=white] at (-3.7,1) (AA)
        {$\tiny{c_3}$} ;

          \path (A) edge node 
        {} (A1);
       \path (A) edge node 
        {} (B1);
         \path (A) edge node 
        {} (C1);
      
      \path (B) edge node 
        {} (A1);
       \path (B) edge node 
        {} (B1);
         \path (B) edge node 
        {} (C1);
      
        \path (C) edge node 
        {} (A1);
       \path (C) edge node 
        {} (B1);
         \path (C) edge node 
        {} (C1);
      
      \path (A2) edge node 
        {} (A);
       \path (A2) edge node 
        {} (B);
         \path (A2) edge node 
        {} (C);
      
      \path (A2) edge node 
        {} (A1);
       \path (A2) edge node 
        {} (B1);
         \path (A2) edge node 
        {} (C1);

          \node[cir] at (-2,-1) (A)
        {} ; 
         \node[cir,fill=white] at (-1.7,-.7) (AA)
        {$\tiny{b_1}$} ; 
       
         \node[cir,fill=white] at (-1.7,0.3) (AA)
        {$\tiny{b_2}$} ; 
       
        \node[cir1] at (-1,2) (A2)
        {} ; 
        \node[cir,fill=white] at (-1,2.2) (AA)
        {$\tiny{a}$} ; 
       
       \node[cir,fill=white] at (-1.7,1.3) (AA)
        {$\tiny{b_3}$} ; 
       
        \node[cir2] at (0,-1) (A1)
        {} ; 
       
       \node[cir,fill=white] at (-0.3,1.3) (AA)
        {$\tiny{c_3}$} ; 
        
        \node[cir] at (-2,0) (B) {};
       \node[cir,fill=white] at (-0.3,0.3) (AA)
        {$\tiny{c_2}$} ; 
       
        \node[cir2] at (0,0) (B1) {};
        \node[cir,fill=white] at (-0.3,-0.7) (AA)
        {$\tiny{c_1}$} ;

       \node[cir] at (-2,1) (C) {};
       \node[cir2] at (0,1) (C1) {};

          \path (A1) edge[loop left] node 
        {} (A1);

          \path (B1) edge[loop left] node 
        {} (B1);
        
          \path (C1) edge[loop left] node 
        {} (C1);
        
          \path (A1) edge node 
        {} (B1);

   \path (A1) edge[bend right] node 
        {} (C1);
          \path (C1) edge node 
        {} (B1);

        \path (A) edge[loop right] node 
        {} (A);

          \path (B) edge[loop right] node 
        {} (B);
        
          \path (C) edge[loop right] node 
        {} (C);
        
          \path (A) edge node 
        {} (B);

   \path (A) edge[bend left] node 
        {} (C);
          \path (C) edge node 
        {} (B);
      
        \path (A2) edge[loop below] node 
        {} (A2);
        
   \node[cirr] at (2,-1) (A)
        {} ; 
   \node[cir,fill=white] at (1.7,-1) (AA)
        {$\tiny{a}$} ; 
            
        \node[state, draw=white] at (3,-1.5) (B0)
        {$\Gamma_2$} ; 
          \node[state, draw=white] at (7,-1.5) (B00)
        {$\overline{\Gamma}_2$} ;

\node[cir,fill=white] at (4.3,-1) (AA)
        {$\tiny{b}$} ;

        \node[cirr] at (3,1) (A2)
        {} ; 
       
       \node[cir,fill=white] at (3,1.3) (AA)
        {$\tiny{c}$} ; 
       
        \node[cirr] at (4,-1) (B1)
        {} ;

   \path (A) edge node 
        {} (B1);

   \node[cirr] at (6,-1) (A)
        {} ; 
           \path (A) edge[loop below] node 
        {} (A);

        \node[cir,fill=white] at (5.7,-1) (AA)
        {$\tiny{a}$} ; 

        \node[cirr] at (7,1) (A2)
        {} ; 
           \path (A2) edge[loop right] node 
        {} (A2);

       \node[cir,fill=white] at (7,1.3) (AA)
        {$\tiny{c}$} ; 
   
        \node[cirr] at (8,-1) (B1)
        {} ; 
             \node[cir,fill=white] at (8.3,-1) (AA)
        {$\tiny{b}$} ; 
              
   \path (A2) edge node 
        {} (B1);
              
   \path (A2) edge node 
        {} (A);
           \path (B1) edge[loop below] node 
        {} (B1);

   \node[cirr] at (10,-1) (A)
        {} ; 
          \path (A) edge[loop below] node 
        {} (A);
  
   \node[cir,fill=white] at (9.7,-1) (AA)
        {$\tiny{a}$} ; 
            
        \node[state, draw=white] at (11,-1.5) (B0)
        {$\Gamma_3$} ; 
          \node[state, draw=white] at (15,-1.5) (B00)
        {$\overline{\Gamma}_3$} ;

\node[cir,fill=white] at (12.3,-1) (AA)
        {$\tiny{b}$} ;

        \node[cirr] at (11,1) (A2)
        {} ;

       \node[cir,fill=white] at (11,1.3) (AA)
        {$\tiny{c}$} ; 
       
        \node[cirr] at (12,-1) (B1)
        {} ; 
              \path (B1) edge[loop below] node 
        {} (B1);

   \path (A) edge node 
        {} (B1);

   \node[cirr] at (14,-1) (A)
        {} ; 
        \node[cir,fill=white] at (13.7,-1) (AA)
        {$\tiny{a}$} ; 

        \node[cirr] at (15,1) (A2)
        {} ; 
       \node[cir,fill=white] at (15,1.3) (AA)
        {$\tiny{c}$} ; 
   
        \node[cirr] at (16,-1) (B1)
        {} ; 
         
             \node[cir,fill=white] at (16.3,-1) (AA)
        {$\tiny{b}$} ; 
              
   \path (A2) edge node 
        {} (B1);
              
   \path (A2) edge node 
        {} (A);

      \path (A2) edge[loop right] node 
        {} (A2);

        \end{tikzpicture}                                                                                                                                                      
      }
      \caption{The storage mechanism of  $\Gamma_1$ is 2 stacks and one partially blind counter. 
      Symbols of the same stack  are weakly dependent. In the storage mechanism of  $\Gamma_2$, $a,b,c$ are weakly dependent. }
      \label{valence}
    \end{center}
  \end{figure}

\xsubparagraph{Weak dependence} In the case of $\Gamma=\multipushdown{r}{s}$, we
know that any two vertices $u,v$ are either dependent (i.e. belong to the same
pushdown) or $\Gamma$ is the direct product of graphs $\Gamma_u$ and $\Gamma_v$
such that $u$ belongs to $\Gamma_u$ and $v$ belongs to $\Gamma_v$. This means,
two operations that are not dependent can, inside every computation, be moved
past each other without changing the effect on the stacks. This is not the case
in general graphs. In $\Gamma_2$ in \cref{valence}, the vertices $a$ and $b$
are not dependent, but in the computation $acb$, they cannot be moved past each
other, because none of them commutes with $c$. We therefore need the additional
notion of weak dependence.
We say that two vertices $u, v \in V$ are \emph{weakly dependent} if there is a
path between them in the complement of the graph.  Here, the \emph{complement}
of a graph $\Gamma=(V, I)$ is obtained by complementing the independence
relation ($v_1 I v_2$ in $\Gamma$ iff we do not have $v_1 I v_2$ in the
complement of $\Gamma$). Equivalently, $u$ and $v$ are not weakly dependent if
$\Gamma$ is the direct product of graphs $\Gamma_u$ and $\Gamma_v$ such that
$u$ belongs to $\Gamma_u$ and $v$ belongs to $\Gamma_v$.  As observed above,
$\Gamma_2$ shows that in general, weakly dependent vertices need not be
dependent.

 It can be seen that weak dependence is an equivalence relation on the set of
 vertices $V$, where the equivalence classes are the connected components in
 the complement of $\Gamma$.  Note that all operations inside a context must
 belong to the same weak dependence class.  We therefore say that two contexts
 $c_1, c_2$ are \emph{weakly dependent} if their operations belong to the same
 weakly dependent equivalence class.  Equivalently, two contexts are weakly
 dependent if all their letters are pairwise weakly dependent.  In particular,
 weak dependency is an equivalence relation on contexts also. Let us denote the
 weak dependence equivalence relation by $\sim_W$ and by $[~]_{\sim_{W}}$ the
 set of all equivalence classes induced by $\sim_W$.

\xsubparagraph{Scope bounded runs}
We now define the notion of bounded scope computations. We first phrase the
classical notion\footnote{The conference version~\cite{concur11} contains a
slightly more restrictive definition. We follow the journal
version~\cite{ic20}.} of scope-boundedness~\cite{ic20} in our framework.  If
$\Gamma=\multipushdown{r}{s}$, then $w\in X_\Gamma^*$ is considered $k$-scope
bounded if there is a reduction $\pi$ for $w$ such that in between any two
symbols $w[i]$ and $w[j]$ related in $R_\pi$, at most $k$ contexts visit the
same anti-clique of $w[i]$ and $w[j]$. Note that in $\multipushdown{r}{s}$,
for every reduction, there is a greedy reduction that induces the same
relation $R_\pi$. Indeed, any applications of \textbf{R1} and \textbf{R2}
that are applicable in a context at the start will eventually be made anyway:
In $\multipushdown{r}{s}$, if a word reduces to $\varepsilon$, then every
position has a uniquely determined ``partner position'' with which is cancels
in every possible reduction.  Therefore, we generalize scope boundedness as
follows. 
\begin{definition}[Scope Bounded Computations] 
	Consider a computation $w \in X_{\Gamma}^+$. We say $w$ is
	\emph{$k$-scoped} if there is a greedy reduction $\pi=w
	\stackrel{*}{\mapsto_{\mathit{red}}} \varepsilon$ such that in between
	any two symbols $w[i]$ and $w[j]$ related by $R_\pi$, at most $k - 1$
	contexts between $w[i]$ and $w[j]$ belong to the same weak dependence
	class as $w[i]$ and $w[j]$. 
 \end{definition}

 By $\mathsf{sc}(w)$, we denote the smallest number $k$ so that $w$ is
 $k$-scoped. Note that there is such a $k$ if and only if $w\equiv_\Gamma
 \varepsilon$. Thus, if $w\not\equiv_\Gamma\varepsilon$, we set
 $\mathsf{sc}(w)=\infty$.
In the example in \cref{valence} (graph $\Gamma_1$) the computation $w=\textcolor{green}{b_1^+}(\textcolor{violet}{c_2^+}\textcolor{red}{a^+} 
\textcolor{violet}{c_1^+}\textcolor{red}{a^+}
\textcolor{violet}{c_1^-}\textcolor{red}{a^-}
\textcolor{violet}{c_2^-}\textcolor{red}{a^-}
)^m \textcolor{green}{b_1^-}$ is 3 scope bounded for all values of $m$, even though the number of context switches grows with $m$.    

\xsubparagraph{Interaction distance}. We make the notion of scope bound more formal using the notion of interaction distance. Given a computation $w \in X_{\Gamma}^+$. Let $c_1c_2 \dots c_n$ be the canonical decomposition of $w$ into contexts. We say that two contexts $c_i, c_j$ with $i<j$ have an \emph{interaction distance} $K$ if there are $K-1$ contexts between $c_i$ and $c_j$ which are weakly dependent with $c_i$. Consider the computation $\textcolor{green}{b_1^+}(\textcolor{red}{a^+}\textcolor{violet}{c_1^+})^{m_1}\textcolor{green}{b_2^+}(\textcolor{red}{a^+}\textcolor{violet}{c_2^+})^{m_2}\textcolor{green}{b_3^+}(\textcolor{red}{a^+}\textcolor{violet}{c_3^+})^{m_3}\textcolor{green}{b_3^-}(\textcolor{violet}{c_3^-}\textcolor{red}{a^-})^{m_3}\textcolor{green}{b_2^-}(\textcolor{violet}{c_2^-}\textcolor{red}{a^-})^{m_2}\textcolor{green}{b_1^-}(\textcolor{violet}{c_1^-}\textcolor{red}{a^-})^{m_1}  $. Each differently colored sequence is a context. 
   The interaction distance 
   between $\textcolor{green}{b_1^+}$ and $\textcolor{green}{b_1^-}$ is 5, since the weakly dependent contexts strictly between them  are  $\textcolor{green}{b_2^+},\textcolor{green}{b_3^+}, \textcolor{green}{b_3^-}, \textcolor{green}{b_2^-}$.

   Thus, $w$ is $k$-scoped if and only if there is a greedy reduction $\pi\colon w\mapsto_{\mathit{red}}\varepsilon$ such that whenever $w[i]R_\pi w[j]$, then the contexts of $w[i]$ and $w[j]$ have interaction distance at most $k$.

The following is the central decision problem studied in this paper.
\begin{center}
\fbox{
\parbox{\textwidth}{
\textbf{The Bounded Scope Reachability Problem}($\BSREACH$)\\
\textbf{Given:} Graph $\Gamma$, scope bound $k$, valence system $\cA$ over $\Gamma$, initial state $q_{\mathit{init}}$, final state $q_{\mathit{fin}}$ \\
\textbf{Decide:} Is there a run from $(q_{\mathit{init}}, \varepsilon)$ to $(q_{\mathit{fin}}, w)$, for some $w\in X_\Gamma^*$ with $\mathsf{sc}(w) \leq k$?
}
}

\end{center}
Thus, in $\BSREACH$,  both
$\Gamma$ and $k$ are part of the input. We also
consider versions where certain parameters are fixed: If
$\Gamma$ is fixed, we denote the problem by
$\BSREACH(\Gamma)$.  If $\Gamma$ is part of the input, but can be
drawn from a class $\cG$ of graphs, we write $\BSREACH(\cG)$.
Finally, if we fix $k$, we use a subscript $k$, resulting
in the problems $\BSREACH_k$, $\BSREACH_k(\Gamma)$, $\BSREACH_k(\cG)$.

Deciding whether there is a run $(q_{\mathit{init}},\varepsilon)$ to
$(q_{\mathit{fin}},w)$ with $w\equiv_\Gamma\varepsilon$ corresponds to  
general configuration reachability~\cite{DBLP:conf/stacs/Zetzsche15}. Hence, we
consider the scope-bounded version of configuration reachability.
	
\xsubparagraph{Strongly Induced Subgraphs}.  When we study decision problems for
valence systems over graph monoids, then typically, if $\Delta$ is an induced
subgraph of $\Gamma$, then a problem instance for $\Delta$ can trivially be
reduced to an instance over $\Gamma$. Here, induced subgraph means that
$\Delta$ can be embedded into $\Gamma$ so that there is an edge in $\Delta$ iff
there is one in $\Gamma$.

This is not necessarily the case for $\BSREACH$: An induced subgraph might
decompose into different weak dependence classes than $\Gamma$. Therefore, we
use a stronger notion of embedding.  We say that $\Gamma'=(V',I')$ is a
\emph{strongly induced subgraph} of $\Gamma=(V,I)$ if there is an injective map
$\iota\colon V\to V'$ such that for any $u,v\in V$, we have (i)~$uIv$ iff
$\iota(u)I'\iota(v)$ and (ii)~$u\sim_W v$ iff $\iota(u) \sim_W \iota(v)$. For
example, the graph $\Gamma$ consisting of two adjacent vertices (without loops)
is an induced subgraph of $\Gamma_2$ in \cref{valence}.  However, $\Gamma$ is
not a strongly induced subgraph of $\Gamma_2$: In $\Gamma_2$, $a$ and $b$ are
weakly dependent, whereas the vertices of $\Gamma$ are not.

\xsubparagraph{Neighbor Antichains}
 Let $\Gamma=(V,I)$ be a graph. In our algorithms, we will need to store information about a dependent set $U\subseteq V$ from which we can conclude whether for another dependent set $U'\subseteq V$, we have $UIU'$; that is, for all $u \in U, u' \in U', u I u'$. 
 To estimate the required information, we use the notion of neighbor antichains.
Let $\Gamma=(V,I)$ be a graph. Given $v \in V$, let $N(v)$ represent the neighbors of $v$, that is $N(v)=\{u \in V \mid uIv\}$.
We define a quasi-ordering on $V$ as follows. For $u,v\in V$, we have $u\le v$ if $N(u)\subseteq N(v)$. 
It is possible that for distinct, $u,v\in V$ we have $u\le v$ and $v\le u$ and thus $\le$ is not necessarily a partial order.
In the following, we will assume that the graphs $\Gamma$ are always equipped with some linear order $\ll$ on $V$. For example, one can just take the order in which the vertices appear in a description of $\Gamma$.
Using $\ll$, we can turn $\le$ into a partial order, which is easier to use algorithmically: We set $u\preceq v$ if and only if $u\le v$ and $u\ll v$. 

Now, given $U \subseteq V$, let $\mini U=\{u \in U \mid \forall v \in U\setminus\{u\}, v \not\preceq u\}$
and $\maxi U=\{u \in U \mid \forall v \in U\setminus\{u\}, u\not\preceq v\}$
 represent respectively, the minimal and  maximal elements from $U$. 

\begin{lemma}
For sets $U,U' \subseteq V$, 	$UIU'$ if and only if $(\mini U)I(\mini U')$.
\label{mini}
\end{lemma}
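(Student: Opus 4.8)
The forward implication is immediate: since $\mini U\subseteq U$ and $\mini U'\subseteq U'$, the condition $UIU'$ (i.e.\ $uIu'$ for all $u\in U$, $u'\in U'$) restricts to $(\mini U)I(\mini U')$. All the content is in the converse, so I would focus on showing that $(\mini U)I(\mini U')$ forces $UIU'$.

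The first ingredient I would establish is a \emph{domination fact}: for every $u\in U$ there is a minimal element $w\in\mini U$ with $w\preceq u$ or $w=u$, and in either case $N(w)\subseteq N(u)$. This holds because $\preceq$ is a strict partial order on the finite set $U$ (irreflexivity and antisymmetry come from $\ll$ being a strict linear order, transitivity from transitivity of both $\le$ and $\ll$), so starting from $u$ and repeatedly descending along $\preceq$ reaches a minimal element in finitely many steps; and $w\preceq u$ unfolds to $w\le u$, which by definition means $N(w)\subseteq N(u)$. The symmetric statement holds for $U'$.

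With this in hand, the converse is a short neighborhood-chasing argument. Fix arbitrary $u\in U$ and $u'\in U'$; I want $uIu'$. Choose $w\in\mini U$ with $N(w)\subseteq N(u)$ and $w'\in\mini U'$ with $N(w')\subseteq N(u')$ as above. The hypothesis gives $wIw'$, i.e.\ $w\in N(w')$; since $N(w')\subseteq N(u')$ we get $w\in N(u')$, that is $wIu'$; rewriting this as $u'\in N(w)$ and using $N(w)\subseteq N(u)$ yields $u'\in N(u)$, i.e.\ $uIu'$. As $u,u'$ were arbitrary, $UIU'$ follows.

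The only place requiring care --- and what I would flag as the main (if modest) obstacle --- is getting the direction of the neighborhood inclusions right: passing to a $\preceq$-smaller vertex \emph{shrinks} its neighborhood, so one must exploit the adjacency \emph{at the minimal representatives} $w,w'$ and then transport it \emph{upward} to $u$ and $u'$ through $N(w)\subseteq N(u)$ and $N(w')\subseteq N(u')$, rather than the other way around. I would also note that self-loops cause no trouble, since the whole argument only manipulates the symmetric relation $I$ and the neighborhoods $N(\cdot)$ uniformly.
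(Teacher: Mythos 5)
Your proof is correct and takes essentially the same route as the paper's: both arguments reduce to minimal representatives via the monotonicity $w \preceq u \Rightarrow N(w) \subseteq N(u)$ and then transport the adjacency at the minimal elements upward through the two neighborhood inclusions. The differences are only presentational — the paper argues by contradiction and leaves the descent to a dominating minimal element implicit, whereas you argue directly and justify that step explicitly.
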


 Since $\mini U$ and $\mini U'$ are antichains w.r.t.\ $\preceq$, if we bound the size of such antichains in our graph $\Gamma$, we bound the amount of information needed to store to determine whether $UIU'$.
We call a subset $A\subseteq V$ a \emph{neighbor antichain} if (i)~$A$ is dependent (i.e. an anti-clique, no edges between any two vertices of $A$) and (ii)~$A$ is an antichain with respect to $\preceq$.
For the graph  $\Gamma_1$ in Figure \ref{valence}, each vertex is a  neighbor antichain, while for $\overline{\Gamma}_1$, each $\{a,b_i,c_j\}$ is a neighbor antichain for all $i,j$.
By $\tau(\Gamma)$, we denote the maximal size of a neighbor antichain in $\Gamma$.
Thus $\tau(\Gamma_1)=1, \tau(\overline{\Gamma}_1)=3$.
We say that a class $\cG$ of graphs is \emph{neighbor antichain bounded} if there is a number $t$ such that $\tau(\Gamma)\le t$ for every graph $\Gamma$ in $\cG$.

For example,  the class of graphs $\cG$ consisting of bipartite graphs $B_n$ with nodes $\{u_i,v_i\mid i\in\{1,\ldots,n\}\}$, where $\{u_i,v_j\}$ is an edge iff $i\ne j$, is not neighbor antichain bounded. 

\section{Main results}\label{sec:main-results}
In this section, we present the main results of this work.
If both the graph and the scope bound $k$ are part of the input, the bounded
scope reachability problem is $\PSPACE$-complete (as we will show in
\cref{main:k-input}).  Since graph monoids provide a much richer class of
storage mechanisms than multi-pushdowns, this raises the question of
how the complexity is affected if the storage mechanism (i.e. the
graph) is drawn from a subclass of all graphs.  
\begin{theorem}[Scope bound in input]\label{main:k-input}
  Let $\cG$ be a class of graphs.
  Then $\BSREACH(\cG)$ is
  \begin{enumerate}
  \item $\NL$-complete if the graphs in $\cG$ have at most one vertex,
  \item $\P$-complete if every graph in $\cG$ is an anti-clique and $\cG$ contains a graph with $\ge 2$ vertices,
  \item $\PSPACE$-complete otherwise.
  \end{enumerate}
\end{theorem}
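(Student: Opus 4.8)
The plan is to prove the trichotomy by matching upper and lower bounds, organised by the richness of the storage. The guiding observation is that scope-boundedness only constrains computations once $\Gamma$ has an edge between two distinct vertices. Indeed, if $\Gamma$ is an anti-clique --- in particular, if it has at most one vertex --- then no two distinct vertices are independent, so every word over $X_\Gamma$ forms a single context. Consequently, for every $k\ge 1$ a word $w$ is $k$-scoped iff $w\equiv_\Gamma\varepsilon$, and $\BSREACH$ over such $\Gamma$ collapses to ordinary configuration reachability. This is exactly what pins down cases (1) and (2); genuine scope-bounded behaviour, and hence the jump to $\PSPACE$, appears only in case (3).

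For the upper bounds: in case (1) a graph with at most one vertex realises a single ($\Nn$- or $\Zn$-valued) counter, and since scope is vacuous, $\BSREACH$ reduces to one-counter reachability, which lies in $\NL$. In case (2) an anti-clique with at least two vertices realises a single pushdown-type store; scope is again vacuous, so $\BSREACH$ is ordinary pushdown reachability, decidable in $\P$ by saturation. In case (3) the $\PSPACE$ upper bound is supplied uniformly by the general scope-bounded reachability algorithm (the paper's main algorithmic result), which decides $\BSREACH$ in $\PSPACE$ for arbitrary $\Gamma$ and $k$; uniformity is what lets it cover every graph $\cG$ may supply as input.

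The lower bounds for (1) and (2) are standard: directed reachability (STCON) reduces to reachability in the finite control while leaving the store untouched, giving $\NL$-hardness throughout; and a $\P$-complete pushdown reachability problem reduces into case (2) using the $\ge 2$ stack symbols of the witnessing anti-clique, with self-loops not lowering the complexity below $\P$. The heart of the theorem is $\PSPACE$-hardness in case (3). Here I fix one witness $\Gamma_0\in\cG$ that is not an anti-clique, i.e.\ has distinct $u,v$ with $uIv$, and reduce a $\PSPACE$-complete problem (e.g.\ acceptance of a polynomial-space Turing machine, or classical scope-bounded multi-pushdown reachability) to $\BSREACH(\Gamma_0)$ with $k$ in the input. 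The reduction is routed through \emph{strongly induced subgraphs}: I would first show that if $\Delta$ is a strongly induced subgraph of $\Gamma_0$, then $\BSREACH(\Delta)$ reduces to $\BSREACH(\Gamma_0)$ in logarithmic space while preserving $k$ --- this is precisely what the strengthened embedding buys, since it preserves weak dependence classes and hence interaction distances. I would then isolate a finite list of minimal \emph{hard cores} (such as $\uclique{2}$, $\lclique{2}$ and their mixed/looped variants, plus a few three- and four-vertex graphs needed when the only edges lie within a single weak dependence class) and prove each core $\PSPACE$-hard by simulating a space-bounded machine: the commuting pair from an edge lets two weak dependence classes be interleaved, while the input scope bound $k$ serves as the tunable resource bounding the width of the simulated tape, so that runs of unbounded length realise the exponentially many configurations.

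I expect two steps to be the main obstacles. The first is the combinatorial classification: proving that \emph{every} non-anti-clique graph contains one of the finitely many hard cores as a strongly induced subgraph. The delicate case is graphs whose complement is connected (a single weak dependence class), where an edge $uIv$ is witnessed by no two-vertex strongly induced subgraph --- the pair $\{u,v\}$ would be weakly independent in isolation but is weakly dependent in $\Gamma_0$ --- so one must descend to a larger core. The second obstacle is the per-core simulation itself, where one must verify that the greedy reduction's cancellation relation $R_\pi$ together with the interaction-distance bound faithfully enforces the intended space bound; getting this encoding exactly right, rather than merely a bounded-context-switching-style bound, is what yields $\PSPACE$ rather than a weaker class.
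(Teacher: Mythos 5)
Your treatment of cases (1) and (2) and of the upper bounds matches the paper: anti-cliques admit only a single context, so scope is vacuous and the problem collapses to one-counter reachability (in $\NL$) respectively pushdown reachability (in $\P$), exactly as in \cref{upper-bound-k-input-simple}; the $\PSPACE$ upper bound comes from the general algorithm (\cref{upper-bound-k-input}); and the $\NL$/$\P$ lower bounds are as in the paper (with the caveat that for two looped non-adjacent vertices the paper does not reduce from pushdown reachability but from rational subset membership in the free group of rank~2, as in \cref{lower-bound-k-fixed-two-nonadjacent}).

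However, for the heart of the theorem --- $\PSPACE$-hardness in case (3) --- your proposal has a genuine gap, and moreover takes a detour that creates a second one. You defer both the classification of ``hard cores'' and the per-core simulation to acknowledged ``obstacles,'' but these are precisely where the difficulty lies. In particular, the simulation is not routine: the natural attempt to check that the oldest context encodes a given bit fails, because a checking factor such as $u^-u^-$ need not cancel entirely against the intended context --- one $u^-$ can cancel against a $u^+$ in a \emph{later} context, so the check is unsound. The paper's proof of \cref{lower-bound-k-input-two-adjacent} resolves exactly this by a redundant encoding (each queue bit is spread over three contexts as $011$ or $100$, so that a successful check of the pattern forces the oldest context to be correct) and reduces from bounded queue automata. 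Without this (or an equivalent) idea, your ``tunable $k$ bounds the tape width'' sketch does not yield a sound reduction, only a plausible intention.

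The detour through strongly induced subgraphs and a finite list of hard cores is also unnecessary, and it is what forces you into the delicate connected-complement case you flag. Since a valence system over $\Gamma$ may simply restrict its transition labels to $\{u^+,u^-,v^+,v^-\}$ for one adjacent pair $u I v$, the reduction can be built \emph{directly inside} $\Gamma$; there is no need for $\{u,v\}$ to induce (strongly or otherwise) any particular subgraph. Scope-boundedness is then interpreted in $\Gamma$ itself, and the only effect of $u$ and $v$ being weakly dependent is that the separator contexts $v^+v^-$ count toward interaction distance --- which the paper absorbs by setting $k=6(2n-1)$ instead of $k=3(2n-1)$. This is exactly where it matters that $k$ is part of the input: the reduction may choose $k$ after inspecting $\Gamma$. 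Your classification problem (every non-anti-clique graph contains a bounded-size hard core as a strongly induced subgraph) thus never has to be solved; as it stands, your proposal neither solves it nor makes the per-core hardness precise, so the $\PSPACE$-hardness claim in case (3) remains unproven.
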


\begin{corollary}\label{main:k-input-graph-fixed}
  Let $\Gamma$ be a graph. Then $\BSREACH(\Gamma)$ is
  \begin{enumerate}
  \item $\NL$-complete if $\Gamma$ has at most one vertex,
  \item $\P$-complete if $\Gamma$ is an anti-clique with $\ge 2$ vertices,
  \item $\PSPACE$-complete otherwise.
  \end{enumerate}
\end{corollary}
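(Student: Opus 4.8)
The plan is to obtain \cref{main:k-input-graph-fixed} as an immediate specialization of \cref{main:k-input} to a singleton class of graphs. Concretely, given a fixed graph $\Gamma$, I would set $\cG = \{\Gamma\}$ and observe that $\BSREACH(\Gamma)$ and $\BSREACH(\cG)$ are the very same decision problem: in $\BSREACH(\cG)$ the graph drawn from $\cG$ is always $\Gamma$, so the only apparent difference is whether $\Gamma$ is presented as a fixed parameter or read from a single-element input class. This does not change the set of instances nor their complexity, so the two problems coincide and in particular have identical completeness status.

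It then remains to check that the three-way case distinction of \cref{main:k-input} collapses exactly to the three cases stated in the corollary when $\cG = \{\Gamma\}$. For case~(1), ``all graphs in $\cG$ have at most one vertex'' becomes simply ``$\Gamma$ has at most one vertex'', yielding $\NL$-completeness. For case~(2), ``every graph in $\cG$ is an anti-clique and $\cG$ contains a graph with $\ge 2$ vertices'' becomes ``$\Gamma$ is an anti-clique with $\ge 2$ vertices'', yielding $\P$-completeness. Case~(3), the residual ``otherwise'', then covers precisely those $\Gamma$ with at least two vertices that are not anti-cliques, giving $\PSPACE$-completeness. Since these three conditions are pairwise disjoint and jointly exhaustive over all graphs $\Gamma$, the specialization is well defined and matches the corollary verbatim.

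Because the derivation is a pure instantiation, there is essentially no hard step: the only thing to verify is the routine bookkeeping that the singleton-class conditions of \cref{main:k-input} reduce to the literal graph conditions of the corollary, and that the cases remain exhaustive. In particular, neither a new upper bound nor a new lower bound has to be established, as both directions of each completeness result are already supplied by \cref{main:k-input}.
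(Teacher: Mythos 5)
Your proposal is correct and matches the paper's intended derivation: the paper states \cref{main:k-input-graph-fixed} immediately after \cref{main:k-input} without further argument, precisely because \cref{main:k-input} places no closure or boundedness assumptions on $\cG$, so instantiating it with the singleton class $\{\Gamma\}$ (noting that $\BSREACH(\Gamma)$ and $\BSREACH(\{\Gamma\})$ are logspace-interreducible since the description of the fixed $\Gamma$ has constant size) gives the corollary, with the three cases collapsing exactly as you describe.
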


\subparagraph{Fixed scope bound} We notice that the problem $\BSREACH(\cG)$ is
below $\PSPACE$ only for severely restricted classes $\cG$, where bounded scope
reachability degenerates into ordinary reachability in pushdown automata or
one-counter automata.  Therefore, we also study the setting where the scope
bound $k$ is fixed. However, our result requires two assumptions on the graph
class $\cG$. The first assumption is that $\cG$ be closed under taking strongly
induced subgraphs. This just rules out pathological exceptions: otherwise, it
could be that there are hard instances for $\BSREACH_k$ that only occur
embedded in extremely large graphs in $\cG$, resulting in lower complexity. In
other words, we restrict our attention to the cases where an algorithm for
$\cG$ also has to work for strongly induced subgraphs.  For each individual
graph, this is always the case: if $\Delta$ is a strongly induced subgraph of
$\Gamma$, then $\BSREACH_k(\Delta)$ trivially reduces to $\BSREACH_k(\Gamma)$.

Our second assumption is that $\cG$ be neighbor antichain bounded.  This is a
non-trivial assumption that still covers many interesting types of
infinite-state systems from the literature. For example, every graph mentioned
in \cref{storage-examples} has neighbor antichains of size at most~$1$. In
particular, our result still generalizes the case of multi-pushdown systems.

Moreover, consider the graphs $\stackedcounter{m}$ for $m\in\N$, where
(i)~$\stackedcounter{0}$ is a single unlooped vertex,
(ii)~$\stackedcounter{2m+1}$ is obtained from $\stackedcounter{2m}$ by adding a
new vertex adjacent to all existing vertices, and (iii)~$\stackedcounter{2m+2}$
is obtained from $\stackedcounter{2m+1}$ by adding an isolated unlooped vertex.
Then neighbor antichains in $\stackedcounter{m}$ are of size at most $1$.
Furthermore, using reductions from
\cite[Proposition~3.6]{DBLP:journals/iandc/Zetzsche21}, it follows that
whenever reachability for valence systems over $\Gamma$ is decidable, then this
problem reduces in polynomial time to reachability over some
$\stackedcounter{m}$. Whether reachability is decidable for the graphs
$\stackedcounter{m}$ remains an open
problem~\cite{DBLP:journals/iandc/Zetzsche21}. Thus the graphs
$\stackedcounter{m}$ form an extremely expressive class that is still
neighbor antichain bounded.

\begin{theorem}[Fixed scope bound]\label{main:k-fixed}
  Let $\cG$ be closed under strongly induced
  subgraphs and neighbor antichain bounded.  For every $k\ge 1$, the
  problem $\BSREACH_k(\cG)$ is
  \begin{enumerate}
  \item $\NL$-complete if $\cG$ consists of cliques of bounded size,
  \item $\P$-complete if $\cG$ contains some graph that is not a clique, and the size of cliques in $\cG$ is bounded,
  \item $\PSPACE$-complete otherwise.
  \end{enumerate}
\end{theorem}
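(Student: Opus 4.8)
The plan is to prove, for each of the three regimes, a matching upper and lower bound, with the controlling parameter being $\omega(\Gamma)$, the maximum size of a clique in $\Gamma$, ranging over $\Gamma\in\cG$. The $\PSPACE$ upper bound of case~3 comes for free: $\BSREACH_k(\cG)$ is just $\BSREACH(\cG)$ with $k$ frozen, so the general $\PSPACE$ procedure behind \cref{main:k-input} already decides it. Hence the real work is (i)~the $\P$ and $\NL$ upper bounds when clique size is bounded, and (ii)~the three hardness proofs, each pinned to the exact boundary (clique vs.\ non-clique; bounded vs.\ unbounded clique size).

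For the upper bounds I would scan a candidate computation left to right along a greedy reduction and maintain, besides the control state, a summary of the ``frontier'' of pending (not-yet-cancelled) operations. The key structural lemma is that the pairwise-independent pending contexts at any frontier are supported on a clique, so at most $\omega(\Gamma)$ of them can be open simultaneously; combined with scope-boundedness (at most $k$ active contexts per weak dependence class) and \cref{mini} (the independence interface of a context is its minimum neighbor antichain, of size $\le\tau(\Gamma)$), the frontier is described by $O(\omega(\Gamma)\cdot k\cdot\tau(\Gamma))$ bounded items. When $\cG$ consists of cliques of bounded size, $\omega$, $\tau$ and the vertex set are all bounded and every context acts on a single counter, so that, since bounded scope confines each token to a window of $k$ contexts, the reachable (state, frontier) pairs can be explored by a nondeterministic log-space search, giving $\NL$. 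When $\omega$ and $\tau$ are bounded but graphs may be large (case~2), the frontier width is still bounded, yet individual contexts are genuine pushdowns; I would therefore run a saturation/fixpoint composing context summaries via the usual pushdown-reachability (triple) construction, which runs in polynomial time and yields $\P$.

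The lower bounds attach to the boundaries. $\NL$-hardness holds throughout by encoding directed graph reachability ($\mathsf{STCON}$) in the finite control. For case~2, a non-clique in $\cG$ has a dependent pair $\{u,v\}$; since dependent vertices are weakly dependent, the two-vertex anti-clique is a strongly induced subgraph of that graph and thus lies in $\cG$. Over it every computation is a single context, so the scope constraint is vacuous and ordinary single-stack pushdown reachability---$\P$-complete---reduces to $\BSREACH_k(\cG)$ (if $u,v$ are looped one instead uses $\P$-hardness of the identity/emptiness problem over the free group $\bbM\Gamma$). For case~3, unbounded clique size yields graphs with cliques of every size, which I would use to simulate a polynomial-space machine (e.g.\ reachability in a Boolean program / $1$-safe net), storing a configuration across the clique's many pairwise-independent components while the control drives the transitions.

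I expect the fixed-scope $\PSPACE$-hardness of case~3 to be the main obstacle. Two difficulties compound: clique storage offers no zero tests, so correct reads and writes must be enforced solely through the net-zero reachability condition and the commutativity of independent operations; and a fixed scope bound forbids a token from surviving more than $k$ visits of its class, so stored values cannot simply be parked for the whole simulation. I would address the latter with a periodic ``refresh'' phase that pops and immediately re-pushes each live token, resetting its scope clock while preserving the stored configuration, and would have to schedule these refreshes so that $\mathsf{sc}(w)\le k$ holds even when the large clique sits inside a single weak dependence class (as for a stack carrying many counters per frame, where $\omega$ is large although there is only one weak dependence class). Verifying that this simulation admits no spurious accepting runs, and that it dovetails with the frontier lemma underlying the $\P$ upper bound, is the crux of the argument.
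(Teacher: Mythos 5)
Your reading of the case split is where the argument breaks. You take the controlling parameter to be $\omega(\Gamma)$, the largest clique \emph{inside} a graph of $\cG$; but the trichotomy in \cref{main:k-fixed} (as the paper's proofs use it) is governed by the sizes of the graphs in $\cG$ that \emph{are} cliques --- equivalently, given closure under strongly induced subgraphs, by the number $d$ of weak dependence classes. The two readings genuinely differ, and under yours case~3 becomes false (unless $\P=\PSPACE$). Concretely, let $\Gamma_n$ be an unlooped clique on $n$ vertices together with one isolated vertex $c$, and let $\cG$ consist of all $\Gamma_n$ and their strongly induced subgraphs. This $\cG$ is closed under strongly induced subgraphs and neighbor antichain bounded ($\tau\le 2$), and $\omega(\Gamma_n)=n$ is unbounded, so your case~3 would apply. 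But the complement of $\Gamma_n$ is a connected star centered at $c$, so $\Gamma_n$ has a \emph{single} weak dependence class; moreover no clique on $m\ge 2$ vertices is a strongly induced subgraph of $\Gamma_n$, since adjacent vertices of $\Gamma_n$ are weakly dependent in $\Gamma_n$ but not in a standalone clique. Hence $\cG$ contains no clique members of size $\ge 2$, falls under the paper's case~2, and $\BSREACH_k(\cG)$ is in $\P$ (the paper's algorithm only needs $d$, $\tau$ and $k$ bounded). So no $\PSPACE$-hardness proof can exist there --- and this is exactly the configuration you single out as your ``main obstacle'', a large clique sitting inside one weak dependence class. Your refresh trick necessarily fails: a context can touch only one of the clique vertices, so refreshing $n>k$ stored tokens takes at least $n$ contexts of the one and only class, forcing some pending push and its matching pop to have interaction distance exceeding $k$.

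Under the correct reading, case~3 guarantees (by closure: subcliques of clique \emph{members} are strongly induced subgraphs) that $\cG$ contains unlooped cliques of every size or looped cliques of every size, and the hardness target is a \emph{standalone} clique, where every vertex is its own weak dependence class. There, ``parking'' a stored value costs nothing, because interaction distance counts only same-class contexts, so no refresh is needed; the actual difficulty is the one you name but do not solve: forcing reads to match writes with no zero tests, using only the scope bound. The paper's \cref{lower-bound-k-fixed-unbounded-cliques} does this by reducing from bit vector automata: bit value $y$ at position $i$ is written as a context $a_i^{(1+2y)k}$, read as $\bar a_i^{(1+2x)k}$, with separator factors $(a_i\bar a_ib_i\bar b_i)^k$ inserted so that a counting argument (a context $a_i^{3k}$ can see at most $3k-1$ cancellable occurrences of $\bar a_i$ within interaction distance $k$) makes every mismatched read violate $k$-scopedness. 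This gadget is the missing core of your case~3. Finally, note that under the correct reading your $\P$ upper bound for case~2 is also incomplete: it assumes $\omega$ bounded, which need not hold (the class above has $\omega$ unbounded yet lies in case~2). The paper instead bounds $d$ directly: if some $\Gamma\in\cG$ had more than $\ell$ weak dependence classes, one vertex per class would give a clique on $\ell+1$ vertices as a strongly induced subgraph, contradicting the bound on clique members of $\cG$.
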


In \cref{main:k-fixed}, we do not know if one can lift the restriction of
neighbor antichain boundedness. In \cref{sec:conclusion}, we
describe a class of graphs that is closed under strongly induced
subgraphs, but we do not know the exact complexity of $\BSREACH_k(\cG)$.

\Cref{main:k-fixed} allows us to deduce the complexity of
$\BSREACH_k(\Gamma)$ for every $\Gamma$. 
\begin{corollary}\label{main:k-fixed-graph-fixed}
  Let $\Gamma$ be a graph. Then for every $k\ge 1$, the problem $\BSREACH_k(\Gamma)$ is
  \begin{enumerate}
  \item $\NL$-complete if $\Gamma$ is a clique,
  \item $\P$-complete otherwise.
  \end{enumerate}
\end{corollary}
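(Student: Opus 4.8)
The plan is to derive this statement directly from \cref{main:k-fixed} by instantiating it at the class $\cG_\Gamma$ of all graphs that occur as strongly induced subgraphs of $\Gamma$ (including $\Gamma$ itself, via the identity embedding). First I would verify that $\cG_\Gamma$ meets the two hypotheses of \cref{main:k-fixed}. Closure under strongly induced subgraphs follows from transitivity of the strongly-induced-subgraph relation: composing two embeddings $\iota_1,\iota_2$ gives an injective map for which both the edge condition and the weak-dependence condition compose verbatim, since weak dependence is preserved at each stage. Neighbor antichain boundedness is immediate, because $\Gamma$ is a single finite graph, so $\cG_\Gamma$ is finite up to isomorphism and $\tau(\Delta)\le |V(\Gamma)|$ for every $\Delta\in\cG_\Gamma$; in particular case (3) of \cref{main:k-fixed} will never apply, as clique sizes in $\cG_\Gamma$ are bounded by $|V(\Gamma)|$.

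The key observation I would then establish is that $\BSREACH_k(\Gamma)$ and $\BSREACH_k(\cG_\Gamma)$ are interreducible in logarithmic space, and hence have the same complexity ($\NL$, $\P$, and $\PSPACE$ all being closed under such reductions). One direction is trivial: since $\Gamma\in\cG_\Gamma$, every instance of $\BSREACH_k(\Gamma)$ is already an instance of $\BSREACH_k(\cG_\Gamma)$, which gives the upper bounds. For the converse, an instance of $\BSREACH_k(\cG_\Gamma)$ is an instance over some $\Delta\in\cG_\Gamma$, i.e.\ over a strongly induced subgraph of $\Gamma$; as noted just before \cref{main:k-fixed}, such an instance reduces to one over $\Gamma$. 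Because $\cG_\Gamma$ contains only finitely many graphs up to isomorphism and $\Gamma$ is fixed, the reduction can identify $\Delta$ and apply the corresponding fixed embedding into $\Gamma$, so this is a single logarithmic-space reduction $\BSREACH_k(\cG_\Gamma)\le_{\log}\BSREACH_k(\Gamma)$.

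It remains to read off which case of \cref{main:k-fixed} governs $\cG_\Gamma$. If $\Gamma$ is a clique, then every strongly induced subgraph is again a clique: the edge condition of the embedding preserves the edge relation exactly, so all distinct vertices in the image remain pairwise adjacent. Hence $\cG_\Gamma$ consists of cliques of bounded size, and \cref{main:k-fixed}(1) yields that $\BSREACH_k(\cG_\Gamma)$, and therefore $\BSREACH_k(\Gamma)$, is $\NL$-complete. If $\Gamma$ is not a clique, then $\Gamma$ itself is a non-clique member of $\cG_\Gamma$ while clique sizes remain bounded, so \cref{main:k-fixed}(2) gives that $\BSREACH_k(\cG_\Gamma)$, and hence $\BSREACH_k(\Gamma)$, is $\P$-complete.

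I expect the only delicate point to be the second (harder) direction of the interreduction together with the hardness transfer it enables: the $\NL$- respectively $\P$-hardness supplied by \cref{main:k-fixed} is hardness for the \emph{class} $\cG_\Gamma$, and it is exactly the reduction $\BSREACH_k(\cG_\Gamma)\le_{\log}\BSREACH_k(\Gamma)$ that moves this hardness onto the single graph $\Gamma$. I would therefore make sure this reduction is genuinely uniform and logarithmic-space, which is guaranteed here by the finiteness of $\cG_\Gamma$ and the fact that $\Gamma$ (hence each relevant embedding) is fixed.
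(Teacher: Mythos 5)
Your proposal is correct and follows exactly the paper's own route: the paper's proof is the one-line instruction ``Apply \cref{main:k-fixed} to the class consisting of $\Gamma$ and its strongly induced subgraphs,'' and your argument simply fills in the details this leaves implicit (closure and antichain-boundedness of $\cG_\Gamma$, the case analysis, and the logspace transfer of hardness back to the fixed graph $\Gamma$, which is unproblematic since members of $\cG_\Gamma$ have at most $|V(\Gamma)|$ vertices). The only remark worth adding is that the hardness transfer could be short-circuited: \cref{lower-bound-k-fixed-two-nonadjacent} already gives $\P$-hardness for the single graph $\Gamma$ directly, and $\NL$-hardness comes from plain digraph reachability, so the interreduction is needed only for the upper bounds.
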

\begin{proof}
	Apply \cref{main:k-fixed} to the class consisting of $\Gamma$ and its strongly induced subgraphs.
\end{proof}

\subparagraph{Discussion of results} In the case of multi-pushdown systems,
La~Torre, Napoli, and Parlato~\cite{ic20} show that scope-bounded reachability
belongs to $\PSPACE$, and is $\PSPACE$-hard if either the number of stacks or
the scope bound $k$ is part of the input.  Our results complete the picture in
several ways. If $k$ is part of the input, then $\PSPACE$-hardness even holds
if we have two $\Nn$-valued counters instead of stacks (\cref{main:k-input}).
Moreover, hardness also holds when we have two $\Zn$-valued counters (which
often exhibit lower complexities~\cite{DBLP:conf/rp/HaaseH14}).  Moreover, we
determine the complexity the case that both $k$ and the number $s$ of stacks is
fixed.  

Our results can also be interpreted in terms of vector addition systems with
states (VASS).  In the case of VASS (i.e. unlooped cliques), our results imply
that scope-bounded reachability is $\PSPACE$-complete if either (i)~the number
$d$ of counters or (ii)~the scope-bound $k$ are part of the input (and $d\ge
2$).  The same is true if we have integer
VASS~\cite{DBLP:conf/rp/HaaseH14} (looped cliques).  

Thus, for VASS, scope-bounding reduces the complexity of reachability from at
least non-elementary~\cite{DBLP:conf/stoc/CzerwinskiLLLM19} to $\PSPACE$.
Interestingly, for two counters, the complexity goes up from $\NL$ for
general reachability~\cite{DBLP:conf/lics/EnglertLT16} to $\PSPACE$. For
integer VASS, we go up from $\NP$ for general
reachability~\cite{DBLP:conf/rp/HaaseH14} and for a fixed number of
counters even from $\NL$~\cite{GURARI1981220}, to $\PSPACE$.

Note that we obtain a much more complete picture compared to what is known for
bounded context switching~\cite{DBLP:conf/concur/MeyerMZ18}. There, even the
complexity for many individual graphs is not known.  Moreover, the case of
fixed context bounds has not been studied in the case of bounded context switching.

\section{Block decompositions}\label{sec:block-decompositions}
In this section, we lay the foundation for our decision procedure in \cref{sec:decision-procedure}.  We show that in every scope-bounded run $w$, each context can be decomposed into a bounded number of ``blocks'', which will guarantee that $w$ can be reduced to $\varepsilon$ by way of ``block-wise'' reductions. In our algorithms, this will allow us to abstract from each block (which can have unbounded length) by a finite amount of data.
This is similar to the block decomposition in~\cite{DBLP:conf/concur/MeyerMZ18}. 

\begin{figure}
\begin{center}
\includegraphics[scale=.13]{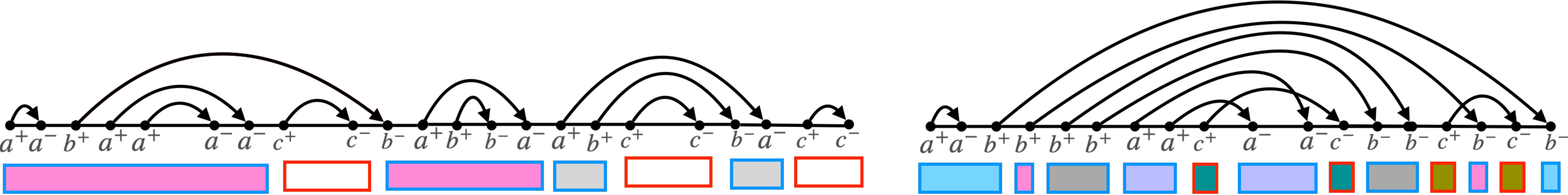}
\end{center}
\caption{Context and block decomposition for a computation over $\overline{\Gamma}_2$ in Figure \ref{valence}. The blue and red lined rectangles are the two contexts. The color filled rectangles represent blocks, with partner  cancelling blocks having the same color filling. }	
\label{decomp}
\end{figure}

Let $w\in X_\Gamma^*$ such that $w\equiv_\Gamma \varepsilon$ with a reduction $\pi$. We call a decomposition $w=w_1\cdots w_m$ a \emph{block decomposition} if it refines the canonical context decomposition\footnote{In other words, each context in $w$ consists of a contiguous subset of the factors $w_1,\ldots,w_m$.} and for each $w_i$, there is a $w_j$ such that
$R_\pi$ relates every position in $w_i$ with a position in $w_i$ itself or in $w_j$.
Here, we do not rule out $i=j$: A block may itself reduce to $\varepsilon$.

\xsubparagraph{Free reductions}
Block decompositions are closely related to free reductions.
Let $w_1, \dots, w_m$ be a sequence of computations in $ X_{\Gamma}^*$. 
A free reduction is a finite sequence of applications of the rewriting rules below to consecutive entries of the sequence so that $w_1, \dots, w_m$ gets transformed into the empty sequence. 

\textbf{(FR1)} $w_i, w_j \mapsto_{\mathit{free}}\varepsilon$ if $w_iw_j\equiv_\Gamma \varepsilon$

\textbf{(FR2)} $w_i, w_j \mapsto_{\mathit{free}} w_j,w_i$ if $w_i I w_j$   

\textbf{(FR3)} $w_i \mapsto_{\mathit{free}} \hat{w}_i$ if $w_i \stackrel{*}{\mapsto_{red}} \hat{w}_i$ using rules \textbf{R1} and \textbf{R2}

\noindent We say that $w_1, \dots, w_m$ is \emph{freely reducible} if it admits a free reduction to the empty sequence. 

As in \cite{DBLP:conf/concur/MeyerMZ18}, we have : 
\begin{restatable}{proposition}{blockDecompositionsFreelyReducible}\label{block-decompositions-freely-reducible}
	If the decomposition $w=w_1\cdots w_m$ refines the context
	decomposition, then it is a block decomposition if and only if the
	sequence $w_1,\ldots,w_m$ is freely reducible.	
\end{restatable}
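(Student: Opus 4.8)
The plan is to connect the position-level matching $R_\pi$ with block-level free-reduction steps, and to prove the two implications separately. Throughout I would use two auxiliary facts about reductions in graph monoids. The first is a \emph{projection lemma}: if a set of positions $S$ is closed under $R_\pi$ (the partner of every $S$-position is again in $S$), then the subword $w|_S$ reduces to $\varepsilon$ \emph{via the restriction of $\pi$}; this follows by induction on the length of $\pi$, since every \textbf{R1}/\textbf{R2} step cancels an $R_\pi$-matched pair that lies entirely inside or entirely outside $S$, and every \textbf{R3} step only reorders. The second is a \emph{crossing lemma}: if $a<b<a'<b'$ with $a\,R_\pi\,a'$ and $b\,R_\pi\,b'$, then the vertices underlying $a$ and $b$ are independent. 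The reason is that at the moment one pair is cancelled, the endpoint of the other pair that lay between them must have been carried out of the interval by an \textbf{R3} step across one of the endpoints; since $a,a'$ (and $b,b'$) carry the same vertex, independence only depends on that vertex, so both possible resolutions force the same independence.

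For the direction \emph{freely reducible $\Rightarrow$ block decomposition}, I would take a free reduction of $w_1,\ldots,w_m$ and expand it into a reduction $\pi$ of the word $w$. Each \textbf{FR3} step is already a sequence of \textbf{R1}/\textbf{R2} steps inside a single block; each \textbf{FR2} step (with $w_i I w_j$) is realised by \textbf{R3} steps sliding the entire factor $w_i$ past $w_j$; and each \textbf{FR1} step (with $w_iw_j\equiv_\Gamma\varepsilon$) is realised by any reduction of the adjacent factor $w_iw_j$ to $\varepsilon$. Since \textbf{FR2} permutes whole factors while \textbf{FR3} and \textbf{FR1} cancel positions internally to one block or to one \textbf{FR1}-pair, each original block is either fully cancelled by \textbf{FR3} (self-partner) or cancelled together with a unique \textbf{FR1}-partner $w_j$. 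Hence in the resulting $R_\pi$ every position of $w_i$ is matched inside $w_i$ or inside this single $w_j$, which is exactly the block condition.

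For the converse I would induct on $m$, building the free reduction from the given $\pi$. If some block self-reduces, then it lies inside one context, so its operations form a dependent set and it reduces to $\varepsilon$ using only \textbf{R1}/\textbf{R2}; I remove it by \textbf{FR3} and recurse, using the projection lemma to see that $\pi$ restricted to the remaining positions still witnesses the block condition. Otherwise I pick a partner pair $(w_i,w_j)$ with $i<j$ minimising $j-i$. Then every block strictly between $w_i$ and $w_j$ must partner outside $[i,j]$, so it crosses the pair $(w_i,w_j)$; applying the crossing lemma to each position of such a block against each position of $w_i$ shows the block is independent of $w_i$. I then use \textbf{FR2} to slide $w_i$ rightward until it is adjacent to $w_j$, apply \textbf{FR1} (legal because $w_iw_j=w|_S\equiv_\Gamma\varepsilon$ by the projection lemma with $S$ the positions of $w_i,w_j$), and recurse on the remaining sequence, whose concatenation is $w|_{\bar S}$ and which again satisfies the block condition via $\pi|_{\bar S}$.

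I expect the main obstacle to be the crossing lemma together with its use to clear the blocks strictly between an innermost partner pair. One must argue carefully that independence of the underlying \emph{vertices} (which is what crossings deliver) upgrades to independence of whole intervening blocks from $w_i$ in the sense required by \textbf{FR2}, while correctly handling the degenerate cases of shared or looped vertices, so that every commutation step I invoke is genuinely licensed.
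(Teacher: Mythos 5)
Your overall strategy mirrors the paper's (a crossing lemma plus induction that picks an innermost partner pair, slides intervening blocks out with \textbf{FR2}, and cancels with \textbf{FR1}), and your ``if'' direction and both auxiliary lemmas are sound. But the converse direction has a genuine gap: you apply \textbf{FR3} only to blocks that \emph{fully} self-reduce, and then claim that every block strictly between the chosen pair $(w_i,w_j)$ is independent of $w_i$ ``by applying the crossing lemma to each position of such a block.'' The crossing lemma only speaks about positions whose $R_\pi$-partner lies \emph{outside} $[i,j]$; positions of an intervening block that cancel \emph{internally} (inside that same block) form pairs that are nested inside, not crossing, the pairs of $(w_i,w_j)$, so the lemma says nothing about their letters. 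Such letters can genuinely be dependent with $w_i$, and then the \textbf{FR2} slide you invoke is not licensed. Concretely, take $\Gamma$ with vertices $x,y,z$ and the single edge $\{x,z\}$ (so $y$ is dependent with both $x$ and $z$), and
\[
w \;=\; z^+\cdot x^+y^+y^-\cdot z^-\cdot x^-,\qquad
w_1=z^+,\; w_2=x^+y^+y^-,\; w_3=z^-,\; w_4=x^-.
\]
These four factors are exactly the canonical contexts, and the reduction that cancels $(y^+,y^-)$, swaps $x^+$ past $z^-$, and cancels $(z^+,z^-)$, $(x^+,x^-)$ makes this a block decomposition with partner pairs $(w_1,w_3)$ and $(w_2,w_4)$. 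No block self-reduces, so your procedure enters the ``otherwise'' branch; but for the pair $(w_1,w_3)$ the intervening block $w_2$ contains $y$-letters dependent with $z$, and for the pair $(w_2,w_4)$ the intervening block $w_3=z^-$ is dependent with those same $y$-letters. In both cases \textbf{FR2} is inapplicable and your induction is stuck, even though the sequence is freely reducible.

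The missing idea is the paper's preprocessing step: before the induction, apply \textbf{FR3} to \emph{every} block to cancel \emph{all} of its internally $R_\pi$-matched pairs (not just blocks that vanish entirely), obtaining words $\hat{w}_1,\ldots,\hat{w}_m$ in which every remaining position cancels with a position of its partner block. After this normalization, every surviving position of an intervening block does cross the pairs of $(w_i,w_j)$, so \cref{R-vs-I} applies to all of them and the \textbf{FR2} slides are justified; on the example above, this first rewrites $w_2$ to $x^+$, after which your argument goes through verbatim. (One also needs, as the paper implicitly does, that internal pairs inside a context can indeed be cancelled using only \textbf{R1}/\textbf{R2} --- this holds because within a dependent set the only independent letter pairs are $v^+,v^-$ for looped $v$, and those can be cancelled directly by \textbf{R2} rather than swapped.) With that preprocessing folded in, your induction, your use of the projection lemma to justify \textbf{FR1}, and your minimal-pair choice are all correct and essentially coincide with the paper's proof.
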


The main result of this section is that in a scope-bounded run, there exists a
block decomposition with a bounded number of factors in each context.
\begin{restatable}{theorem}{blockDecompositionFewBlocks}\label{block-decomposition-few-blocks}
	Let $w\in X_\Gamma^*$ with $\mathsf{sc}(w)\le k$. Then, there exists a block
	decomposition of $w$ such that each context splits into at most $2k$ blocks.
\end{restatable}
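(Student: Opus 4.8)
The plan is to fix, once and for all, a greedy reduction $\pi\colon w \stackrel{*}{\mapsto_{red}} \varepsilon$ that witnesses $\mathsf{sc}(w)\le k$, together with its cancellation relation $R_\pi$. Recall that the canonical context decomposition $w=c_1\cdots c_n$ is fixed, and that a block decomposition is any refinement of it in which every factor's positions are $R_\pi$-matched either internally or with a single other factor. Thus it suffices to cut each context $c$ into at most $2k$ contiguous factors such that the resulting global refinement has this matching property; in particular I will not need \cref{block-decompositions-freely-reducible}, since I verify the block-to-block pairing directly from $R_\pi$.

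The first key observation is that each rule \textbf{R1} and \textbf{R2} acts on operations of a \emph{single} vertex, so whenever $w[i]\mathrel{R_\pi} w[j]$, the contexts of $w[i]$ and $w[j]$ lie in the same weak dependence class. I therefore work inside one class $W$ at a time. Ordering the contexts of $W$ as $d_1,\dots,d_N$ in order of appearance and writing $c=d_t$, the scope condition (interaction distance at most $k$ whenever two positions are $R_\pi$-related) implies that every position of $c$ that cancels \emph{outside} $c$ has its partner in some $d_{t'}$ with $0<|t-t'|\le k$. Hence the external partners of $c$ are distributed among at most $2k$ contexts, namely $d_{t-k},\dots,d_{t-1}$ and $d_{t+1},\dots,d_{t+k}$. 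This is exactly the window of size $2k$ that should produce the claimed bound.

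For the position $p$ in $c$, let $\beta(p)$ be the index of the context containing its $R_\pi$-partner. The plan is to define the blocks of $c$ as the maximal contiguous factors on which $\beta$ is constant, and to show there are at most $2k$ of them: one per external partner context, with the self-cancelling material ($\beta=t$) absorbed. Concretely, I would argue (i)~that the positions of $c$ cancelling with a fixed partner context $d_{t'}$ form a \emph{contiguous} factor of $c$, so that the $\le 2k$ external partner contexts induce $\le 2k$ blocks; (ii)~that, by symmetry, the matching positions in $d_{t'}$ likewise form a contiguous factor, so the two factors constitute a genuine matched block pair; and (iii)~that the self-cancelling positions, which in a greedy reduction are removed first by internally well-nested applications of \textbf{R1}/\textbf{R2}, can be kept inside the block in which they are nested — choosing the block boundaries so as never to split a self-cancelling pair. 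Then each block cancels either internally or with a single partner block, giving a valid block decomposition with at most $2k$ factors per context.

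The main obstacle is step~(i), the contiguity of $\beta$, which is precisely the point where one must go beyond multi-pushdown intuition. For $\multipushdown{r}{s}$ each weak dependence class is a single LIFO stack, the matching $R_\pi$ is non-crossing, and contiguity follows from a layering argument; but a general weak dependence class may contain independent (commuting) operations in distinct contexts — e.g.\ $a$ and $b$ in $\Gamma_2$ — so $R_\pi$ need \emph{not} be non-crossing within a class, and a naive grouping-by-partner could in principle alternate between two partner contexts and exceed $2k$. The heart of the proof is therefore to exploit the rigidity inside a context (a context is an anti-clique, so no internal commutation via \textbf{R3} is possible) together with the greedy discipline to rule out such alternation: one shows that the at most $k$ partner contexts on each side of $c$ are consumed in a nested, monotone order forced by the reduction, so that equal $\beta$-values occur contiguously. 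Establishing this monotonicity robustly — and discharging the secondary bookkeeping of absorbing self-cancelling factors without breaking contiguity or the block pairing — is the technically delicate core of the argument, after which the bound $2k$ follows immediately from the window size.
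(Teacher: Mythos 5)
Your overall strategy is exactly the paper's: fix a greedy reduction $\pi$ witnessing $\mathsf{sc}(w)\le k$, cut each context into maximal factors whose externally cancelling positions all go to a single partner context (absorbing internally cancelling positions), note that by the scope bound the partner contexts of a context $c$ all lie within interaction distance $k$, hence in a window of at most $k$ contexts on each side of $c$ inside its weak dependence class, and conclude the bound $2k$ once each partner context accounts for at most one factor. This is precisely the paper's ``decomposition induced by $\pi$'' and its counting argument, and you are right that \cref{block-decompositions-freely-reducible} is not needed for this.

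The problem is that your step~(i) — contiguity of $\beta$, equivalently the paper's \cref{lem2} stating that at most one factor of a context is $R_\pi$-related to any fixed partner context — is the entire technical content of the theorem, and you do not prove it: you only assert that ``one shows'' a nested, monotone consumption order forced by the reduction. That assertion is exactly what is at stake, since (as you yourself observe) $R_\pi$ can cross within a weak dependence class. The paper closes this gap with two concrete ingredients that are absent from your proposal. First, a crossing lemma (\cref{R-vs-I}): if $i\,R_\pi\,j$ and $k\,R_\pi\,\ell$ with $i<k<j<\ell$, then $w[k]\,I\,w[j]$, because otherwise these four letters could never be moved past each other and the reduction could not terminate. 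Second, an inner/outer position argument: if two distinct factors $f_1,f_2$ of a context both cancelled with the same partner context, then every position between them cancels either with a context strictly between the two (``inner'') or outside them (``outer''); the crossing lemma forces inner positions to commute with the letters of $f_2$ and outer positions with those of $f_1$, and since inside a context the only commuting pairs are $v^+,v^-$ of a single looped vertex, irreducibility of the greedily reduced context forces the position just right of $f_1$ to be inner and the one just left of $f_2$ to be outer; hence some inner position is adjacent to an outer one, these two must commute, so they form a cancellable pair — contradicting irreducibility. Note also that your parenthetical justification, ``a context is an anti-clique, so no internal commutation via \textbf{R3} is possible,'' is false in general: a dependent set may contain looped vertices, and for a looped vertex $v$ the letters $v^+$ and $v^-$ do commute inside a context. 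This is not a cosmetic slip; it is exactly the case where greediness is indispensable (forced internal commutation is not an outright contradiction, only a contradiction with irreducibility), and any ``monotonicity'' argument you attempt would have to pass through it.
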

Let us sketch the proof. The \emph{block decomposition} is obtained
by scanning each context $c$ from left to right. As long as there is another
context $c'$ such that all symbols either cancel inside $c$ or with a symbol in
$c'$, we add symbols to the current block.  When we encounter a symbol that
cancels with a position outside of $c$ and $c'$, we start a new block.
This clearly yields a block decomposition (see Figure~\ref{decomp} for an
example) and with arguments similar to \cite{DBLP:conf/concur/MeyerMZ18}, one
can show that it results in at most $2k$ blocks per context. 

\section{Decision procedure}\label{sec:decision-procedure}

In this section, we present the algorithms for the upper bounds of \cref{main:k-input,main:k-fixed}.

%
%
%
%
%
%
%
%
%
%
%
%
%
%
%
%

%
%
%
%
%
%
%
%
%

\xsubparagraph{Block abstractions}\label{babs}
The algorithm for bounded context switching in
\cite{DBLP:conf/concur/MeyerMZ18} abstracts each block by a non-deterministic
automaton. This approach uses polynomial space per block, which would not be
a problem for our $\PSPACE$ algorithm. However, for our $\P$ and $\NL$
algorithms, this would require too much space. Therefore, we begin with a new
notion of ``block abstraction'', which is more space efficient.

Let $w=w_1\cdots w_m$ be a block decomposition for a run of a valence
system $\cA$.  Then it follows from \cref{block-decompositions-freely-reducible} that there are words $\hat{w}_1,\ldots,\hat{w}_m$ such that $w_i\stackrel{*}{\mapsto_{red}} \hat{w}_i$ for each $i$ such that the sequence $\hat{w}_1,\ldots,\hat{w}_m$
can be reduced to the empty sequence using the rewriting rules \textbf{FR1} and \textbf{FR2}.
For each $i$, we store (i)~the states occupied at
the beginning and end of $w_i$, (ii)~its first operation $f\in X_\Gamma$ in $w_i$, (iii)~a non-deterministically chosen operation $o\in X_\Gamma$ occurring in $w_i$, and (iv)~sets $U_i^\mini$ and $U_i^\maxi$,
such that every maximal vertex occurring in $w_i$ is contained in $U_i^\maxi$ and every minimal vertex occurring in $\hat{w}_i$ is contained in $U_i^\mini$.
In this case, we will say
that the block abstraction ``represents'' the word $\hat{w}_i$. Thus, a \emph{block
abstration} is a tuple $N=(q_1,q_2,f,o,U^\mini,U^\maxi)$, where $q_1,q_2$ are
states in $\cA$, $f,o\in X_\Gamma$ are symbols, and $U^\mini,U^\maxi\subseteq V$ are neighbor antichains.
Note that for every set $B\subseteq V$, the sets $\mini B$ and $\maxi B$
are neighbor antichains. 
Formally, we say that $N=(q_1,q_2,f,o,U^\mini,U^\maxi)$ \emph{represents} $\hat{u}\in X_\Gamma^*$
if there is a word $u\in X_\Gamma^*$ such that
(i)~$u\stackrel{*}{\mapsto_{red}} \hat{u}$, (ii)~$u$ is read on some path from $q_1$ to $q_2$, (iii)~$u$ begins with $f$, (iv)~$o$ occurs in $u$, (iv)~the set of vertices 
$B$ occurring in $u$ is a dependent set, (v)~we have $\maxi B\subseteq U^\maxi$,  and
(vi)~$\mini \hat{B}\subseteq U^\mini$, where $\hat{B}$ is the set of vertices occurring in $\hat{u}$. Then, $L(N)$ denotes the set of all words represented by
$N$. We say that two block abstractions $N=(q_1,q_2,f,o,U^\mini,U^\maxi)$ and $N'=(q'_1,q'_2,f',o',U'^\mini,U'^\maxi)$
are \emph{dependent} if $U^\maxi\cup U'^\maxi$ is a dependent set.

\xsubparagraph{Context abstractions} Similarly, we will also
need to abstract contexts.  For this, we need to abstract each of its blocks.
In addition, we need to store the whole context's first symbol ($f$) and some
non-deterministically chosen other symbol ($o$).  These additional symbols will be
used to verify that we correctly guessed the context decomposition of $w$.
Thus, a \emph{context abstraction} is a tuple $\mathcal{C}=(N_1,\ldots,N_{2k},f,o)$, where $N_1,\ldots,N_{2k}$ are pairwise dependent block abstractions, and $f$ and $o$ are symbols.
We say that a context abstraction $\mathcal{C}=(N_1,\ldots,N_{2k},f,o)$ is \emph{independent} with a context abstraction $\mathcal{C'}=(N'_1,\ldots,N'_{2k},f',o')$ if (i)~$f' I o$ and (ii)~for some $i\in\{1,\ldots,2k\}$, the block abstraction $N_{i}=(q^i_1,q^i_2,f_i,o_i,U_i^\mini,U_i^\maxi)$ satisfies $o=o_i$. Intuitively, this means if we have a word represented by $\mathcal{C'}$ and then append a word represented by $\mathcal{C}$, then these words will be the contexts in the canonical context decomposition.

In our algorithms, we will need to check whether a block decomposition admits a
free reduction. This means, we need to check whether the words represented by
block abstractions can cancel (to apply rule FR1) or commute (to apply FR2).
Let us see how to do this. We say that the block abstractions
$N=(q_1,q_2,f,o,U^\mini,U^\maxi)$ and $N'=(q'_1,q'_2,f',o',U'^\mini,U'^\maxi)$
\emph{commute} if $U^\mini I U'^\mini$. Note that if $N$ and $N'$ commute, then
$uIu'$ for every $u\in L(N)$ and $u'\in L(N')$.
We need an analogous condition for cancellation.  We say that $N$ and $N'$
\emph{cancel} if there are words $u\in L(N)$ and $u'\in L(N')$ such that
$uu'\equiv_\Gamma \varepsilon$.
This allows us to define an analogue of free reductions on block abstractions. 
\begin{definition}
	A \emph{free reduction} on a sequence $N_1,\ldots,N_m$ of block abstractions is a sequence of operations

	\noindent	\textbf{(FRA1)} $N_i, N_j \rightarrow_{\mathit{free}} \varepsilon$, if $N_i$ and $N_j$ cancel \\
	\textbf{(FRA2)} $N_i, N_j \rightarrow_{\mathit{free}} N_j, N_i$, if $N_i$ and $N_j$ commute.
	\label{def:fr}
\end{definition}

Together with \cref{mini}, the following \lcnamecref{check-blocks-cancellable} allows us to verify the steps in a free reduction on block abstractions.
\begin{lemma}\label{check-blocks-cancellable}
  Given $\Gamma$, a valence system over $\Gamma$, and
  block abstractions $N_1$ and $N_2$, one can decide in
 $\mathsf{P}$ whether $N_1$ and $N_2$ cancel. If
  $\Gamma$ is a clique, this can be decided in $\NL$.
\end{lemma}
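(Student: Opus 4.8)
The plan is to reduce the cancellation test to a nonemptiness (reachability) test for a suitable product automaton, exploiting the fact that the reduction rules preserve the monoid element. Concretely, if $v_i \stackrel{*}{\mapsto_{red}} u_i$ then $u_i \equiv_\Gamma v_i$, so for any $u_1\in L(N_1)$ and $u_2\in L(N_2)$ we have $u_1u_2\equiv_\Gamma\varepsilon$ iff the \emph{originally read} words satisfy $v_1v_2\equiv_\Gamma\varepsilon$. Hence $N_1$ and $N_2$ cancel iff there exist words $v_1,v_2$ read on paths $q_1\to q_2$ and $q_1'\to q_2'$ of $\cA$, respecting the constraints encoded in $N_1,N_2$, with $v_1v_2\equiv_\Gamma\varepsilon$. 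This lets me work with the words actually read by $\cA$: the entries $f,o$ of a block abstraction become a regular constraint (fixed first letter, plus one letter that must occur), while $U^\maxi$ and $U^\mini$ restrict the vertex set of $v_i$ and of its irreducible form, respectively.

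The combinatorial heart is a characterization of when two \emph{context} words cancel. Since each $v_i$ is a context, its vertex set is dependent (an anti-clique), so within $v_i$ no two operations on distinct vertices commute, and only $o^+o^-$ (and, for looped $o$, $o^-o^+$) can cancel. Each context therefore behaves like a word over a pushdown/counter alphabet, and $v_1v_2\equiv_\Gamma\varepsilon$ amounts to requiring that the letters of $v_2$ cancel, in reverse order, against the letters of $v_1$ that survive internal reduction. I would formalise this by a pushdown automaton that reads $v_1$ from $q_1$ to $q_2$, maintaining on its stack the irreducible form of the prefix read so far, then reads $v_2$ from $q_1'$ to $q_2'$, matching and popping an operation for each letter, accepting exactly when the stack is finally empty. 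Whether two operations match, and whether operations on independent vertices may be commuted past the context boundary during a reduction, are independence/cancellation conditions on vertices; by \cref{mini} these are decided by the stored sets $\mini(\cdot)$, so the product remains faithful to the abstractions, and the surviving vertices read off the stack let the construction simultaneously enforce the $\mini$/$\maxi$ constraints.

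The existence of such $v_1,v_2$ then coincides with nonemptiness of this product pushdown automaton, whose size is polynomial in $\cA$ and $\Gamma$. Emptiness of a pushdown automaton (equivalently, of a context-free grammar) is decidable in $\P$, which yields the general bound. When $\Gamma$ is a clique, every dependent set has at most one vertex, so each context is a word over a single symbol $v$; the pushdown degenerates to a single counter (an $\Nn$-counter if $v$ is unlooped, a $\Zn$-counter if $v$ is looped), and cancellation reduces to matching the number of surviving $v^+$ produced along the first path against the number of surviving $v^-$ along the second. Existence of paths realising matched counts is precisely reachability in a polynomial-size one-counter automaton with a shared counter, which is $\NL$-complete, giving the $\NL$ bound.

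The main obstacle is the combinatorial characterization of the second step: I must prove that cancellation of the two contexts is \emph{exactly} captured by the product automaton, carefully handling (i) the interaction between cancellations internal to each context and cancellations across the boundary, and (ii) the cross-context commutations of operations on independent vertices that a reduction is allowed to use. The delicate point is to show that the bounded information retained in $N_1$ and $N_2$ — the neighbor antichains $\mini(\cdot)$ and $\maxi(\cdot)$ together with $f$ and $o$ — suffices to enforce every dependence and cancellation condition that the characterization requires (this is exactly where \cref{mini} is essential), while keeping the construction polynomial so that pushdown emptiness ($\P$) and one-counter reachability ($\NL$ for cliques) deliver the stated bounds.
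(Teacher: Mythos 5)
Your proposal is correct and follows essentially the same route as the paper's proof: there, too, cancellation is reduced to emptiness of a polynomial-size pushdown automaton that reads $w_u\#w_v$, keeping the irreducible residue of $w_u$ on its stack and cancelling the surviving letters of $w_v$ against it, with the $f$/$o$ requirements enforced by intersecting with a regular language and the $U^\mini$/$U^\maxi$ constraints enforced by restricting to the alphabets derived from $\{x\in V \mid \exists y\in U^\maxi\colon x\preceq y\}$ and its $U^\mini$-analogue. The clique case is handled identically (the stack needs only one symbol, i.e.\ a one-counter automaton, hence $\NL$); the one point to tighten in your sketch is that in the second phase the automaton must also permit cancellations internal to $v_2$ --- the paper does this via stack symbols tagged with the block they belong to --- rather than matching and popping on every letter.
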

Given block abstractions $N_1$ and $N_2$, (i) first perform saturation \cite{DBLP:conf/concur/MeyerMZ18}, 
obtaining irreducible blocks. Saturation can be implemented using reachability in a one counter automaton, known to be NL-complete \cite{DBLP:conf/time/DemriG07}, (ii) second, construct a pushdown automaton that is non-empty if and only if the saturated $N_1$ and
$N_2$ cancel. Emptiness of pushdown automata is decidable in $\P$. 
If $\Gamma$ is a clique, then the pushdown automaton uses only a
single stack symbol. Hence, we only need a one-counter automaton,
for which emptiness is decidable in $\NL$~\cite{DBLP:conf/time/DemriG07}.
This yields the two upper bounds claimed in \cref{check-blocks-cancellable}.

\xsubparagraph{Reduction to a Reachability Graph} In all our algorithms, we
reduce $\BSREACH$ for  a  valence system $\cA$ over $\Gamma$ to reachability in
a finite graph $\mathcal{R}$. For the $\PSPACE$ algorithm, we argue that each
node of $\mathcal{R}$ requires polynomially many bits and the edge relation can
be decided in $\PSPACE$. For the $\P$ and the $\NL$ algorithm, we argue that
$\mathcal{R}$ is polynomial in size. Moreover, for the $\P$ algorithm, we
compute $\mathcal{R}$ in polynomial time.  For the $\NL$ algorithm, we show
that the edge relation of $\mathcal{R}$ can be decided in $\NL$.

The vertices
 of $\mathcal{R}$ maintain $k$ context abstractions (hence $2k^2$ block abstractions) per weak dependence class. 
 Let $d \leq |V|$ 
be the number of weak dependence classes.   
The idea behind this choice of vertices is to build up a 
 computation $w$ from left to right,  
  by guessing  $k$ context abstractions $\left(\mathcal{C}^{\gamma}_1, \mathcal{C}^{\gamma}_2,\cdots,\mathcal{C}^{\gamma}_k\right)$
per equivalence class $\gamma$ which forms the ``current window'' of $w$. The initial vertex consists of 
$k$ tuples $(\mathcal{E}, \dots, \mathcal{E})$ per weak dependence equivalence class, where 
$\mathcal{E}$ is a placeholder representing a cancelled block or an empty block.  An edge is added  from a vertex $v_1$ 
to a vertex $v_2$ in the graph when the left most context in $v_1$ corresponding to an equivalence class  
$\gamma$ cancels out completely, and $v_2$ is obtained by appending a fresh context abstraction $\mathcal{C}_y^{\gamma}$ to $v_1$. 
Indeed if $w$ is $k$ scope bounded, then the blocks of the first context abstraction
 $\mathcal{C}^\gamma_1$ must cancel out with blocks from the remaining context abstractions $\mathcal{C}^\gamma_2,\cdots,\mathcal{C}^\gamma_k$ using the free reduction 
 rules discussed above. We can guess an equivalence class $\gamma$ whose context abstraction $\mathcal{C}^\gamma_1$ cancels out, and  
 extend $w$ by guessing the next context abstraction $\mathcal{C}_y^\gamma$ in the same equivalence class. An edge between two vertices in $\mathcal{R}$ represents an extension 
 of $w$ where a new context of an appropriate equivalence class is added, after the leftmost  context has cancelled out using 
 some free reduction rules.

 For a weak dependence class, we refer to the tuple of $k$ contexts of interest as a \emph{configuration}. Thus, a vertex in $\mathcal{R}$ consists of $d$ configurations. As discussed earlier in section \ref{babs}, we 
 represent the $2k^2$ blocks in each configuration using block abstractions.

\begin{definition}
\label{configuration-of-classes}
Given a weak dependence class $\gamma \in [~]_{\sim_{W}}$, 	a \emph{configuration of $\gamma$}  is a $k$-tuple of the form
	$s_{\gamma} = (\mathcal{C}^{\gamma}_1, \mathcal{C}^{\gamma}_2,\cdots,\mathcal{C}^{\gamma}_k)$, where
	each $\mathcal{C}^\gamma_c$ for $1\le c\le k$ is a context abstraction.
\end{definition}
As mentioned above, in slight abuse of terminology, in case of cancellation in free reductions, we also allow $\mathcal{E}$ as a placeholder for cancelled contexts.

Intuitively, the configuration tracks the remaining non-cancelled blocks of the last $k$ contexts of this weak dependence class along with their relative positions in their contexts. 

\begin{definition}
	A vertex in the graph $\mathcal{R}$ has the form $(s_{\gamma_1}, \dots, s_{\gamma_d} | i,q)$ where (i)~$\gamma_1, \dots, \gamma_d$ are the distinct equivalence classes in $[~]_{\sim_W}$, (ii)~each $s_{\gamma_j}$ is a configuration, (iii)~$i \in \{1, \cdots, d\}$ is the current weak dependence class we are on, and (iv)~$q$ is the last state occurring in $s_{\gamma_i}$.
\end{definition}
Here, if $s_{\gamma_i}$ consists just of $\mathcal{E}$, then the last condition is satisfied automatically.

\begin{definition}
	\label{one-step-reachable}
	For $\gamma \in [~]_{\sim_W}$, a configuration $s'_{\gamma}$ is \emph{one-step reachable} from a configuration $s_{\gamma}=(\mathcal{C}_1^{\gamma},  \dots, \mathcal{C}_k^{\gamma})$  iff there exists some context abstraction $\mathcal{C}$ and a sequence of free reduction operations on the sequence of block abstractions 
	$$ N^{\gamma 1}_1, \ldots, N^{\gamma 1}_{2k}, N^{\gamma 2}_1, \ldots, N^{\gamma 2}_{2k}, \ldots 
	N^{\gamma k}_1, \ldots, N^{\gamma k}_{2k}, N_1, \ldots, N_{2k} $$
	resulting in the new sequence (placing $\mathcal{E}$ in a position if the block was cancelled due to the free reduction rules;  otherwise we keep the same block abstraction) 
	$$ N'^{\gamma 1}_1, \ldots, N'^{\gamma 1}_{2k}, N'^{\gamma 2}_1, \ldots, N'^{\gamma 2}_{2k}, \ldots N'^{\gamma k}_1, \ldots, N'^{\gamma k}_{2k}, N'^{\gamma (k+1)}_1, \ldots, N'^{\gamma (k+1)}_{2k} $$
	such that $N'^{\gamma 1}_\ell = \mathcal{E}$, for all $\ell \in \{1, \ldots, 2k\}$, and $ \left({\mathcal{C'}^{\gamma}_2}, \mathcal{C'}^{\gamma}_3,\cdots,\mathcal{C'}^{\gamma}_k, \mathcal{C'}^{\gamma}_{k+1}\right) = s'_{\gamma}$,
	where $\mathcal{C'}^{\gamma}_\ell = (N'^{\gamma \ell}_1,  N'^{\gamma \ell}_2, \cdots,N'^{\gamma \ell}_{2k}, f^{\gamma \ell}, o^{\gamma \ell})$ for $\ell\in\{1,\ldots,k+1\}$, and $N_1,\ldots,N_{2k}$ are the block abstractions in $\mathcal{C}$. In this case, we write $s_\gamma\xrightarrow{\mathcal{C}}s'_\gamma$. 
\end{definition}

In short, we can go from $s_\gamma$ to $s'_\gamma$ in one step if we can add some context abstraction to $s_\gamma$ so that using free reduction steps, we can reach $s'_\gamma$ along with clearing the oldest context.

We are now ready to define the edge relation of $\mathcal{R}$. 
\begin{definition}
\label{reachabilit-graph-edge}
There is an edge in $\mathcal{R}$ from a vertex $v=(s_{\gamma_1}, \dots, s_{\gamma_d} | j,q)$ to a vertex  $v'$  
iff there is some $i\in\{1,\ldots,d\}$ and a configuration $s'_{\gamma_i}$ such that (i)~$s_{\gamma_i}\xrightarrow{\mathcal{C'}}s'_{\gamma_i}$ for some context abstraction $\mathcal{C'}$ such that $\mathcal{C'}$ is independent with the last context abstraction $\mathcal{C}$ of $s_{\gamma_j}$ and (ii)~$q$ is the first state in $\mathcal{C'}$ and (iii)~$v'=(s_{\gamma_1}, \dots, s_{\gamma_{i-1}},s'_{\gamma_i},s_{\gamma_{i+1}}, \dots, s_{\gamma_d} | i,q')$, where $q'$ is the last state of $\mathcal{C'}$.
\end{definition}

 Since a context is a maximal dependent sequence, this check suffices to
 guarantee independence of words represented by $\mathcal{C'}$ and $\mathcal{C}$.

As mentioned above, our algorithm reduces scope-bounded reachability to reachability between two nodes in $\mathcal{R}$. Details can be found in the full version.

\xsubparagraph{Complexity}
We turn to the upper bounds in \cref{main:k-input,main:k-fixed}.
Asymptotically, a block abstraction requires $\log|Q| + 2t\cdot \log|V|$ bits,
where $t$ is an upper bound on the size of neighbor antichains.  Per context,
we store $2k$ block abstractions and two symbols. Let $d$ be the number of weak
dependence classes. In a node of $\mathcal{R}$, we store $k$ contexts per weak
dependence class, a number in $\{1,\ldots,d\}$, and a state.    Hence, asymptotically, we
need $M=dk^2(\log |Q| + t\cdot \log|V|)+\log d+\log|Q|$ bits per node of $\mathcal{R}$.
To simulate the free reduction, we only need a constant multiple of this.
We can thus decide reachability in $\mathcal{R}$ in $\PSPACE$.
\begin{proposition}\label{upper-bound-k-input}
	$\BSREACH$ is in $\PSPACE$.
\end{proposition}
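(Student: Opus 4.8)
The plan is to reduce $\BSREACH$ to reachability between two designated nodes of the graph $\mathcal{R}$ constructed above, and then to argue that this reachability is decidable in $\PSPACE$ by bounding the size of each node and the cost of the edge relation. The correctness of the reduction—that $\cA$ has a scope-bounded accepting run if and only if $\mathcal{R}$ has a path from the initial node to an accepting node—rests on two ingredients already in place: \cref{block-decomposition-few-blocks}, which ensures that in a $k$-scoped run each context splits into at most $2k$ blocks, so that $2k$ block abstractions per context suffice; and \cref{block-decompositions-freely-reducible}, which lets us replace the global reduction witnessing $w\equiv_\Gamma\varepsilon$ by a free reduction on the block decomposition. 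Tracking $k$ context abstractions per weak dependence class then captures exactly the ``current window'' of blocks that may still cancel, and the edges of $\mathcal{R}$ (\cref{reachabilit-graph-edge}) simulate appending a fresh context and clearing the oldest one by free-reduction steps on abstractions.

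First I would bound the size of a node. A block abstraction $(q_1,q_2,f,o,U^\mini,U^\maxi)$ stores two states, two symbols, and two neighbor antichains; since a neighbor antichain is a subset of $V$, it uses $O(\log|Q| + |V|\log|V|)$ bits, using the trivial bound $t\le|V|$. A context carries $2k$ such abstractions and two symbols, and a node carries $k$ contexts for each of the $d\le|V|$ weak dependence classes, plus an index in $\{1,\dots,d\}$ and a state. Hence each node occupies $M = O(dk^2(\log|Q| + |V|\log|V|) + \log d + \log|Q|)$ bits, which is polynomial in the input.

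Next I would show that the edge relation is decidable in $\PSPACE$. By \cref{reachabilit-graph-edge}, testing an edge from $v$ to $v'$ amounts to guessing a class $i$ and a fresh context abstraction $\mathcal{C'}$, checking independence of $\mathcal{C'}$ with the last context abstraction of $s_{\gamma_j}$ (a syntactic check on the stored symbols), checking the state-consistency conditions, and verifying one-step reachability $s_{\gamma_i}\xrightarrow{\mathcal{C'}}s'_{\gamma_i}$ (\cref{one-step-reachable}). The last check requires exhibiting a free reduction on a sequence of $O(k^2)$ block abstractions that cancels the oldest context and yields $s'_{\gamma_i}$; I would guess this reduction step by step. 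Each commute step is verified via \cref{mini} by testing $U^\mini I U'^\mini$, and each cancel step is decided in $\P$ by \cref{check-blocks-cancellable}. Since the number of free-reduction steps is polynomially bounded and each is verifiable in $\P$, the whole edge check runs nondeterministically in polynomial space.

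Finally, with polynomial-size nodes and an edge relation that is checkable in nondeterministic polynomial space, I would decide reachability in $\mathcal{R}$ by a nondeterministic search: start from the initial node whose configurations are all set to the placeholder $\mathcal{E}$, repeatedly guess a successor and verify the connecting edge, and accept upon reaching a node that records the final state $q_{\mathit{fin}}$ and in which all remaining blocks have cancelled. This uses only polynomial space, and since nondeterministic polynomial space equals $\PSPACE$ by Savitch's theorem, the claim follows. The main obstacle is not the complexity accounting—which is routine once the data structures are fixed—but establishing the correctness of the reduction to $\mathcal{R}$: one must prove soundness (every path yields a genuine $k$-scoped accepting run, reconstructing actual computations from the abstractions so that the guessed free reduction on abstractions lifts to a bona fide free reduction on blocks in the sense of \cref{block-decompositions-freely-reducible}) and completeness (every $k$-scoped run induces such a path, using \cref{block-decomposition-few-blocks} to guarantee that the window size $k$ and the bound of $2k$ blocks per context are never exceeded).
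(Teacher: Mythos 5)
Your proof is correct and follows essentially the same route as the paper: reduce to reachability in the graph $\mathcal{R}$, bound each node by polynomially many bits ($M=dk^2(\log|Q|+t\log|V|)+\log d+\log|Q|$, polynomial since $t\le|V|$ and $d\le|V|$), verify edges by simulating free reductions on block abstractions using \cref{mini} and \cref{check-blocks-cancellable}, and conclude via a nondeterministic polynomial-space search over $\mathcal{R}$. The only cosmetic differences are your explicit appeal to Savitch's theorem and your explicit bound on the number of free-reduction steps, both of which the paper leaves implicit (for the space bound it suffices that each intermediate sequence of block abstractions fits in a constant multiple of $M$ bits).
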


We now look at the upper bounds for the first and second cases of Theorem \ref{main:k-fixed}. 

\begin{proposition}\label{upper-bound-k-fixed-cliques-bounded-cliques}
  Let $\cG$ be a class of graphs that is closed under strongly induced
  subgraphs and neighbor antichain bounded.  If $\cG$ consists of cliques of bounded size, then for
  each $k\ge 1$, the problem $\BSREACH_k(\cG)$ belongs to $\NL$.
\end{proposition}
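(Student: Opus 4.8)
The plan is to show that when $\cG$ consists of cliques of bounded size, the reachability graph $\mathcal{R}$ constructed in \cref{sec:decision-procedure} has polynomial size, and its edge relation is decidable in $\NL$; then a standard $\NL$ reachability query on $\mathcal{R}$ finishes the argument. First I would bound the size of each node. A node stores $k$ context abstractions per weak dependence class, and each context abstraction holds $2k$ block abstractions plus two symbols. Since $\Gamma$ is a clique of bounded size, the number $|V|$ of vertices is bounded by a constant, and hence the number $d$ of weak dependence classes is bounded, $|Q|$ is the only unbounded quantity, and $k$ is fixed. Each block abstraction $N=(q_1,q_2,f,o,U^\mini,U^\maxi)$ requires $O(\log|Q|)$ bits for the two states and $O(1)$ bits for the symbols and the neighbor antichains (whose size is bounded since $\cG$ is neighbor antichain bounded and $|V|$ is constant). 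Therefore each node of $\mathcal{R}$ can be described with $O(\log|Q|)$ bits, so $\mathcal{R}$ has polynomially many nodes.

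Next I would verify that the edge relation of $\mathcal{R}$ is decidable in $\NL$. By \cref{reachabilit-graph-edge}, checking an edge from $v$ to $v'$ amounts to guessing an equivalence class $i$, a fresh context abstraction $\mathcal{C'}$, and a sequence of free-reduction operations (\cref{def:fr}) on the block abstractions witnessing $s_{\gamma_i}\xrightarrow{\mathcal{C'}}s'_{\gamma_i}$, together with the independence and state-consistency conditions. Since the number of block abstractions involved is $O(k^2)$, a constant, the free reduction has constantly many steps, each of which is either a commute step (checked by testing $U^\mini I U'^\mini$, an $\NL$ check via \cref{mini} since the antichains have constant size) or a cancel step. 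Here the key ingredient is \cref{check-blocks-cancellable}: for a clique, whether two block abstractions cancel is decidable in $\NL$. The remaining checks (independence via \cref{reachabilit-graph-edge}, matching of first/last states, the placeholder bookkeeping with $\mathcal{E}$) are all local tests on constantly many constant-or-logarithmic-sized objects, hence in $\NL$. Thus the whole edge relation is an $\NL$ predicate.

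Finally, since $\mathcal{R}$ has polynomially many nodes and its edges are $\NL$-decidable, reachability of the target node from the initial node in $\mathcal{R}$ can be decided in $\NL$: one guesses the path through $\mathcal{R}$ node-by-node, at each step storing only the current node ($O(\log|Q|)$ bits) and verifying the edge to the next guessed node using the $\NL$ edge test as a subroutine. Composing $\NL$ with the $\NL$ edge oracle stays in $\NL$ (as $\NL$ is closed under logspace-Turing reductions, or directly because the edge relation itself is computed in logspace with nondeterminism while reusing space). By the correctness of the reduction (\cref{sec:decision-procedure}), this decides $\BSREACH_k(\cG)$.

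The main obstacle I expect is not the reachability query itself but ensuring that every per-node and per-edge check genuinely stays in $\NL$ rather than merely in $\P$. In particular, the cancellation test of \cref{check-blocks-cancellable} is only in $\NL$ under the hypothesis that $\Gamma$ is a clique, where the associated pushdown automaton collapses to a one-counter automaton; so the argument crucially relies on the clique assumption to invoke the $\NL$ branch of \cref{check-blocks-cancellable} and to keep $d$ and the antichain sizes constant. I would need to state carefully that all the quantities except $|Q|$ are bounded by constants (using both that $k$ is fixed and that cliques in $\cG$ have bounded size) so that the node encoding is logarithmic, and then assemble the $O(k^2)$-step free-reduction check as a composition of constantly many $\NL$ subroutines, appealing to the closure of $\NL$ under such bounded composition.
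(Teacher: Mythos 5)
Your proposal is correct and follows essentially the same route as the paper: both arguments observe that $d$, $t$, and $k$ are bounded (so each node of $\mathcal{R}$ needs only $O(\log|Q|)$ bits), simulate the free-reduction steps in logarithmic space using the clique branch of \cref{check-blocks-cancellable} for cancellation tests, and then decide reachability in $\mathcal{R}$ nondeterministically node-by-node. The only difference is that you spell out the composition of $\NL$ subroutines and the positivity of the oracle queries explicitly, which the paper leaves implicit.
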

\begin{proof}
	Our assumptions imply that $d\le |V|$, $t$, and $k$ are bounded. Thus
	$M$ is at most logarithmic in the input. Moreover, we can
	simulate free reductions using logarithmic space, because checking
	whether two block abstractions cancel can be done in $\NL$ by
	\cref{check-blocks-cancellable}.
\end{proof}

\begin{proposition}\label{upper-bound-k-fixed-bounded-cliques}
  Let $\cG$ be a class of graphs that is closed under strongly induced
  subgraphs and neighbor antichain bounded.  If the size of cliques in $\cG$ is bounded, then
  for every $k\ge 1$, the problem $\BSREACH_k(\cG)$ belongs to $\P$.
\end{proposition}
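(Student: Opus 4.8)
The plan is to reuse the reduction to reachability in the graph $\mathcal{R}$ (whose correctness is deferred to the full version) and argue purely about its complexity, exactly as in the proof of \cref{upper-bound-k-fixed-cliques-bounded-cliques}. Concretely, I would establish two claims: (i)~that $\mathcal{R}$ has only polynomially many nodes, and (ii)~that its edge relation can be decided in $\P$. Given both, one builds $\mathcal{R}$ explicitly in polynomial time---enumerating all node pairs and testing each for an edge---and then solves ordinary reachability in the resulting finite graph, so the whole procedure runs in $\P$.

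The step I expect to be the crux is (i), namely showing that the per-node bit count $M = dk^2(\log|Q| + t\cdot\log|V|) + \log d + \log|Q|$ is logarithmic in the input. Here $k$ is fixed and $t$ is bounded by the neighbor-antichain assumption, so the only quantity that could still blow up is the number $d$ of weak dependence classes. The key observation is that a bound on clique size already forces $d$ to be bounded. Recall that the weak dependence classes are exactly the connected components $A_1,\ldots,A_d$ of the complement $\overline{\Gamma}$. Picking one representative $v_i\in A_i$ per component, any two representatives $v_i,v_j$ with $i\ne j$ lie in different components of $\overline{\Gamma}$, hence carry no edge of $\overline{\Gamma}$, and therefore satisfy $v_i I v_j$ in $\Gamma$. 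Thus $\{v_1,\ldots,v_d\}$ is a clique of $\Gamma$, so $d$ is at most the clique bound of $\cG$. With $d$, $t$, and $k$ all bounded, $M = O(\log|Q| + \log|V|)$, and hence the number of distinct node contents is $2^{O(M)} = \mathrm{poly}(|Q|,|V|)$.

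For (ii), the only difference from the clique case is that $\Gamma$ need not be a clique, so testing whether two block abstractions cancel is guaranteed only to be in $\P$ by \cref{check-blocks-cancellable}, rather than in $\NL$. This suffices: deciding an edge (\cref{reachabilit-graph-edge}) reduces to guessing a context abstraction and a free reduction on the resulting sequence of block abstractions (\cref{one-step-reachable}). Since $d$ and $k$ are bounded, each node carries only $O(dk^2)$ block abstractions, so a free reduction involves only constantly many of them; verifying it requires constantly many commutation tests---decidable via \cref{mini}---and cancellation tests, each in $\P$. Because both the block count and $k$ are bounded, the constantly many free-reduction steps can simply be searched over the bounded space of block arrangements. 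Hence the edge relation lies in $\P$, and together with (i) this places the whole computation in $\P$.
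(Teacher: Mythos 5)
Your overall strategy is the same as the paper's: bound the parameters $d$, $t$, and $k$, conclude that a node of $\mathcal{R}$ needs only logarithmically many bits (so $\mathcal{R}$ has polynomially many nodes), verify free-reduction steps---and hence the edge relation of $\mathcal{R}$---in $\P$ via \cref{check-blocks-cancellable} and \cref{mini}, and finish with ordinary reachability in the explicitly constructed graph. Your step (ii) and the complexity bookkeeping are fine. However, there is a genuine gap at your ``crux'' step (i), and it sits exactly where the hypothesis that $\cG$ is \emph{closed under strongly induced subgraphs}---which your proof never invokes---is needed. Your inference ``$\{v_1,\ldots,v_d\}$ is a clique of $\Gamma$, so $d$ is at most the clique bound of $\cG$'' silently reads the assumption ``the size of cliques in $\cG$ is bounded'' as a bound on clique \emph{subsets} of graphs in $\cG$. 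The reading the paper uses (and the one forced by correctness of the trichotomy in \cref{main:k-fixed}, whose third case needs arbitrarily large cliques \emph{as members} of $\cG$ to apply \cref{lower-bound-k-fixed-unbounded-cliques}) is a bound on the cliques that are members of $\cG$. These readings genuinely differ: let $\Gamma_n$ be a clique on $n$ vertices plus one isolated vertex. Its complement is a connected star, so $\Gamma_n$ has a single weak dependence class; hence no sub-clique on two or more vertices is a strongly induced subgraph, and the closure of $\{\Gamma_n \mid n\ge 1\}$ under strongly induced subgraphs has only one-vertex cliques as members while containing unbounded clique subsets. Under your reading this class is excluded from the proposition (you prove a strictly weaker statement, too weak for the proof of \cref{main:k-fixed}); under the paper's reading it is covered, and a clique subset of a member of $\cG$ can be arbitrarily large, so your final inference does not follow.

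The fix is short, and your explicit construction is already set up for it (indeed, it fills in a detail the paper's own proof leaves implicit). Your representatives $v_1,\ldots,v_d$ lie in pairwise distinct weak dependence classes, so $v_i\not\sim_W v_j$ for $i\ne j$; since in a standalone clique every vertex is its own weak dependence class, the clique $\{v_1,\ldots,v_d\}$ is not merely an induced but a \emph{strongly induced} subgraph of $\Gamma$. By closure of $\cG$ under strongly induced subgraphs, this clique is itself a member of $\cG$, and therefore $d\le\ell$, where $\ell$ bounds the size of cliques in $\cG$. With this observation inserted, your argument coincides with the paper's proof.
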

\begin{proof}
	First observe that as in \cref{upper-bound-k-fixed-cliques-bounded-cliques},
	the parameters $d$, $t$, and $k$ are bounded. To see this for $d$, let $\ell$
	be an upper bound on the size of cliques in $\cG$. 
	Then, every graph $\Gamma$ in $\cG$ can have at most $\ell$ weak
	dependence classes: Otherwise, $\Gamma$ would have a clique with
	$\ell+1$ nodes as a strongly induced subgraph, and thus $\cG$ would
	contain a clique with $\ell+1$ nodes. Hence, $d$ is bounded and for a
	node in $\mathcal{R}$, we need only logarithmic space.  Moreover, by
	\cref{check-blocks-cancellable}, we can verify a free reduction step in
	$\P$.
\end{proof}

\xsubparagraph{Special Graphs}
We turn to the $\NL$ and $\P$ upper bounds for \cref{main:k-input}. In each
case, all graphs are anti-cliques. Thus, every run has a single context and
$\BSREACH$ reduces to membership for pushdown automata, which is in $\P$. If
there is just one vertex, we can even obtain a one-counter
automaton, for which emptiness is in $\NL$~\cite{DBLP:conf/time/DemriG07}.
\begin{restatable}{proposition}{upperBoundKInputSimple}\label{upper-bound-k-input-simple}
	If $\cG$ is the class of anti-cliques, then $\BSREACH(\cG)$ is in $\P$.
	Moreover, if $\Gamma$ has only one vertex, then $\BSREACH(\Gamma)$ is in $\NL$.
\end{restatable}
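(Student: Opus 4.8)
The plan is to show that over anti-cliques the scope-boundedness constraint degenerates into plain configuration reachability, which in turn reduces to pushdown (respectively one-counter) emptiness. First I would observe that if $\Gamma=(V,I)$ is an anti-clique, then the whole set $V$ is a dependent set, since by definition there is no pair of distinct vertices $u,v$ with $uIv$. Hence every computation $w\in X_\Gamma^*$ consists of a single context, and its canonical context decomposition is $w$ itself. Consequently, whenever $w[i]R_\pi w[j]$, both positions lie in this single context, so there are no contexts strictly between them and the interaction distance is $1\le k$ for every $k\ge 1$. Therefore $\mathsf{sc}(w)\le k$ holds if and only if $w\equiv_\Gamma\varepsilon$, uniformly in $k\ge 1$, and $\BSREACH(\cG)$ coincides with ordinary configuration reachability: is there a run from $(q_{\mathit{init}},\varepsilon)$ to some $(q_{\mathit{fin}},w)$ with $w\equiv_\Gamma\varepsilon$?

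Next I would reduce this configuration reachability problem to emptiness of a pushdown automaton. Because $\Gamma$ has no proper edges, rule \textbf{R3} can never be applied, so $w\equiv_\Gamma\varepsilon$ holds exactly when $w$ reduces to $\varepsilon$ using only the cancellation rules \textbf{R1} and \textbf{R2}. This is precisely the requirement that the operations nest like a stack: a push $v^+$ must be matched by a later pop $v^-$ of the same vertex (and, for a looped vertex, the inverse pairing $v^-v^+$ is admissible as well). I would therefore build a pushdown automaton $\mathcal{P}$ whose finite control simulates that of $\cA$ and whose stack records the currently open operations, pushing on $v^+$ and popping on a matching $v^-$ (with the symmetric move for looped $v$). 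Then $\mathcal{P}$ admits an accepting run reaching $q_{\mathit{fin}}$ with empty stack if and only if $\cA$ has a run realizing some $w$ with $w\equiv_\Gamma\varepsilon$. As emptiness of pushdown automata is in $\P$, this yields $\BSREACH(\cG)\in\P$.

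For the single-vertex case, the very same construction produces a pushdown automaton using only one stack symbol, i.e.\ a one-counter automaton: pushing and popping the unique vertex become incrementing and decrementing one counter, with the unlooped vertex giving an $\Nn$-valued counter that must return to $0$ and the looped vertex giving a $\Zn$-valued counter. Since emptiness of one-counter automata is in $\NL$~\cite{DBLP:conf/time/DemriG07}, we obtain $\BSREACH(\Gamma)\in\NL$.

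I expect the main point requiring care to be the correctness of the pushdown simulation, in particular the handling of looped vertices: one has to verify that in an anti-clique monoid the only available reductions are the local cancellations \textbf{R1} and \textbf{R2}, and that these correspond faithfully to push/pop behaviour of a single stack, so that ``empty stack on termination'' matches ``$w\equiv_\Gamma\varepsilon$''. The scope-collapse observation in the first step is immediate; the real work lies in the reduction to pushdown (and one-counter) emptiness and in invoking the right complexity bounds for those emptiness problems.
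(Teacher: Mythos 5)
Your proposal is correct and takes essentially the same route as the paper's proof: the single-context observation collapses the scope condition to plain configuration reachability, which is then reduced to pushdown emptiness (with symmetric push/pop moves for looped vertices, matching the paper's extra stack symbols $v^\circ$) and, in the one-vertex case, to one-counter emptiness, in $\NL$ by \cite{DBLP:conf/time/DemriG07}, with the $\Zn$-valued counter handled by keeping the sign in the finite state. One small inaccuracy: rule \textbf{R3} \emph{can} be applied in an anti-clique when a vertex $v$ is looped (since then $v^+Iv^-$ as letters), but this does not affect correctness, because any pair to which \textbf{R3} applies can instead be cancelled directly by \textbf{R1} or \textbf{R2}, so your conclusion that these two rules suffice still holds.
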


\section{Hardness}\label{sec:hardness}
In this section, we show the hardness results of \cref{main:k-input,main:k-fixed}.
\begin{restatable}{proposition}{lowerBoundKFixedTwoNonadjacent}\label{lower-bound-k-fixed-two-nonadjacent}
  If $\Gamma$ is not a clique, then 
  $\BSREACH_k(\Gamma)$ is $\P$-hard for each $k\ge 1$.
\end{restatable}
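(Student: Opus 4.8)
The plan is to exploit the fact that $\Gamma$ is not a clique to confine the whole computation to a single context, which makes scope-boundedness trivially satisfied. Since $\Gamma$ is not a clique, there are distinct $u,v\in V$ with $\neg(uIv)$, so $\{u,v\}$ is a dependent (anti-clique) set. I would only ever use the operations $u^+,u^-,v^+,v^-$, so that every computation $w$ of the valence system I build lies over the dependent set $\{u,v\}$ and hence forms a \emph{single} context in the canonical context decomposition. For such a $w$, between any two $R_\pi$-related positions there are no other contexts at all, so the interaction distance is $1$ and $\mathsf{sc}(w)=1\le k$ for every $k\ge 1$. Consequently, on these instances $\BSREACH_k(\Gamma)$ collapses to ordinary configuration reachability (``is there a run to $q_{\mathit{fin}}$ with product $\equiv_\Gamma\varepsilon$'') for valence systems using only $\{u,v\}$, and it suffices to show that this reachability problem is $\P$-hard.

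For the hardness I would reduce from a $\P$-complete pushdown problem, namely emptiness of pushdown automata (equivalently, context-free grammar emptiness), which is $\P$-complete. Given a pushdown automaton $\mathcal P$ with stack alphabet $\Delta$, I encode each stack symbol $a\in\Delta$ by a fixed-length, prefix-free block $\pi(a)\in\{u^+,v^+\}^{+}$ over the two \emph{positive} generators, and its pop by the reversed block of inverses over $\{u^-,v^-\}$. The valence system $\cA$ simulates $\mathcal P$ in its finite control, emitting $\pi(a)$ when $\mathcal P$ pushes $a$ and nondeterministically emitting the corresponding pop block when $\mathcal P$ pops $a$; a run of $\cA$ reaches $q_{\mathit{fin}}$ with product $\equiv_\Gamma\varepsilon$ exactly when the simulated run of $\mathcal P$ reaches the target. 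This construction is computable in logarithmic space, and since all emitted operations are over $\{u,v\}$, the witnessing run is a single context and hence automatically $k$-scoped.

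The main obstacle is correctness when $u$ or $v$ is \emph{looped}: then the restriction of $\bbM\Gamma$ to $\{u,v\}$ is a group (the free group $F_2$ when both are looped, and an intermediate free product of $\mathbb{Z}$ with a bicyclic monoid otherwise), so the extra relations $x^-x^+\equiv_\Gamma\varepsilon$ permit cancellations a plain pushdown would not, raising the danger that an \emph{invalid} simulated run (an underflow, or a pop of the wrong symbol) still yields product $\varepsilon$ and thus a spurious accepting run. This is exactly where the prefix-free positive encoding pays off: a mismatched push/pop pair $\pi(b)\,\pi(a)^{-1}$ with $a\ne b$ cannot freely reduce to the identity even in $F_2$, because after deleting the common prefix, two suffixes beginning with distinct positive generators remain, and I would show that such leftovers can never be absorbed by the surrounding positive push-blocks; hence product $\equiv_\Gamma\varepsilon$ forces every simulated pop to be correct, so exactly the valid runs of $\mathcal P$ survive. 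Equivalently, one may invoke the classical fact that valence automata over these two-generator monoids recognise precisely the context-free languages, whence reachability over them is already $\P$-hard. Either way, combined with the single-context observation, this yields $\P$-hardness of $\BSREACH_k(\Gamma)$ for every $k\ge 1$.
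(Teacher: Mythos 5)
Your opening move---restricting attention to the two non-adjacent vertices $u,v$, so that every computation is a single context, scope-boundedness holds vacuously, and $\BSREACH_k(\Gamma)$ collapses to ordinary reachability over $\{u,v\}$---is exactly the paper's first step, and your treatment of the case where both $u$ and $v$ are \emph{unlooped} is sound: there the restricted monoid really is a two-symbol pushdown, and a direct simulation of PDA emptiness works, just as in the paper. The gap is in the looped case, and it is fatal to your construction as stated. If both $u$ and $v$ are looped, the restricted monoid is the free group $F_2$, and in a \emph{group} the word emitted by ``pop $b$; push $b$'' is $\pi(b)^{-1}\pi(b)=1$ no matter what encoding $\pi$ you choose. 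Hence the transition sequence ``push $a$; pop $b$; push $b$; pop $a$'' always has product $\pi(a)\,\pi(b)^{-1}\pi(b)\,\pi(a)^{-1}=1$, even though it is not a valid pushdown run (it pops $b$ while the stack holds $a$). Take a PDA whose only state-path from initial to final state is exactly this four-transition sequence: the PDA is empty, yet your valence system has an accepting run, so the reduction gives the wrong answer. This directly refutes your key claim that the leftover of a mismatched pair ``can never be absorbed by the surrounding positive push-blocks'': prefix-freeness of the positive code words is irrelevant, because the absorption happens between a pop block and the immediately following push block of the same symbol, which cancel identically in any group.

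Your fallback---invoking the fact that valence automata over $F_2$ accept exactly the context-free languages---names a true theorem, but it does not by itself yield $\P$-hardness of reachability: for that you need a logspace translation from PDAs (or CFGs) into $F_2$-automata, and producing such a translation is precisely the nontrivial content that your direct encoding fails to supply (the classical proofs of the language equivalence are not the naive stack simulation, for exactly the reason above). This is why the paper does not attempt a direct stack simulation when both vertices are looped: it instead cites the known $\P$-hardness of rational subset membership for the free group of rank two (Haase--Zetzsche, Theorem~III.4, together with a logspace reduction to membership of the neutral element). For the mixed case (one looped, one unlooped vertex), the paper reduces from the two-unlooped case via an encoding such as $a^+\mapsto u^+v^+$, $a^-\mapsto v^-u^-$, $b^+\mapsto u^+v^+v^+$, $b^-\mapsto v^-v^-u^-$, where the \emph{unlooped} vertex $u$ is placed at the outer ends of every block precisely so that $u^-u^+$ cannot cancel, which blocks the spurious runs. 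Your construction could likely be repaired in the mixed case by the same trick (force every code word to begin with the unlooped letter), but when both vertices are looped no choice of push/pop encoding can work, so there a citation or a genuinely different argument is unavoidable.
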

This uses standard techniques.  If a valence system uses only the two
non-adjacent vertices in $\Gamma$, then scope-bounded reachability is the same
as ordinary reachability.  If both vertices are looped, then this is the
rational subset membership problem for a free group of rank~2, for which
$\P$-hardness was observed in~\cite[Theorem~III.4]{DBLP:conf/lics/HaaseZ19}. If
at least one vertex is unlooped, a standard encoding yields a reduction from
emptiness of pushdown automata.

\xsubparagraph{Two adjacent vertices and bounded queue automata}
Our second hardness proof shows $\PSPACE$-hardness in \cref{main:k-input}.
\begin{proposition}\label{lower-bound-k-input-two-adjacent}
  If $\Gamma$ has two adjacent nodes, then
  $\BSREACH(\Gamma)$ is $\PSPACE$-hard.
\end{proposition}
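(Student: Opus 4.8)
The plan is to reduce from the state-reachability problem for \emph{bounded queue automata}: finite-control machines equipped with a FIFO queue that may hold at most $n$ symbols over a fixed alphabet, where the bound $n$ is given in unary as part of the input. A configuration of such a machine consists of a control state together with at most $n$ queue symbols, hence occupies polynomially many bits, and since the transition relation is local, reachability lies in $\PSPACE$. Conversely, a queue of length $n$ can store the tape of a linear bounded automaton and simulate head movement by repeatedly rotating its content, so reachability for bounded queue automata is $\PSPACE$-complete. I would encode a given bounded queue automaton $\mathcal{M}$ into a valence system $\cA$ over $\Gamma$ that uses only the two adjacent vertices $u,v$, together with a scope bound $k$ that is polynomial in the size of $\mathcal{M}$ and in $n$.

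First I would restrict $\cA$ to the operations $u^+,u^-,v^+,v^-$. Since $uIv$, these commute across the two symbols but not within a single symbol, so the storage behaves as two independent counters, each ranging over $\Nn$ or $\Zn$ according to whether the corresponding vertex is looped. Because $u^+$ and $v^+$ never lie in a common context, every run decomposes into contexts that are purely over $u$ or purely over $v$. The simulation proceeds in rounds, where one round sweeps the entire queue once: the front symbol is dequeued, processed by the control of $\mathcal{M}$, and a new symbol is enqueued at the back. I would make each queue cell correspond to one context and encode its stored symbol by which of $u,v$ is used, so that enqueuing a symbol places a token that must be cancelled when the same cell is dequeued in the following round. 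Setting $k$ proportional to $n$ forces, via the interaction-distance characterization of $k$-scopedness, every such token to be cancelled within one round; and laying out the contexts so that the greedy matching relation $R_\pi$ pairs the $i$-th enqueue of a round with the $i$-th dequeue of the next makes the simulation respect both the FIFO discipline and the length bound. By \cref{block-decomposition-few-blocks} each context splits into at most $2k$ blocks, so this bookkeeping is finite. The target state of $\mathcal{M}$ is then reachable iff $\cA$ admits a run $(q_{\mathit{init}},\varepsilon)\to(q_{\mathit{fin}},w)$ with $\mathsf{sc}(w)\le k$.

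For correctness I would argue both directions. Given an accepting run of $\mathcal{M}$, I would spell out the round-by-round interleaving of $u$- and $v$-operations and check that every inserted token is cancelled within the prescribed window, so that $\mathsf{sc}(w)\le k$. For soundness I would take any $k$-scoped run $w$, read off its canonical context decomposition, and use the bounded interaction distance together with the pairing induced by $R_\pi$ to reconstruct a legal queue computation. Two minor points need attention: the argument must work both for rule \textbf{R1} (unlooped vertices, $\Nn$-counters) and for \textbf{R2} (looped vertices, $\Zn$-counters), which only flips the polarity of the cancelling tokens; and since $\sim_W$ is computed in the full graph $\Gamma$, the two vertices $u,v$ may or may not be weakly dependent, so the number of contexts counted by the interaction distance changes by at most a constant factor, which I absorb into the choice of $k$.

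The main obstacle is soundness, i.e.\ ruling out \emph{spurious} runs. The two commuting operations form order-free counters and therefore do not, on their own, record the order in which symbols were enqueued; the entire faithfulness of the queue simulation must be recovered from the scope bound and the structure of the matching relation $R_\pi$. The delicate part is to design the per-round context layout so that the only greedy reductions respecting scope $k$ are exactly those matching each queue cell across consecutive rounds, thereby pinning down both the FIFO order and the length bound $n$ and preventing the control from forging queue content.
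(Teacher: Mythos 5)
Your high-level plan coincides with the paper's: both reduce from reachability in bounded queue automata (with the queue length $n$ in unary), both encode the current queue content in a sliding window of contexts over the two adjacent vertices $u,v$, and both use the scope bound $k$ (polynomial in $n$, which is legitimate since $k$ is part of the input to $\BSREACH$) to force each enqueued token to be cancelled exactly when its cell is dequeued. However, there is a genuine gap, and it sits precisely where the paper's actual technical work lies. You correctly identify soundness against spurious runs as ``the main obstacle'' and ``the delicate part,'' but you never resolve it: you only state that one must ``design the per-round context layout so that the only greedy reductions respecting scope $k$ are exactly those matching each queue cell across consecutive rounds.'' The paper shows this is not achievable with a naive one-context-per-cell encoding: if a cell is encoded by tokens in a single context, then the $u^-$ operations used to \emph{read} that cell can instead cancel against $u^+$ occurrences in newer contexts that are still within interaction distance $k$, so the system can pretend the oldest cell holds a different value than it does. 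The paper's fix is the asymmetric triple-context encoding: a queue bit $0$ becomes three contexts carrying the bit string $011$ (with $0$-contexts being $u^+$ and $1$-contexts being $u^+u^+$), a queue bit $1$ becomes $100$, exploiting the fact that a check for a $0$-context is sound (an unread extra $u^+$ could never be cancelled later) while a check for a $1$-context alone is not; checking $1,0,0$ is then sound because the two verified $0$-contexts pin down the pattern. Without this gadget, or some equivalent one, your reduction has no correctness proof, and this gadget is essentially the entire content of the proposition's proof.

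A secondary but concrete problem is your choice to encode a cell's symbol ``by which of $u,v$ is used.'' Under the canonical context decomposition, two consecutive cells stored with the same vertex would merge into a \emph{single} context (the decomposition takes maximal dependent prefixes), destroying the intended cell-per-context correspondence; separating them requires interleaved separator contexts, but then $v$ cannot serve both as data and as separator. The paper avoids this by encoding all data in $u$-contexts via multiplicity ($u^+$ versus $u^+u^+$) and reserving $v$ exclusively for the separator $v^+v^-$, with ``gap contexts'' $u^+u^-$ to align interaction distances. Finally, your appeal to \cref{block-decomposition-few-blocks} is out of place here: that result serves the upper-bound algorithm, and plays no role in the hardness reduction.
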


For the proof of \cref{lower-bound-k-input-two-adjacent}, we employ the model
of bounded queue automata.  A \emph{bounded queue automaton (BQA)} is a tuple
$\cA=(Q,n,T,q_0,q_f)$, where (i)~$Q$ is a finite set of \emph{states},
(ii)~$n\in\N$ is the \emph{queue length}, given in unary, (iii)~$T$ is its set of \emph{transitions},
(iv)~$q_0\in Q$ is its \emph{initial state}, and (iv)~$q_f\subseteq Q$ is its
\emph{final state}. A \emph{configuration} of a BQA is a pair $(q,w)\in
Q\times\{0,1\}^n$. A transition is of the form $(q,x,y,q')$, where $q,q'\in Q$ and $x,y\in\{0,1\}$. We write $(q,w)\autstep(q',w')$ if there is a transition
$(q,x,y,q')$ such that (i)~$w$ has prefix $x$ and (ii)~removing $x$ from the left and
appending $y$ on the right yields $w'$. 
The \emph{reachability problem for BQA} is the following: Given a bounded queue
automaton $(Q,n,T,q_0,q_f)$, is it true that $(q_0,0^n)\autsteps (q_f,0^n)$? It
is straightforward to simulate a linear bounded automaton using a bounded queue
automaton and vice-versa, hence reachability for BQA is $\PSPACE$-complete.

\newcommand{\n}{\|}
\newcommand{\bitzero}{\mathbf{0}}
\newcommand{\bitone}{\mathbf{1}}
\newcommand{\biteps}{\mathbf{e}}

\xsubparagraph{General idea and challenge} Let us first assume that the
nodes $u$ and $v$ in $\Gamma$ are not weakly dependent.  The initial
approach for \cref{lower-bound-k-input-two-adjacent} is to encode the
queue content in the current window of $k=n$ contexts. In each
context, we encode a $0$-bit using an occurrence of the letter $u^+$
that can only be cancelled with a future $u^-$. We call this a
\emph{$0$-context}. Likewise, a $1$-bit is encoded by two occurrences
of $u$, which we call a \emph{$1$-context}. Therefore, we abbreviate
$\bitzero=u^+$ and $\bitone=u^+u^+$.  We also have the right inverses
$\bar{\bitzero}=u^-$ and $\bar{\bitone}=u^-u^-$.  To
start a new context, we multiply $v^+v^-$ and use the
abbreviation $\n=v^+v^-$.
With this encoding, it is easy to check that the oldest context
is a $0$-context: Just multiply $\bar{\bitzero}=u^-$ and then start
a new context using $\n=v^+v^-$. This can only succeed if the
oldest context encodes a $0$: If it had encoded a $1$, there would be
another occurrence of $u^+$ that can never be cancelled.

However, it is not so easy to check that the oldest context is a
$1$-context. One could multiply with
$\bar{\bitone}\n=u^-u^-v^+v^-$, but this can succeed even
if the oldest context is a $0$-context: Indeed, the first occurrence of
$u^-$ can cancel with the $u^+$ in the oldest context $c$, but
the second $u^-$ could cancel with $u^+$ in a context to the right of $c$.

\xsubparagraph{Solution}
We overcome this as follows. Instead of one context per bit, we
use three contexts. To encode a $0$ in the queue, we
use a $0$-context, a $1$-context, and another $1$-context, resulting
in the string $011$. To encode a $1$, we do the same with the bit
string $100$.  Then, we use the above approach to check
for $011$ or $100$: Since a successful check for a $0$-context
guarantees that there was a $0$-context, checking for $0,1,1$
guarantees that the oldest context is a $0$-context, thus the three
oldest contexts must carry $011$. When we check for $1,0,0$, then 
among the three oldest contexts, at least two are
$0$-contexts, hence the three oldest contexts carry $100$.

Let us calculate the required scope bound to implement this idea. Since we
only use the operations $u^+,u^-,v^+,v^-$, we only have $u$-contexts
(consisting of $u^+,u^-$) and $v$-contexts (consisting of $v^+,v^-$).  When
we read the oldest bit in the queue, we produce three new $u$-contexts
(separated by $v$-contexts).  Then, we need to write a new bit in the queue,
which requires another three $u$-contexts (separated by $v$-contexts).
The interaction distance to the oldest $u$-context that is part of
the leftmost queue bit is always $k=3(2n-1)$: The oldest bit is encoded using
three $u$-contexts.  For each further queue entry $i\in\{2,\ldots,n\}$, we
have three $u$-contexts that were used to read an even older bit, and then
three $u$-contexts that encode the $i$-th bit in the queue. In total, this yields $3(1+2(n-1))=3(2n-1)$ many $u$-contexts.

To initialize the queue, we use 
$t_0=(\bitzero\n\bitone\n\bitone\n)(\biteps\n\biteps\n\biteps\n\bitzero\n\bitone\n\bitone\n)^{n-1}$. Here, $\biteps=u^+u^-$ is a ``gap context'' that ensures that the leftmost bit has interaction distance exactly $k=3(2n-1)$ from the right end of $t_0$. 
Thus, $t_0$ puts $n$ copies of the
bit string $011$, plus $n-1$ gap contexts into our window of $k=3(2n-1)$ contexts. To simulate a
transition $(p,x,y,p')$, we check that $x$ is the bit encoded
by the three oldest contexts.
Afterwards, we put the new bit $y$ into the queue.
Therefore,  if $x=0$,
define the triple
$(x_1,x_2,x_3)=(\bar{\bitzero},\bar{\bitone},\bar{\bitone})$; if
$x=1$, let
$(x_1,x_2,x_3)=(\bar{\bitone},
\bar{\bitzero},\bar{\bitzero})$. Moreover, if $y=0$, then let
$(y_1,y_2,y_3)=(\bitzero, \bitone,\bitone)$; if $y=1$, then let
$(y_1,y_2,y_3)=(\bitone,\bitzero,\bitzero)$. Then we use the string
$t_{x,y}=x_1\n x_2 \n x_3 \n y_1\n y_2\n y_3\n$. Finally, to check that the
encoded queue content consists entirely of $0$'s, we use 
$t_f=(\bar{\bitzero}\n\bar{\bitone}\n\bar{\bitone}\n)(\biteps\n\biteps\n\biteps\n\bar{\bitzero}\n\bar{\bitone}\n\bar{\bitone}\n)^{n-1}$.
With this encoding, it is straightforward to translate a BQA into a valence
system over $\Gamma$. 

Note that if $u$ and $v$ are weakly dependent, then the same construction
works, except that we have to set $k=6(2n-1)$, because now the $v$-contexts
$\n=v^+v^-$ between two $u$-contexts count towards the interaction distance.  

\xsubparagraph{Unbounded cliques and bit vector automata}
We turn to $\PSPACE$-hardness in \cref{main:k-fixed}.
\vspace{-0.5cm}
\begin{proposition}\label{lower-bound-k-fixed-unbounded-cliques}
  Suppose $\cG$ be either the class of unlooped cliques or the class of
  looped cliques. Then for every $k\ge 1$, the problem
  $\BSREACH_k(\cG)$ is $\PSPACE$-hard.
\end{proposition}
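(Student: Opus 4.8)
The plan is to reduce from reachability in a \emph{bit vector automaton}: a finite control together with a vector $b\in\{0,1\}^n$ (with $n$ given in unary) whose transitions read and rewrite one bit at the current head position and then move the head, starting and required to end at $b=0^n$. As for the bounded queue automata of \cref{lower-bound-k-input-two-adjacent}, reachability for this model is $\PSPACE$-complete, since it simulates a linear bounded automaton and vice versa. Given such a machine and a target scope bound $k$, I build a valence system over $\uclique{d}$ (resp.\ $\lclique{d}$) with $d=O(n)$ vertices; here the dimension $d$ is allowed precisely because $\cG$ contains all unlooped (resp.\ looped) cliques.

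The observation that makes a \emph{fixed} scope bound suffice is that the interaction distance of a cancelling pair on a vertex $v$ counts only the contexts between them belonging to $v$'s weak dependence class, and in a clique every vertex is its own class. Hence a token created on counter $v$ and then left untouched while the head works on other cells contributes \emph{nothing} to the interaction distance, no matter how long (even exponentially long) the run is. I therefore store the $i$-th bit persistently in the counters of cell $i$ and keep the head position and state in the finite control. Each simulated step touches only the counters of the current cell a bounded number of times, so consecutive contexts on any fixed counter cancel at interaction distance $O(1)$.

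For unlooped cliques I use a dual rail: cell $i$ is represented by counters $a_i,b_i$ with invariant $a_i+b_i=1$, encoding bit $0$ by $a_i=1$ and bit $1$ by $b_i=1$. Every transition emits only balanced operations -- a test $a_i^-a_i^+$ or $b_i^-b_i^+$ and, when needed, a flip $a_i^-b_i^+$ or $b_i^-a_i^+$ -- so the invariant is preserved along \emph{every} run; together with non-negativity of $\Nn$-counters this forces $(a_i,b_i)\in\{(1,0),(0,1)\}$ throughout, and $a_i^-a_i^+$ succeeds exactly when the encoded bit is $0$. Thus the simulation is faithful independently of the scope bound, and the honest run is $1$-scoped: on each counter the only cancellations are between consecutive counter-contexts (the initializing $+$ with the first $-$, each test's $+$ with the next $-$, and the last $+$ with the final clearing $-$), all at interaction distance $1$. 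Since $1$-scoped runs are $k$-scoped for every $k\ge 1$, this settles the unlooped case uniformly in $k$.

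The looped case is the main obstacle and must be handled by a genuinely scope-based argument rather than an invariant. For $\Zn$-counters non-negativity is gone and $a_i^-a_i^+\equiv_\Gamma\varepsilon$, so tests carry no information; faithfulness must come solely from $w\equiv_\Gamma\varepsilon$ (net balance per counter) together with the scope bound. I encode the transfer of each bit between successive visits of cell $i$ by a token on a counter of cell $i$, arranged so that the intended, consistency-preserving cancellation has interaction distance $\le k$, while the dangerous ``future-borrowing'' cancellation -- where a read steals a $+$ from a later write of the same cell, exactly as in the failed $\bar{\bitone}\n$ check of \cref{lower-bound-k-input-two-adjacent} -- is pushed out of the scope window. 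As there, I achieve this with a redundant, multi-context bit encoding in the spirit of the $011/100$ gadget, now calibrated to the fixed $k$ so that the honest run stays $k$-scoped while every inconsistent run would force a token to cancel at interaction distance exceeding $k$. The delicate point I expect to cost the most care is precisely this calibration: since each cell is revisited up to exponentially often, one must prove that the only cancellation pattern respecting both net balance and the scope bound is the one reproducing a legal machine run.
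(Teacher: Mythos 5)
Your reduction source (bit vector automata) and target graphs match the paper's, and your \emph{unlooped}-clique argument is correct but takes a genuinely different route. The paper gives one construction that works for looped and unlooped cliques alike: the bit of cell $i$ is stored in the \emph{magnitude} of a single context ($a_i^{k}b_i\bar{b}_i$ for $0$, $a_i^{3k}b_i\bar{b}_i$ for $1$), reads are $\bar{a}_i^{k}b_i\bar{b}_i$ resp.\ $\bar{a}_i^{3k}b_i\bar{b}_i$, successive accesses to a cell are separated by padding $(a_i\bar{a}_ib_i\bar{b}_i)^k$, and a counting argument shows that within interaction distance $k$ a mismatched write $a_i^{3k}$ finds only $3k-1$ cancellation partners, so $\mathsf{sc}(w)\le k$ forces every read to match the preceding write. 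Your dual-rail encoding avoids all of this for unlooped cliques by letting the $\Nn$-semantics do the work: non-negativity of every prefix on every vertex is exactly what $w\equiv_\Gamma\varepsilon$ means over an unlooped clique, so the simulation is faithful for \emph{all} runs, and the honest run is $1$-scoped, hence $k$-scoped. That is a clean and simpler argument---but it is intrinsically limited to the unlooped case.

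For looped cliques your proposal has a genuine gap: you correctly identify that invariant-based faithfulness is unavailable over $\Zn$-counters, but you only gesture at a construction (``a redundant, multi-context bit encoding in the spirit of the $011$/$100$ gadget, now calibrated to the fixed $k$'') and defer exactly the step that constitutes the proof. Moreover, that gadget does not adapt to fixed $k$. In \cref{lower-bound-k-input-two-adjacent} it works only because the scope bound $k=3(2n-1)$ grows with $n$, so that the entire configuration of $n$ encoded bits fits inside one scope window and the $011$/$100$ redundancy protects the oldest bit; the bits themselves are unary counts of bounded size (one versus two occurrences of $u^+$ per context). With $k$ fixed and each cell revisited exponentially often, bounded per-context counts cannot be protected from future-borrowing by redundancy alone: to prevent a read of cell $i$ from stealing tokens from the \emph{next} write to cell $i$, you must insert contexts on the same vertex between them (pushing that write beyond interaction distance $k$), but those padding contexts then themselves supply cancellation partners inside the window. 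Escaping this circle requires the two quantitative ingredients of the paper's construction---magnitudes $k$ versus $3k$ chosen so that a mismatched read ($k$ tokens) together with the $2k-1$ reachable padding contexts $a_i\bar{a}_i$ can absorb at most $k+(2k-1)=3k-1<3k$ tokens---and neither appears in your sketch. As it stands, your claim that ``every inconsistent run would force a token to cancel at interaction distance exceeding $k$'' is the proposition itself, not an argument for it.
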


Here it is convenient to reduce from bit vector automata, whose configuration
consists of a state and a bit vector. In each step, they can read and modify
one of the bits.  A \emph{bit vector automaton (BVA)} is a tuple
$(Q,n,T,q_0,q_f)$, where (i)~$Q$ is a finite set of states, (ii)~$n$ is the
\emph{vector length}, given in unary, (iii)~a set $T$ of \emph{transitions},
(iv)~$q_0\in Q$ is its \emph{initial state}, and (v)~$q_f\in Q$ is its
\emph{final state}.  A transition is of the form $(p,i,x,y,q)$ with $p,q\in Q$,
$i\in\{0,\ldots,n\}$, and $x,y\in\{0,1\}$.  It checks that $i$-th bit is
currently $x$, and sets the $i$-th bit to $y$.  Thus, a \emph{configuration} of
a bit vector automaton is a pair $(q,w)\in Q\times\{0,1\}^n$. By $\autsteps$,
we denote the reachability relation. The \emph{reachability problem for BVA}
asks, given a BVA $(Q,n,T,q_0,q_f)$, is it true that $(q_0,0^n)\autsteps
(q_f,0^n)$?  Again, a simulation of linear bounded automata is straightforward
and this problem is $\PSPACE$-complete.

\begin{proof}[Proof of \cref{lower-bound-k-fixed-unbounded-cliques}]
  Let $\cA=(Q,n,T,q_0,q_f)$ be a BVA. Moreover, depending on whether $\cG$ is
  the class of looped or unlooped cliques, let $\Gamma=(V,I)$ be either a
  looped or an unlooped clique with $2n$ vertices, so
  let $V=\{a_i,b_i\mid i\in\{1,\ldots,n\}\}$. Our
  construction does not depend on whether $\Gamma$ is looped or unlooped and we
  will show that it is correct in either case.
  We first illustrate the idea for maintaining a single bit using the vertices $a_i,b_i$. To ease notation, we now write $v$ for $v^+$ and $\bar{v}$ for $v^-$ when $v\in V$.  Consider the string
  \[ w=(a_i^{r_1}b_i\bar{b}_i\bar{a}_i^{s_1}b_i\bar{b}_i)(a_i\bar{a}_ib_i\bar{b}_i)^k (a_i^{r_2}b_i\bar{b}_i\bar{a}_i^{s_2}b_i\bar{b}_i)\cdots (a_i\bar{a}_ib_i\bar{b}_i)^k (a_i^{r_m}b_i\bar{b}_i\bar{a}_i^{s_m}b_i\bar{b}_i). \]
  Moreover, assume that for each $j=1,\ldots,m$, we have $r_j,s_j\in\{k,3k\}$.
  We think of $a_i^{r_j}b_i\bar{b}_i$ as an operation that stores $0$ if $r_j=k$ and stores $1$ if $r_j=3k$.
  We think of $\bar{a}_i^{s_j}b_i\bar{b}_i$ as a read operation, where again $s_j=k$ stands for 0 and $s_j=3k$ stands for $1$. Here, the purpose of $b_i\bar{b}_i$ is to start a new context (since $\Gamma$ is a clique).
Moreover, each factor $(a_i\bar{a}_ib_i\bar{b}_i)^k$ produces $k$ contexts in the weak dependence class of $a_i$, where each context contains $a_i\bar{a}_i$.
  This means, each factor $(a_i\bar{a}_ib_i\bar{b}_i)^k$ enforces an interaction distance of $k+1$ between $\bar{a}_i^{s_j}$ and $a_i^{r_{j+1}}$, and thus prevents them from canceling with each other.
  
  We claim that $\mathsf{sc}(w)\le k$ if and only if each read operation reads the bit that was stored before.
  In other words, we have $\mathsf{sc}(w)\le k$ if and only if $r_j=s_j$ for each $j\in\{1,\ldots,m\}$. Moreover, this is true regardless of whether $\Gamma$ is looped or unlooped.
  For the ``if'', note that each $a_i^{r_j}$ can cancel with $\bar{a}_i^{s_j}$ and in every other context, every letter ($a_i$, $\bar{a}_i$, $b_i$, $\bar{b}_i$) can cancel with its direct neighbor.
  Conversely, suppose $r_j\ne s_j$ for some $j$.
  If $r_j=3k$ and $s_j=k$, then the context $a_i^{r_j}=a_i^{3k}$ sees only $3k-1$ occurrences of $\bar{a}_i$ in contexts at interaction distance $\le k$:
  First, the context $\bar{a}_i^{s_i}=\bar{a}_i^k$ yields $k$ occurrences.
  The other $2k-1$ contexts are of the form $a_i\bar{a}_i$ and each provides one occurrence of $\bar{a}_i$.
  In total, we have $k+2k-1=3k-1$.
  It is thus impossible to cancel every letter in $a_i^{r_j}$. The case $r_j=k$ and $s_j=3k$ is symmetric.
  This proves our claim.
  Using this encoding, it is now straightforward to simulate $n$ bits.
\end{proof}

\Cref{lower-bound-k-fixed-two-nonadjacent,lower-bound-k-input-two-adjacent,lower-bound-k-fixed-unbounded-cliques}
are the last remaining ingredients for \cref{main:k-input,main:k-fixed}:
\Cref{main:k-input} follows from
\cref{upper-bound-k-input,upper-bound-k-input-simple,lower-bound-k-fixed-two-nonadjacent,lower-bound-k-input-two-adjacent}.
Moreover, \cref{main:k-fixed} follows from
\cref{upper-bound-k-input,upper-bound-k-fixed-cliques-bounded-cliques,upper-bound-k-fixed-bounded-cliques,lower-bound-k-fixed-two-nonadjacent,lower-bound-k-fixed-unbounded-cliques}. 

\section{Conclusion}\label{sec:conclusion}
We have introduced a notion of scope-bounded reachability for valence systems
over graph monoids. In the special case of graphs that correspond to
multi-pushdowns, this notion coincides with the original notion of
scope-boundedness introduced by La Torre, Napoli, and Parlato~\cite{ic20}.  We
have shown that with this notion, scope-bounded reachability is decidable in
$\PSPACE$, even if the graph and the scope bound $k$ are part of the input.

In addition, we have studied the complexity of the problem under four types of
restrictions: (i)~$k$ and the graph are part of the input, and the graph is
drawn from some class $\cG$ of graphs, (\cref{main:k-input}), (ii)~$k$ is part
of the input and the graph is fixed (\cref{main:k-input-graph-fixed}),
(iii)~$k$ is fixed and the graph is drawn from some class $\cG$ of graphs that
is neighbor antichain bounded and closed under strongly induced subgraphs
(\cref{main:k-fixed}) and (iv)~$k$ is fixed and the graph is fixed
(\cref{main:k-fixed-graph-fixed}).  We have completely determined the
complexity landscape in the situations (i)--(iv): In every case, we obtain
$\NL$-, $\P$-, or $\PSPACE$-completeness.

These results settle the complexity of scope-bounded reachability for most
types of infinite-state systems that fit in the framework of valence systems
and have been considered in the literature.

\subparagraph{Open Problem: Dropping neighbor antichain boundedness} However,
we leave open what complexities can arise if in case (iii) above, we drop the
assumption of neighbor antichain boundedness. In other words: $k$ is fixed and
the graph comes from a class $\cG$ that is closed under strongly induced
subgraphs.  For all we know, it is possible that there are classes $\cG$ for
which the problem is neither $\NL$-, nor $\P$-, nor $\PSPACE$-complete.

For example, for each $n\ge 0$, consider  the bipartite graph $B_n$ with nodes
$\{u_i,v_i\mid i\in\{1,\ldots,n\}\}$, where $\{u_i,v_j\}$ is an edge if and
only if $i\ne j$.  Moreover, let $\cG$ be the class of graphs containing $B_n$
for every $n\in\N$ and all strongly induced subgraphs.  Observe that the
cliques in $\cG$ have size at most 2: $B_n$ is bipartite and thus every clique
in $B_n$ has size at most $2$. In particular, if a clique is a strongly induced
subgraph of $B_n$, then it can be of size at most $2$.  Moreover, the graphs
$B_n$ have neighbor antichains of unbounded size: The set $\{u_1,\ldots,u_n\}$
is a neighbor antichain in $B_n$. 

We currently do not know the exact complexity of $\BSREACH_k(\cG)$. By
\cref{main:k-fixed}, the problem is $\P$-hard and in $\PSPACE$.  Intuitively,
our $\P$ upper bound does not apply because in each node of $\mathcal{R}$, one
would have to remember $n$ bits in order to keep enough information about
commutation of blocks: For a subset $S\subseteq\{1,\ldots,n\}$, let $u_S$ be
the product of all $u_1^+,\ldots,u_n^+$, where we only include $u_i^+$ if $i\in
S$.  Then $u_S v_j^+\equiv v_j^+ u_S$ if and only if $j\notin S$.

\bibliography{ref.bib}

\appendix

\section{Additional material for Section~\ref{sec:scope-bounded}}\label{sec:appendix-scope-bounded}

\subparagraph{Proof of Lemma \ref{mini}}
\begin{proof}
Assume $UIU'$. That is, for all $u \in U$, $u' \in U'$, $uIu'$. Assume that $\neg[(\mini U)I(\mini U')]$. 
Then there is some $v \in \mini(U) \subseteq U, v' \in \min(U')\subseteq U'$ such that $v, v'$ are dependent. This contradicts 
 $UIU'$. Likewise, if $(\mini U)I(\mini U')$ then $u I u'$ for all $u \in \mini(U),u' \in \mini(U')$. Assume $\neg(U I U')$. 
  Consider $u \prec v, v \in U$ and $u' \prec v', v' \in U'$ such that $\neg(v I v')$. The last condition is possible by assumption that  $\neg(U I U')$. 
     Then we have $N(u) \subset N(v), N(u') \subset N(v')$ but $\neg(v I v')$.
 Since $u I u'$ and $u \prec v, u' \prec v'$,  we have $v' \in N(u)$ and $v \in N(u')$. Hence, $v' \in N(v), v\in N(v')$ giving 
 $v I v'$. Since this is true for any $u \prec v, u' \prec v'$, we obtain $v I v'$ for all such $v, v'$, 
 contradicting   $\neg(U I U')$. Hence we have $U I U'$.
\end{proof}

\section{Additional material for Section~\ref{sec:main-results}}\label{sec:appendix-main-results}

\xsubparagraph{Counterexample with neighbor antichain boundedness} Consider the
bipartite graph $B_n$ with nodes $\{u_i,v_i\mid i\in\{1,\ldots,n\}\}$, where
$\{u_i,v_j\}$ is an edge if and only if $i\ne j$.  Moreover, let $\cG$ be the
class of graphs containing $B_n$ for every $n\in\N$ and all strongly induced
subgraphs.  Observe that the cliques in $\cG$ have size at most 2: $B_n$ is
bipartite and thus contains cliques of size at most $2$. In particular, if a
clique is a strongly induced subgraph of $B_n$, then it can be of size at most
$2$.  Moreover, the graphs $B_n$ have neighbor antichains of unbounded size:
The set $\{u_1,\ldots,u_n\}$ is a neighbor antichain in $B_n$. 

We currently do not know the exact complexity of $\BSREACH_k(\cG)$. By
\cref{main:k-fixed}, the problem is $\P$-hard and in $\PSPACE$.  Intuitively,
our $\P$ upper bound does not apply because in each node of $\mathcal{R}$, one
would have to remember $n$ bits in order to keep enough information about
commutation of blocks: For a subset $S\subseteq\{1,\ldots,n\}$, let $u_S$ be
the product of all $u_1^+,\ldots,u_n^+$, where we only include $u_i^+$ if $i\in
S$.  Then we have $u_S v_j^-\equiv v_j^+ u_S$ if and only if $j\notin S$.

\section{Additional material for Section~\ref{sec:block-decompositions}}\label{sec:appendix-block-decompositions}
In this section, we provide detailed proofs for \cref{block-decompositions-freely-reducible} and \cref{block-decomposition-few-blocks}.

\subparagraph{Block decompositions and free reductions}
We begin with \cref{block-decompositions-freely-reducible}. For this (and for
\cref{block-decomposition-few-blocks}), we need a simple lemma that allows us
to conclude independencies from relations $R_\pi$.

\begin{lemma}\label{R-vs-I}
	Suppose $\pi$ is a reduction for $w\stackrel{*}{\mapsto_{red}} \varepsilon$.
	If $i,j,k,\ell$ are positions in $w$ such that $i R_\pi j$ and $k R_\pi \ell$
	with $i<k<j<\ell$, then $w[k] I w[j]$.
\end{lemma}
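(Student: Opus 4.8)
The plan is to track the left-to-right order of the four marked positions as the reduction $\pi$ is carried out, using two elementary observations about the rewriting rules. \emph{Observation 1 (shared vertex):} if two positions $a,b$ satisfy $a R_\pi b$, they are removed together by an application of \textbf{R1} or \textbf{R2}, so $w[a]$ and $w[b]$ are the operations $o^+$ and $o^-$ of one common vertex $o$; consequently, for any position $c$ whose underlying vertex differs from $o$, we have $w[c] I w[a]$ if and only if $w[c] I w[b]$, since in both cases this amounts to the vertex of $c$ being $I$-related to $o$. \emph{Observation 2 (order changes):} of the three rules, only \textbf{R3} can alter the relative order of two positions that are both still present, and it may swap two adjacent positions $x,y$ only when $w[x] I w[y]$; hence if the relative order of two surviving positions flips at some point of $\pi$, those two positions must at that moment be adjacent and independent. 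Note that removals by \textbf{R1}/\textbf{R2} and swaps of other pairs by \textbf{R3} leave the relative order of a fixed pair untouched, so a flip for a pair forces a direct swap of exactly that pair.

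With these in hand I would split on which of the two pairs cancels first in $\pi$. Suppose $(i,j)$ cancels first. At that moment $i$ and $j$ are adjacent, while $k$ (which lay strictly between them initially and cancels only later, with $\ell$) is still present; so $k$ now lies either to the right of $j$ or to the left of $i$. In the first case the order of $k$ and $j$ has flipped (from $k<j$ to $j<k$), so Observation 2 yields $w[k] I w[j]$ directly. In the second case the order of $k$ and $i$ has flipped, giving $w[k] I w[i]$; since $i R_\pi j$, Observation 1 upgrades this to $w[k] I w[j]$. The case where $(k,\ell)$ cancels first is symmetric: at that moment $j$ is still present and lies either to the left of $k$ (flip of $j,k$ gives the claim directly) or to the right of $\ell$ (flip of $j,\ell$ gives $w[j] I w[\ell]$, upgraded through $k R_\pi \ell$). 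In all four subcases both positions whose order flips are simultaneously present on the relevant time interval, so Observation 2 applies legitimately.

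The step I expect to require the most care is the degenerate instance of Observation 1 in which the crossing positions sit on a single \emph{looped} vertex: there $w[k]$ and $w[j]$ could be literally the same operation (e.g. $o^+o^-o^-o^+$ with $o$ looped realizes the crossing pattern $i<k<j<\ell$ with $w[k]=w[j]=o^-$), for which $I$ fails by the distinctness clause in its definition. I would address this by checking that in the regime where the lemma is actually invoked—namely to justify commuting independent blocks in the free reduction—the crossing operations are genuinely distinct, equivalently by restricting attention to reductions that never swap two operations of one and the same vertex via \textbf{R3}; under either restriction the vertex-transfer of Observation 1 is always sound and the case analysis above goes through verbatim.
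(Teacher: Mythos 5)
Your argument is correct in substance, and it runs in the opposite direction from the paper's. The paper proceeds by contradiction: assuming $w[k]$ and $w[j]$ are dependent (its proof actually writes $w[\ell]$, an apparent typo), it claims by induction on the steps of $\pi$ that all four positions then keep their original relative order, so neither crossing pair can ever become adjacent and cancel, contradicting that $\pi$ ends in $\varepsilon$. You instead run $\pi$ forward to whichever of the two cancellations happens first and extract the independence from the forced order flip, using exactly the two facts the paper's induction also rests on (only \textbf{R3} changes the relative order of surviving positions, and it requires independence), plus an explicit vertex-transfer step across the cancelling pair. The proofs are cousins, but yours makes explicit a step the paper leaves implicit: ruling out that $w[k]$ escapes by swapping past $w[i]$ rather than past $w[j]$ is precisely your Observation~1, and the paper's blanket claim that none of the four letters can move past each other silently depends on it.

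More importantly, the degenerate case you flag is not mere caution: it is a genuine counterexample to \cref{R-vs-I} as stated, which the paper's proof overlooks. Take $o$ looped and $w=o^+o^-o^-o^+$. The reduction that first \textbf{R3}-swaps positions $1,2$ (legal, since $o^+ I o^-$ because $oIo$ and the letters are distinct), then cancels positions $1,3$ by \textbf{R1} and positions $2,4$ by \textbf{R2}, gives $1\,R_\pi\,3$ and $2\,R_\pi\,4$ in crossing position, yet $w[2]=w[3]=o^-$ and $o^-Io^-$ fails by the distinctness clause in the definition of $I$; here $w[i]$ does move past $w[k]$, so the paper's induction claim is false. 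The pattern even survives greediness: embed it as $w=o^+a^+o^-o^-a^-o^+$ with $a$ independent of $o$; every context is already irreducible, and a (vacuously greedy) reduction can pair $1$ with $4$ and $3$ with $6$. What your case analysis actually proves, unconditionally, is the correct statement: either $w[k] I w[j]$, or $w[k]=w[j]$ is a letter of a single looped vertex. That weakening is the cleanest patch, since equal letters still commute in $\bbM\Gamma$, which is what the downstream commutation arguments need; your alternative fix (forbidding \textbf{R3} swaps within one vertex) also restores the lemma, but its without-loss-of-generality status would itself require proof.
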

\begin{proof}
	Towards a contradiction, suppose that $w[k]$ and $w[\ell]$ are
	dependent.  Then for every word $w'$ obtained during the reduction
	$\pi$, there are positions $i'<k'<j'<\ell'$ with $w'[i']=w[i]$,
	$w'[k']=w[k]$, $w'[j']=w[j]$, and $w'[\ell']=w[\ell]$. This follows
	directly from the definition of reductions by induction on the number
	of reduction steps: In this situation, we can never move any of the
	letters $w[i],w[k],w[j],w[\ell]$ past each other and hence we can
	never cancel any of them as prescribed in $R_\pi$. Thus, every word
	produced during $\pi$ has at least length $4$, in contradiction to the
	fact that $\pi$ produces $\varepsilon$.
\end{proof}

With \cref{R-vs-I}, we can now prove:
\blockDecompositionsFreelyReducible*
\begin{proof}
	Note that the ``if'' direction is easy to see: If the sequence
	$w_1,\ldots,w_m$ is freely reducible, then there clearly exists a
	reduction $\pi$ so that $w_1,\ldots,w_m$ is a block decomposition with
	respect to $\pi$: We just take the reduction that cancels as the free
	reduction dictates.

	For the converse, suppose there is a reduction $\pi$ of $w_1\cdots w_m$
	such that $w_1,\ldots,w_m$ is a block decomposition with respect to
	$\pi$. Then for each $w_i$, there is a block $w_j$ such that every
	position in $w_i$ cancels either with a letter in $w_i$ or with a
	letter in $w_j$.  In the first step, we consider all those position
	pairs in $w_i$ that are canceled with each other by $\pi$ and cancel
	them using the rule \textbf{F3} to obtain a word $\hat{w}_i$. Now for
	the sequence $\hat{w}_1,\ldots,\hat{w}_m$, we have: (i)~every
	$\hat{w}_i$ is dependent and (ii)~for each $\hat{w}_i$, there is a word
	$\hat{w}_j$ such that every position in $\hat{w}_i$ cancels with a
	position in $\hat{w}_j$ (and vice versa).  A sequence of words for
	which a reduction with properties (i) and (ii) exists will be called
	\emph{block-like}.

	We now prove by induction on $m$ that for every block-like sequence
	$\hat{w}_1,\ldots,\hat{w}_m$, there exists a free reduction (using only
	rules \textbf{F1} and \textbf{F2}). For $m=0$, this is trivial, so
	assume $m>0$.  The property (ii) allows us to define a binary relation
	$\hat{R}$ on $\{1,\ldots,m\}$: We have $i\hat{R}j$ if all the letters
	in $\hat{w}_i$ are $R_\pi$-related to letters in $\hat{w}_j$. We now
	call a pair $(i,j)\in \hat{R}$ with $i<j$ \emph{minimal} if there is no
	other pair $(i',j')\in\hat{R}$ with $i'<j'$ such that $i<i'<j'<j$.

	Clearly, there must be some minimal pair $(i,j)\in\hat{R}$. Now for
	every $\hat{w}_k$ with $i<k<j$, the index $\ell$ with
	$(k,\ell)\in\hat{R}$ must either satisfy $\ell<i$ or $j<\ell$: Any
	other arrangement would contradict minimality of $(i,j)$. By
	\cref{R-vs-I}, this implies that all the letters in $\hat{w}_k$ are
	independent with the letters in $\hat{w}_i$ and $\hat{w}_j$.
	Therefore, we can apply the following sequence of rules:
	\begin{enumerate}
		\item For every $\hat{w}_k$ with $i<k<j$: Move $\hat{w}_k$ past $\hat{w}_j$ using \textbf{FR2}.
		We do this starting with the rightmost such $\hat{w}_k$. We arrive at the sequence 
		\[\hat{w}_1,\ldots,\hat{w}_i,\hat{w}_j,\hat{w}_{i+1},\ldots,\hat{w}_{j-1},\hat{w}_{j+1},\ldots,\hat{w}_m.\]
	\item We can now apply \textbf{FR1} to eliminate $\hat{w}_i,\hat{w}_j$. We thus have the sequence
		\begin{equation}\hat{w}_1,\ldots,\hat{w}_{i+1},\ldots,\hat{w}_{j-1},\hat{w}_{j+1},\ldots,\hat{w}_m,\label{eq-block-to-free-reduction}\end{equation}
		which differs from $\hat{w}_1,\ldots,\hat{w}_m$ only in the removal of $\hat{w}_i$ and $\hat{w}_j$.
	\end{enumerate}
	Now the new sequence \eqref{eq-block-to-free-reduction} is still
	block-like: The reduction $\pi$ that witnesses block-likeness of
	$\hat{w}_1,\ldots,\hat{w}_m$ can clearly be turned into one for
	\eqref{eq-block-to-free-reduction}, because the latter is just missing
	a pair of words that were related by $R_\pi$. Thus, by induction, there
	is a free reduction for \eqref{eq-block-to-free-reduction} and hence
	also for $\hat{w}_1,\ldots,\hat{w}_m$.
\end{proof}

\subparagraph{The number of blocks per context}
We now prove \cref{block-decomposition-few-blocks}: 
\blockDecompositionFewBlocks*
For the proof of \cref{block-decomposition-few-blocks}, given a word $w$ with a decomposition $w=w_1\cdots w_n$ into contexts, we associate with any reduction
$\pi\colon w\stackrel{*}{\mapsto_{red}} \varepsilon$ a refining decomposition. This is the decomposition \emph{induced by $\pi$}, which is defined as follows.

Consider a computation $w \in \mathcal{O}^+$ (that is, $|w|>0$) and a decomposition $w=w_1\cdots w_n$ into contexts. 
Let $\pi\colon w \stackrel{*}{\mapsto_{red}} \varepsilon$ be a reduction that transforms $w$ into $\varepsilon$.

The  \emph{decomposition of $w$ induced by $\pi$} is obtained as follows. 
Parse each context from left to right. 
Suppose we have a context $w_i = w_i[1]w_i[2] \dots w_i[m]$.
Each factor in our new decomposition begins at the leftmost position of a context. 
If the symbol $w_i[k]$ at position $k$ cancels with a symbol $w_j[\ell]$ (that is, $w_i[k] R_{\pi} w_j[\ell]$) in some context $w_j$ 
with $j\ne i$ and no other position in the current factor has canceled with a symbol in context $w_j$, then we start a new factor from position $w_i[k]$. Otherwise, we include $w_i[k]$ in the current factor.

Figure \ref{decomp} illustrates (i)~the decomposition of a computation  over the graph $\overline{\Gamma}_2$ into contexts and (ii)~the decomposition induced by $\pi$. 

Note that this induced decomposition can be defined for any $w\in X_\Gamma^*$
with $w\stackrel{*}{\mapsto_{red}} \varepsilon$. However, in general, it does
not guarantee that the number of factors in a context would be bounded.  For
example, if we have two adjacent vertices $a,b$, then for every $m\ge 0$, the
word $(a^+b^+b^-)^m (a^-)^m$ decomposes into $m+1$ contexts: $m$ times the
context $a^+b^+b^-$ and once the context $(a^-)^m$. However, the last context
decomposes into $m$ factors of the form $a^-$.
\smallskip

Therefore, for \cref{block-decomposition-few-blocks}, we need to establish two
facts, under the assumption that $\pi$ is a reduction witnessing $k$-scope
boundedness of $w$: (i)~the decomposition induced by $\pi$ is a block
decomposition and (ii)~this block decomposition splits each context into at
most $2k$ blocks.  Compared to the block decomposition given in
\cite{DBLP:conf/concur/MeyerMZ18}, the main difference here is that the number
of context switches  is unbounded, whereas \cite{DBLP:conf/concur/MeyerMZ18}
considered computations with bounded number of context switches.

\begin{lemma}
	Let $w=w_1\cdots w_n$ be a context decomposition and let $\pi\colon
	w\stackrel{*}{\mapsto_{red}} \varepsilon$ be a greedy reduction. Then
	the decomposition of $w$ induced by $\pi$ has the following property:
	For any two contexts $w_i$ and $w_j$,    there is at most one factor in
	$w_i$ that is $R_{\pi}$-related to a factor of $w_j$.  \label{lem2}
\end{lemma}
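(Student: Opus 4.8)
The plan is to argue by contradiction. Suppose context $w_i$ contains two distinct factors $F_1$ and $F_2$, with $F_1$ to the left of $F_2$, each $R_\pi$-related to a factor of $w_j$. Recall that in the decomposition induced by $\pi$ a new factor is opened exactly when we scan a symbol cancelling with a context \emph{not yet met} inside the current factor. Hence $F_2$ was opened because of a symbol cancelling with $w_j$, while the factor immediately preceding $F_2$ has no position cancelling with $w_j$. Since $F_1$ \emph{does} cancel with $w_j$, this preceding factor differs from $F_1$, so there is a factor $F'$ lying strictly between $F_1$ and $F_2$ whose opening cancellation is with some context $w_{j'}$ with $w_{j'}\neq w_j$ and $w_{j'}\neq w_i$.

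Next I would fix witnessing positions. Choose $p_1\in F_1$, $p'\in F'$, $p_2\in F_2$ with $R_\pi$-partners $q_1,q_2\in w_j$ and $q'\in w_{j'}$. Since $F_1,F',F_2$ are contiguous and appear in this order, $p_1<p'<p_2$. Assume without loss of generality $i<j$ (the case $i>j$ is symmetric), so every position of $w_i$ precedes every position of $w_j$. I then split into three cases according to $j'$, and in each case the context order pins down the linear order of the four relevant positions so that the chord $p'\,R_\pi\,q'$ crosses a chord into $w_j$: in case $j'<i$ one gets $q'<p_1<p'<q_1$; in case $i<j'<j$ one gets $p'<p_2<q'<q_2$; and in case $j<j'$ one gets $p_1<p'<q_1<q'$. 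Applying \cref{R-vs-I} to the crossing pair yields an independence $w[a]\,I\,w[b]$ (namely $w[p_1]\,I\,w[p']$, $w[p_2]\,I\,w[q']$, and $w[p']\,I\,w[q_1]$ respectively). Using that $x\,R_\pi\,y$ forces $w[x]$ and $w[y]$ to share the same underlying vertex, I substitute the partner of the position lying outside $w_i$ (replace $q'$ by $p'$, or $q_1$ by $p_1$) and conclude that two positions of the \emph{same} context $w_i$ have independent underlying vertices. Since all operations of a context form a dependent set, this is the desired contradiction.

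The step I expect to be the main obstacle is precisely this last contradiction in the presence of looped vertices: two distinct positions of one context can be independent after all, namely if they carry $u^+$ and $u^-$ for a looped $u$. To handle this I would exploit that the statement uses the \emph{lifted} relation on whole factors, not a single pair of cancelling positions: a factor $F_1$ related to a factor $G_1$ means the two blocks cancel completely in the nested pattern, so I may take the witnessing $p_1,p',p_2$ to be the first symbols of their blocks, whose partners are the last symbols of the matched blocks. I expect that this rigidity, together with greediness (which forces each context to be reduced internally before any cross-context cancellation, so that only ``surviving'' positions take part in the crossings above), rules out the degenerate looped coincidence, so that the independence derived really does contradict context-dependence. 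Establishing this final point carefully is the technical heart of the argument; the remaining items (the symmetric case $i>j$, and checking that $F'$ genuinely supplies a cancellation with a context distinct from $w_j$ and $w_i$) are routine bookkeeping.
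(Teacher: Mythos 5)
Your overall strategy---produce a third factor between the two offending ones, extract crossing chords, and apply \cref{R-vs-I}---is the same skeleton as the paper's proof, and your bookkeeping (the existence of $F'$ with $w_{j'}\neq w_i,w_j$, and the three-way case analysis on the position of $j'$) is correct. But the gap you flagged at the end is not a deferrable technicality; it is precisely the point where your argument fails and where the paper's proof does something genuinely different. The contradiction you want---``two positions of the same context $w_i$ have independent underlying vertices''---is no contradiction at all when those two positions carry $v^+$ and $v^-$ of a \emph{looped} vertex $v$: a dependent set only forbids \emph{distinct} vertices $u_1,u_2$ with $u_1Iu_2$, while $v^+Iv^-$ holds for looped $v$ with both letters legitimately occurring in one context. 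After your substitution step (replacing $q'$ by $p'$, or $q_1$ by $p_1$), nothing prevents the two surviving positions of $w_i$ from being exactly such a $v^+/v^-$ pair, and your proposed repair---taking block-initial witnesses and appealing to the nested lifting of $R_\pi$ together with greediness---is stated as an expectation, not proved; there is no evident reason the nesting rules out this coincidence.

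The paper closes this loophole by a mechanism your proof lacks: it first uses greediness to replace each context $w_i$ by the word $\hat{w}_i$ obtained by deleting all internally cancelling positions, so that each $\hat{w}_i$ is \emph{irreducible}, and it then applies \cref{R-vs-I} only to \emph{adjacent} positions. Concretely, every position between $f_1$ and $f_2$ is classified as ``inner'' (its partner lies in a context strictly between $\hat{w}_i$ and $\hat{w}_j$) or ``outer'' (partner before $\hat{w}_i$ or after $\hat{w}_j$); \cref{R-vs-I} shows outer positions commute with all of $f_1$ and inner positions with all of $f_2$, so the position immediately right of $f_1$ must be inner, the one immediately left of $f_2$ must be outer, and hence some inner position sits directly left of an outer one, which \cref{R-vs-I} again makes independent. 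Inside a context, independence does force the looped $v^+/v^-$ configuration---but for an \emph{adjacent} pair this is exactly an applicable \textbf{R1}/\textbf{R2} step, contradicting irreducibility of $\hat{w}_i$. Adjacency is the lever that turns the looped-vertex escape hatch into a violation of irreducibility; your witnesses are non-adjacent, so you have no such lever. To complete your proof you would need to reproduce this reduction to irreducible contexts and engineer the crossing so that the two independent positions of $w_i$ are neighbours.
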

\begin{proof}

	If every position in $w_i$ cancels with a position in $w_i$, then $w_i$
	has only one factor, so this case is trivial. If however, there is a
	position in $w_i$ that cancels with another context, then for each
	factor $f$ in $w_i$ from our decomposition, there is a uniquely
	determined context $w_j$, $j\ne i$, such that every position in $f$
	cancels either with a position in $w_i$ or with a position in $w_j$.
	In this situation, we simply say that $f$ \emph{cancels with} $w_j$.

\begin{figure}[t]
\begin{center}
\includegraphics[scale=0.06]{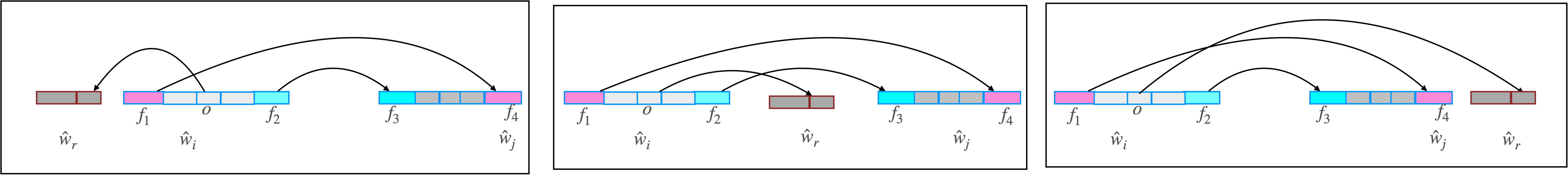}	
\end{center}
\caption{Possible cases for matching in \cref{lem2}}
\label{fig:oneblock}

\end{figure}

	We consider for each context $w_i$ the word $\hat{w}_i$ obtained from
	$w_i$ by removing all positions that cancel with a position in $w_i$
	itself. Since $\pi$ is greedy, we know that $\hat{w}_i$ is irreducible.
	Moreover, the decomposition $\hat{w}_1\cdots\hat{w}_n$ is a
	decomposition into contexts.

	Now $\pi$ yields a reduction $\hat{\pi}$ on the decomposition
	$\hat{w}_1,\ldots,\hat{w}_n$. If we can show our claim for
	$\hat{w}_1,\ldots,\hat{w}_n$ instead of $w_1,\ldots,w_n$, then this
	implies the lemma: If there are two distinct factors in $w_i$ that both
	cancel with $w_j$, then this yields the same situation in $\hat{w}_i$
	and $\hat{w}_j$.

	Toward a contradiction, suppose there were two distinct factors $f_1$
	and $f_2$ in $\hat{w}_i$ that both cancel with $\hat{w}_j$, but there
	is a factor in between them. 

	We begin with some terminology. Consider a position with letter $o$
	between $f_1$ and $f_2$. By assumption, it cancels with some context
	$\hat{w}_r$, $r\ne j$. See \cref{fig:oneblock}. If $i<r<j$, then we
	call this position an \emph{inner position} (because it cancels with a
	context between $\hat{w}_i$ and $\hat{w}_j$). In \cref{fig:oneblock},
	this is the case shown in the middle. If $r<i$ or $j<r$, then we call
	this position an \emph{outer position}. In \cref{fig:oneblock}, these
	are the cases on the left and on the right.

	According to \cref{R-vs-I}, every inner position must commute with
	every letter in $f_2$. Likewise, every outer position must commute with
	every letter in $f_1$. However, commutation between two positions
	inside a context is only possible if both letters belong to the same
	vertex $v$ and one letter is $v^+$ and the other is $v^-$. Therefore,
	the position immediately to the right of $f_1$ must be an inner
	position: Otherwise, $\hat{w}_i$ would not be irreducible.  Likewise,
	the position immediately to the left of $f_2$ must be an outer
	position.  However, that means that somewhere between $f_1$ and $f_2$,
	there must be an inner position directly to the left of an outer
	position. Then, however, \cref{R-vs-I} implies that these two
	neighboring positions must commute, which again means that they can
	cancel directly. This is in contradiction to irreducibility of
	$\hat{w}_i$. 
\end{proof}

\begin{lemma}\label{induced-is-block-decomposition}
	Let $w=w_1\cdots w_n$ be a context decomposition and let $\pi\colon w\stackrel{*}{\mapsto_{red}} \varepsilon$ be a reduction. Then the decomposition induced by $\pi$ is a block decomposition.
\end{lemma}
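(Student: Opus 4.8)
The plan is to check the two defining properties of a block decomposition directly for the decomposition $D$ induced by $\pi$. That $D$ refines the canonical context decomposition is immediate, since the construction scans one context at a time and every factor begins inside a single context, so no factor crosses a context boundary. It then remains to show that for each factor (block) $f$ there is one factor $f'$ such that every position of $f$ is $R_\pi$-related to a position of $f$ itself or of $f'$. I would split the cancellations of $f$ into those staying inside its own context and those leaving it.

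First I would record a structural feature of the construction: \emph{each factor meets at most one foreign context}. A new factor is opened precisely at a position that cancels with a context not yet seen by the current factor; hence, once a factor has interacted with some $w_j$, any position canceling with a different foreign context opens a fresh factor. So $f$ cancels either only inside its own context $w_i$, or in addition with a single foreign context $w_{j(f)}$. Applying \cref{lem2} to the ordered pair $(w_i, w_{j(f)})$ in both directions shows that the interaction between these two contexts is carried by exactly one factor on each side; thus all foreign cancellations of $f$ land in a single factor $f'$ of $w_{j(f)}$.

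The remaining and central point is that the \emph{within-context} cancellations of $f$ stay inside $f$; combined with the previous step this makes $f'$ (or $f$ itself, in the purely internal case) the unique partner. Here I would use that $\pi$ is greedy. By greediness, any two positions $p<p'$ lying in the same context and canceling with each other are cancelled during the initial phase that uses only \textbf{R1} and \textbf{R2} (afterwards each context is irreducible and no two of its surviving positions can be brought adjacent). Since that phase never applies the commutation rule \textbf{R3}, making $p$ and $p'$ adjacent forces every position strictly between them to have been cancelled internally beforehand. In particular no position between $p$ and $p'$ cancels with a foreign context, so no new factor is opened between them and $p,p'$ lie in the same factor, as required.

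I expect the greediness step to be the main obstacle, and the subtle point is that it is genuinely needed: the statement fails for arbitrary reductions. For instance, over the graph with two looped adjacent vertices $o,c$, the word $o^+o^+o^-c^+c^-o^-$ has contexts $o^+o^+o^-$, $c^+c^-$, $o^-$, and admits a non-greedy reduction that commutes the middle $o^+$ to the right, pairing the first $o^+$ with the internal $o^-$ and the middle $o^+$ with the final $o^-$. The induced block $\{o^+,o^-\}$ carved out of the first context then has one position canceling internally and one canceling with the last context, i.e. two distinct partner blocks. Forbidding exactly this kind of internal commutation is what the restriction to \textbf{R1}/\textbf{R2} in the greedy phase buys, and it is the hinge of the argument. (\cref{R-vs-I} and \cref{block-decompositions-freely-reducible} are available as alternative tools, but the greedy bookkeeping above seems the most direct route.)
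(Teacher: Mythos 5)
Your proof is correct in substance and, at its core, takes the same route as the paper's own proof: the paper likewise first observes that, by the very construction of the induced decomposition, all foreign cancellations of a factor go to a single foreign context (otherwise a new factor would have been opened mid-way), and then invokes \cref{lem2} to conclude that they land in a single factor of that context. Where you genuinely go beyond the paper is on the two remaining points. First, the paper derives its contradiction only from a factor $f$ in $w_i$ meeting two distinct factors in contexts $w_j,w_k$ with $i\notin\{j,k\}$; it never addresses the possibility that a position of $f$ cancels with a position of a \emph{different factor of the same context} $w_i$, which would equally violate the block property. Your step on within-context cancellations closes exactly this hole. Second, your remark that greediness is indispensable is correct and worth making explicit: as literally stated (for an arbitrary reduction $\pi$) the lemma is false, and your counterexample is valid --- over two adjacent vertices $o,c$ with $o$ looped, the word $o^+o^+o^-c^+c^-o^-$ with the matching $\{1,3\},\{2,6\},\{4,5\}$ induces the factors $\{1\},\{2,3\}$ in the first context, and $\{2,3\}$ has two distinct partners. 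The paper's proof silently restricts to the greedy case: it takes $\pi$ to be a witness of $\mathsf{sc}(w)\le k$, which is greedy by definition, and \cref{lem2} itself requires greediness.

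One soft spot in your write-up: the parenthetical claim that after the greedy phase ``no two surviving positions of a context can be brought adjacent'' (i.e.\ can cancel with each other) is asserted rather than proved, and it is not obvious --- for a looped vertex $o$ the letters $o^+$ and $o^-$ do commute, so letters inside a context can in principle move past one another. The claim is true, but it needs an argument, for instance: take an innermost surviving same-context pair $(c,d)$ with $c\,R_\pi\,d$; by innermostness every surviving position strictly between $c$ and $d$ has its partner outside $[c,d]$, so by \cref{R-vs-I} its letter must be independent of $w[c]$ or $w[d]$, which inside a context forces it to carry the same looped vertex as $c$ and $d$; hence the stretch of the irreducible context from $c$ to $d$ is a word over a single looped vertex, so all its letters have the same sign, contradicting the fact that $c$ and $d$ carry inverse letters. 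To be fair, the paper takes the very same fact for granted inside the proof of \cref{lem2} (where $\hat w_i$, defined by deleting \emph{all} internally cancelling positions, is declared irreducible ``since $\pi$ is greedy''), so your proposal is at the paper's level of rigour here; but a self-contained proof should include this argument.
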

\begin{proof}
	Suppose $\sc(w)\le k$ and let $\pi\colon w\stackrel{*}{\mapsto_{red}}
	\varepsilon$ be a reduction that witnesses $\sc(w)\le k$.  We first
	show that the decomposition induced by $\pi$ is indeed a block
	decomposition.  Towards a contradiction, suppose there is a factor $f$
	in a context $w_i$ such that there exist distinct factors $g$ and $h$
	in contexts $w_j$ and $w_k$, respectively, such that $i\notin\{j,k\}$
	and (a)~some position in $f$ cancels with a position in $g$ and
	(b)~some position in $f$ cancels with a position in $h$. Then by
	definition of the induced decomposition, we must have $j=k$, because
	otherwise we would have started a new factor mid-way in $f$.  But
	this implies that $w_j$ has two distinct factors ($g$ and $h$) that
	each cancel with some position in $w_i$, which contradicts \cref{lem2}.
\end{proof}

Because of \cref{induced-is-block-decomposition}, from now on, we also speak of the \emph{block decomposition induced by $\pi$}.

We are now ready to prove \cref{block-decomposition-few-blocks}.
\begin{proof}[Proof of \cref{block-decomposition-few-blocks}]
	Suppose $\sc(w)\le k$, let $w=w_1\cdots w_n$ be decomposition into
	contexts, and let $\pi\colon w\stackrel{*}{\mapsto_{red}} \varepsilon$
	be a reduction witnessing $\sc(w)\le k$.  We show that the block
	decomposition induced by $\pi$ splits each context into at most $2k$
	blocks.
	
	Consider a context $w_i$. Let $\{b_1,\ldots,b_r\}$ the set of blocks
	in $w_i$. If each position in $w_i$ cancels with a position in $w_i$,
	then $w_i$ consists of a single block.  Thus, we assume that at least
	one position in $w_i$ cancels with a position in some other context.
	In that case, for each block $b_\ell$ in $w_i$, there is a unique
	context $w_{f(\ell)}$ such that every position in $b_\ell$ cancels
	with some position in $w_i$ or in $w_{f(\ell)}$. By \cref{lem2}, we
	know that the function $f\colon\{1,\ldots,\ell\}\to\{1,\ldots,n\}$ is
	injective.  Moreover, since $\pi$ is a witness for $k$-scope
	boundedness, we know that each context $w_{f(\ell)}$ has interaction
	distance at most $k$ from $w_i$. Moreover, $w_{f(\ell)}$ must belong
	to the same weak dependence class as $w_i$. Since there are at most
	$2k$ blocks at interaction distance $k$ in the same weak dependence
	class as $w_i$, we know that the image $\{f(1),\ldots,f(r)\}$ has at
	most $2k$ elements. Since $f$ is injective, this implies $r\le 2k$.
\end{proof}

\section{Additional material for Section~\ref{sec:decision-procedure}}\label{sec:appendix-decision-procedure}

\begin{algorithm}
	\caption{$\mathsf{BSREACH}(\Gamma,k)$ : $k$-scope bound reachability for a Valence System}
	\begin{algorithmic}[1]
		\Require  
		Valence System $\cA$ over a graph monoid  $\mathbb{M}\Gamma$, 
		scope bound $k$.  
		
		\Ensure ACCEPT if there a $k$ scope bounded computation $w \in X^{*}_{\Gamma}$, else REJECT. 
		
		\State $q \gets q_{init}$
		
		\State $o \gets \phi$

		\State $class \gets \phi$
		
		\For{$x \in \{1, \cdots, d\}$, where $d = |[~]_{\sim_W}|$}		
			\State $s_{\gamma_x} \gets s_{init} = \left((\mathcal{C}_1, \varepsilon, \varepsilon), (\mathcal{C}_2, \varepsilon,\varepsilon),\cdots, (\mathcal{C}_k, \varepsilon,\varepsilon)\right)$, where $\mathcal{C}_c = (\mathcal{E},  \mathcal{E}, \cdots,\mathcal{E})$
		\EndFor

		\State \textcolor{red}{/* Initialize the configuration  */} 
		
		\While{True}
		
			\State Guess an equivalence class $x \in \{1, \cdots, d\}$, where $d = |[~]_{\sim_W}|$
			
			\State $\forall i \in \{1, \cdots, 2k\}$ $q^c_i \gets$ Guess end state for each block abstraction $i$
			
			\State $\forall i \in \{1, \cdots, 2k\}$ Guess $\mini(B_i), \maxi(B_i)$ such that
			
			\State $~~~~~\forall i, j$, $\maxi(B_i)$ and $\maxi(B_j)$ are dependent

			\textcolor{red}{/* Symbols of blocks $B_1, \dots, B_{2k}$ mutually dependent */}

			\State $\forall i \in \{1, \cdots, 2k\}$ Guess $f^c_i \in V$, the start symbol for each block
			
			\If{$\neg (f^c_1 ~ I ~ o)$} \textbf{REJECT}
			\EndIf
			
			\State $\forall i \in \{1, \cdots, 2k\}$ Guess $o^c_i \in V$, a symbol that necessarily appears in the block 
			
			\State Guess $o_c \in \{o^c_1, \cdots, o^c_{2k}\}$ which appears in context $\mathcal{C}_c$ 

			\textcolor{red}{/* $o_c$ is used to check independence with the first symbol of next context */}

			\State Thus, form the next context $\mathcal{C}_c = (N^c_1, \cdots, N^c_{2k}, f_c, o_c)$ for $\gamma_x$
			
			where $q^c_0 \gets q$ \textcolor{red}{/* start from the last state of previous context */}

			$f_c = f^c_1$
			
			and $N^c_i = (q^c_{i-1}, q^c_i, f^c_i, o^c_i, \mini(B_i), \maxi(B_i))$ $\forall i \in \{1, \cdots, 2k\}$

			\State Guess a sequence of free reduction operations $\rightsquigarrow_{red}$ (possibly 0 operations), \State to eliminate the leftmost context

			\textcolor{red}{/* This converts $ N^1_1, \cdots, N^1_{2k}, N^2_1, \cdots, N^2_{2k}, \cdots N^k_1, \cdots, N^k_{2k}, N^c_1, \cdots, N^c_{2k}$} 

			\textcolor{red}{	to $ N'^1_1, \cdots, N'^1_{2k}, N'^2_1, \cdots, N'^2_{2k}, \cdots N'^k_1, \cdots, N'^k_{2k}, N'^c_1, \cdots, N'^c_{2k}$ */}

			\If{$\exists i \in \{1, \cdots, 2k\} ~ N'^1_i \neq \mathcal{E}$}
				\textbf{REJECT}
			\EndIf

			\State $q \gets q^c_{2k}$ \textcolor{red}{/* Store last state of the context in $q$ */}
			\State $o \gets o_c$ \textcolor{red}{/* Store some seen symbol from this context in $o$ */}
			\State $class \gets x$
			
			\State $s_{\gamma_x} \gets ((\mathcal{C}'_2, f_2, o_2) \cdots, (\mathcal{C}'_k, f_k,o_k), (\mathcal{C}'_c, f_c, o_c))$ 
			
			\If{$\forall \gamma \in [~]_{\sim_W} \ \forall c \in \{1, \cdots, k\} \ \forall i \in \{1, \cdots, 2k\} \ N^c_i = \mathcal{E}$ and $q = q_{fin}$}
				\State \textbf{ACCEPT}
			\EndIf

		\EndWhile
		
	\end{algorithmic}
	\label{algo1}
	\end{algorithm}

\subsection{Proof of Lemma~\ref{check-blocks-cancellable}}
\label{app:check-blocks-cancellable}
Given a context abstraction $\mathcal{C}=(N_{bl}^1, \dots, N_{bl}^{2k}, f, o)$ where each $N_{bl}^i$ has the form \\
  $(q_0^i, q_1^i, f_{bl},o_{bl}, U_i^\mini,U_i^\maxi)$, one can check whether for any pair of block abstractions
  $N_{bl}^i, N_{bl}^j$, they are over mutually dependent symbols. To do this, it suffices to check if  
  $U_i^\maxi$ is dependent with $U_j^\maxi$. That is, each 
  $u \in  U_i^\maxi$ is dependent with each $u'\in U_j^\maxi$. 
  
  \begin{lemma}(Dependence check between block abstractions) 
  Two block abstractions $N_{bl}^i, N_{bl}^j$ are over mutually dependent symbols 
  iff $U_i^\maxi$ is dependent with $U_j^\maxi$.
  \label{lem:dep}
    \end{lemma}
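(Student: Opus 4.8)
The plan is to read the statement at the level of the underlying vertex sets. Write $B_i$ (resp.\ $B_j$) for the dependent set of vertices over which the block $N_{bl}^i$ (resp.\ $N_{bl}^j$) operates, so that $U_i^\maxi=\maxi B_i$ and $U_j^\maxi=\maxi B_j$ are the $\preceq$-maximal vertices recorded in the abstractions. Here ``$N_{bl}^i,N_{bl}^j$ are over mutually dependent symbols'' means that every $u\in B_i$ and every $u'\in B_j$ are dependent, i.e.\ we never have $u I u'$ (equivalently, $B_i\cup B_j$ is a dependent set). Thus the claim to establish is that every cross pair in $B_i\times B_j$ is dependent if and only if every cross pair in $U_i^\maxi\times U_j^\maxi$ is dependent. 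This is the ``maximal'' counterpart of \cref{mini}, which handles independence via $\mini$; the intuition is that the elements of $\maxi B$ have the largest neighborhoods in $I$ and are hence the hardest to keep dependent, so checking them suffices.

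The forward direction is immediate, since $U_i^\maxi\subseteq B_i$ and $U_j^\maxi\subseteq B_j$, so any dependence statement about all of $B_i\times B_j$ restricts to $U_i^\maxi\times U_j^\maxi$. For the converse, the key tool I would isolate first is a domination property: for every $u\in B_i$ there is $w\in\maxi B_i$ with $N(u)\subseteq N(w)$, and symmetrically for $B_j$. This holds because $\preceq$ refines $\le$, that is $u\preceq w$ implies $N(u)\subseteq N(w)$, and, the ground set being finite, every element lies $\preceq$-below some maximal element (or is itself maximal, in which case one takes $w=u$).

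With domination in hand, I would run the following monotonicity argument. Fix arbitrary $u\in B_i$ and $u'\in B_j$, and choose $w\in U_i^\maxi$, $w'\in U_j^\maxi$ with $N(u)\subseteq N(w)$ and $N(u')\subseteq N(w')$. Suppose towards a contradiction that $u I u'$. Then $u'\in N(u)\subseteq N(w)$, so $w I u'$, whence $w\in N(u')\subseteq N(w')$, giving $w I w'$ and contradicting the hypothesis that $w$ and $w'$ are dependent. Therefore $u$ and $u'$ are dependent, and since they were arbitrary, $B_i$ and $B_j$ are over mutually dependent symbols. Combined with the trivial forward direction, this gives the desired equivalence.

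I expect the only genuine subtlety to be the domination property together with keeping the direction of the neighborhood inclusions straight: one must use $\maxi$ rather than $\mini$ precisely because dependence propagates \emph{downward} along $\preceq$ (if $N(u)\subseteq N(w)$ and $w$ is dependent with a vertex, then so is $u$), dual to how independence is governed by $\mini$ in \cref{mini}. Everything else is a short chase through the definitions of $N(\cdot)$, the symmetry of $I$, and $\maxi$.
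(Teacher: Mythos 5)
Your proof is correct and takes essentially the same route as the paper's: both establish the forward direction by restriction ($U_i^\maxi\subseteq B_i$, $U_j^\maxi\subseteq B_j$) and the converse by contraposition, lifting an independent cross pair $u I u'$ along the neighborhood inclusions $N(u)\subseteq N(w)$, $N(u')\subseteq N(w')$ to force an independent pair of $\preceq$-maximal elements, contradicting the dependence of $U_i^\maxi$ with $U_j^\maxi$. The only difference is that you state and justify the domination property (every vertex lies $\preceq$-below some maximal element, by finiteness) explicitly, which the paper uses tacitly.
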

   \begin{proof}
   	Assume that $N_{bl}^i, N_{bl}^j$ are over mutually dependent symbols. 
  Then clearly,  $U_i^\maxi$ is dependent with $U_j^\maxi$. 
  Conversely, assume that $U_i^\maxi$ is dependent with $U_j^\maxi$. 
  Consider $v_i \in U_i^\maxi$, $w_i \in B_i$ such that 
  $w_i <v_i$, and likewise $v_j \in U_j^\maxi$ and $w_j \in B_j$, with $w_j < v_j$. 
  Assume $w_i I w_j$. Then $w_i \in N(v_j), w_j \in N(v_i)$ resulting in 
  $v_i \in N(v_j), v_j \in N(v_i)$. 
  This  contradicts  the mutual dependence of $U_i^\maxi$ and 
  $U_j^\maxi$. 
   \end{proof}

 \subsection*{Checking a free reduction between block abstractions} 
      
\noindent\textbf{Check if two blocks commute}.    With the representation of block abstractions, to check if two block abstractions  $N_u, N_v$ commute, it is enough to check 
     if $U_u^\mini I U_v^\mini$ (Lemma \ref{mini}). 
     \smallskip

     \noindent\textbf{Check if two blocks cancel}. We now discuss how to use block abstractions to check if two blocks $N_u, N_v$ cancel with each other. Given the block abstractions $N_u=(q_1, q_2, f_u, o_u, U_u^\mini, U_u^\maxi)$ and $N_v$=$(q_3, q_4, f_v, o_v, U_v^\mini, U_v^\maxi)$, we first compute 
\begin{align*}
     B_i& =\{x\in V \mid \exists y\in U_i^\maxi\colon x\preceq y\}  \\
     \hat{B}_i& =\{x\in B_i \mid \exists y\in U_i^\mini\colon y\preceq x\}
\end{align*}
for $i\in\{u,v\}$.
and the corresponding alphabets
\begin{align*}
	X_i& =\{o^+,o^-\mid o\in B_i\} \\
	\hat{X}_i& =\{o^+,o^- \mid o\in\hat{B}_i\}
\end{align*}
for $i\in\{u,v\}$.

Then, by definition of cancellation, we have to check whether there are words $w_u,\hat{w}_u,w_v,\hat{w}_v$, such that
\begin{enumerate}
	\item\label{cancellation-a} $w_u$ is read one some path from $q_1$ to $q_2$
	\item \label{cancellation-b}$w_v$ is read one some path from $q_3$ to $q_4$
	\item \label{cancellation-c}$w_i\in X_i^*$, $\hat{w}_i\in \hat{X}_i^*$ for $i\in\{u,v\}$,
	\item \label{cancellation-d}$w_i \stackrel{*}{\mapsto_{red}} \hat{w}_i$ using rules \textbf{R1} and \textbf{R2}, for $i\in\{u,v\}$
	\item \label{cancellation-e}$\hat{w}_u\hat{w}_v\equiv_\Gamma\varepsilon$.
	\item \label{cancellation-f}$o_i$ occurs in $w_i$, for $i\in\{u,v\}$
	\item \label{cancellation-g}$w_i$ begins with $f_i$, for $i\in\{u,v\}$
\end{enumerate}
Note that since $\hat{B}_i\subseteq B_i$, if such words exist, then they
exist such that $\hat{w}_u$ and $\hat{w}_v$ are irreducible. In that case, we
must have $\hat{w}_u,\hat{w}_v\in (\hat{X}_u\cap \hat{X}_v)^*$.

We now construct a PDA $\mathcal{P}_{uv}$ for the language of all words
$w_u\#w_v$ such that words $\hat{w}_u$ and $\hat{w}_v$ exist so that
conditions
\labelcref{cancellation-a,cancellation-b,cancellation-c,cancellation-d,cancellation-e}
are satisfied. This proves that we can decide cancellation in polynomial time:
By intersecting with the regular language
\[ \{ w_u\#w_v \mid \text{$o_i$ occurs in $w_i$, and $w_i$ begins with $f_i$, for $i\in\{u,v\}$}\}, \]
we obtain a PDA that accepts a non-empty set if and only if the two block abstractions cancel.

The PDA $\mathcal{P}_{uv}$ is constructed as follows. Intuitively, we first
read the word $w_u$ and on the stack, we leave the word $\hat{w}_v$. To this end,
the stack alphabet contains $\hat{X}_u\cap\hat{X}_u$ and also separate symbols
$\{(x,i)\mid x\in X_i\setminus (\hat{X}_u\cap \hat{X}_v)\}$ for $i\in\{u,v\}$. Here, the letters
$(x,i)$ are used to simulate internal cancellations in the word $w_i$, $i\in\{u,v\}$.
Note that we have to make sure that $\hat{w}_i$ does not contain any letters from $X_i\setminus \hat{X}_i$. Therefore, while reading $w_v$,
our PDA cannot pop letters $(x,u)$, so that it is guaranteed that 
the word $\hat{w}_u$ contains only letters in $\hat{X}_u$.
Then, after
reading $\#$, the automaton reads $w_v$ and and makes sure that $\hat{w}_v$ cancels with the stack content that existed when it read $\#$.

Its stack alphabet is $\Delta=(\hat{X}_u\cap \hat{X}_v)\cup \{(x,i) \mid x\in X_i\setminus(\hat{X}_u\cap \hat{X}_v),~i\in\{u,v\}\}$.
Its set of states is $Q\times\{u,v\}$, where the states $(q,u)$ are used to read $w_u$ and the states $(q,v)$ are used to read $w_v$.
We have the following transitions:
\begin{itemize}
	\item For every $x\in \hat{B}_u\cap\hat{B}_v$ and $i\in\{u,v\}$:
		\begin{itemize}
			\item For every transition $p\xrightarrow{x^+}q$, we add a transition from $(p,i)$ to $(q,i)$ that pushes $x^+$ and reads $x^+$.
			\item For every transition $p\xrightarrow{x^-}q$, we add a transition from $(p,i)$ to $(q,i)$ that pops $x^+$ and reads $x^-$.
			\item If $xIx$, then for every transition $p\xrightarrow{x^-}q$, we add a transition from $(p,i)$ to $(q,i)$ that pushes $x^-$ and reads $x^-$.
			\item If $xIx$, then for every transition $p\xrightarrow{x^+}q$, we add a transition from $(p,i)$ to $(q,i)$ that pops $x^-$ and reads $x^+$.
		\end{itemize}
	\item For every $x\in B_i\setminus (\hat{B}_u\cap \hat{B}_v)$ and $i\in\{u,v\}$:
		\begin{itemize}
			\item For every transition $p\xrightarrow{x^+}q$, we add a transition from $(p,i)$ to $(q,i)$ that pushes $(x^+,i)$ and reads $x^+$.
			\item For every transition $p\xrightarrow{x^-}q$, we add a transition from $(p,i)$ to $(q,i)$ that pops $(x^+,i)$ and reads $x^-$.
			\item If $xIx$, then for every transition $p\xrightarrow{x^-}q$, we add a transition from $(p,i)$ to $(q,i)$ that pushes $(x^-,i)$ and reads $x^-$.
			\item If $xIx$, then for every transition $p\xrightarrow{x^+}q$, we add a transition from $p$ to $q$ that pops $(x^-,i)$ and reads $x^+$.
		\end{itemize}
	\item We add a transition from $(q_2,u)$ to $(q_3,v)$, reading $\#$.
\end{itemize}
Finally, the initial state of our PDA is $(q_1,u)$ and its final state is $(q_4,v)$.

\subsection{Proof of Proposition~\ref{upper-bound-k-input-simple}}
We now come to the proof of \cref{upper-bound-k-input-simple}:
\upperBoundKInputSimple*
We prove each of the two statements in a separate lemma.
\begin{restatable}{lemma}{upperBoundKInputAntiClique}\label{upper-bound-k-input-anti-clique}
  If $\cG$ only contais anti-cliques, then $\BSREACH(\cG)$ belongs
  to $\P$. 
\end{restatable}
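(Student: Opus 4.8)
The plan is to exploit that over an anti-clique every computation is a single context, which makes scope-boundedness vacuous, and then to reduce the resulting configuration reachability problem to emptiness of a pushdown automaton.

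First I would note that if $\Gamma=(V,I)$ is an anti-clique, then $V$ is itself a dependent set: a dependent set is by definition one containing no distinct $u_1,u_2$ with $u_1Iu_2$, which is exactly the defining property of an anti-clique. Hence every $w\in X_\Gamma^*$ is over a dependent set, so its canonical context decomposition is the single context $w$ itself. From this I would deduce that scope-boundedness carries no information: fix any $k\ge 1$; if $w\equiv_\Gamma\varepsilon$ then $w$ has a greedy reduction $\pi$, and since $w$ is a single context there are no contexts strictly between any two $R_\pi$-related positions, so the scope condition holds trivially and $w$ is $k$-scoped. Conversely $\mathsf{sc}(w)\le k$ forces $w\equiv_\Gamma\varepsilon$ by definition. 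Thus $\mathsf{sc}(w)\le k$ iff $w\equiv_\Gamma\varepsilon$, independently of $k$, and $\BSREACH(\cG)$ coincides with asking whether there is a run from $(q_{\mathit{init}},\varepsilon)$ to $(q_{\mathit{fin}},w)$ with $w\equiv_\Gamma\varepsilon$.

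Next I would reduce this to pushdown emptiness. Since $\Gamma$ is an anti-clique, the only relations in $\bbM\Gamma$ are $v^+v^-\equiv_\Gamma\varepsilon$ for all $v$ and $v^-v^+\equiv_\Gamma\varepsilon$ for looped $v$, with no relation mixing distinct vertices; hence $\bbM\Gamma$ is a free product of bicyclic monoids (one per unlooped vertex) and copies of $\Zn$ (one per looped vertex). I would build a pushdown automaton $\mathcal{P}$ whose states are those of $\cA$ (splitting each multi-letter transition label into single operations through fresh intermediate states, at polynomial cost) and whose stack stores the reduced storage content over the alphabet $X_\Gamma$. A step reading $v^+$ pushes $v^+$, and a step reading $v^-$ pops $v^+$ if it is on top (rule \textbf{R1}); for a looped $v$ I would additionally allow reading $v^+$ to pop a top $v^-$ and reading $v^-$ to push $v^-$ (rule \textbf{R2}). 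Accepting when $\mathcal{P}$ reaches $q_{\mathit{fin}}$ with empty stack, $\mathcal{P}$ is nonempty iff the reachability instance is positive. As $\mathcal{P}$ has size polynomial in the input and pushdown emptiness is in $\P$, this gives the claim.

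The correctness of the stack simulation is the step I expect to require the most care: I would prove by induction on $|w|$ that the stack content is always the (unique) reduced form of the storage word read so far, so that it is empty at the end exactly when $w\equiv_\Gamma\varepsilon$. The crux is that cancellation in $\mathcal{P}$ only ever happens at the top of the stack, which is justified by the free-product structure forced by the anti-clique: rule \textbf{R3} applies only to $v^+,v^-$ of a single looped $v$, and such a pair already cancels directly by \textbf{R1}/\textbf{R2}, so no commutation is ever needed to expose a cancellation deeper in the stack and no cascade of cancellations can arise after a single pop.
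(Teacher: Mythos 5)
Your proposal is correct and takes essentially the same route as the paper: over an anti-clique every computation is a single context, so the scope bound is vacuous and the problem reduces to emptiness of a pushdown automaton whose stack tracks the storage content, with an extra stack symbol for looped vertices (your $v^-$ plays exactly the role of the paper's $v^\circ$). The only imprecision is that your invariant ``the stack equals the unique reduced form'' holds just for greedy runs, not for all nondeterministic runs of the PDA (a run may push $v^-$ on top of $v^+$); for soundness one only needs that the stack stays $\equiv_\Gamma$-equivalent to the word read so far, while the greedy run supplies completeness -- a distinction that does not affect correctness and is glossed over by the paper as well.
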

\begin{proof}
Suppose we are given a graph $\Gamma=(V,I)$ in $\cG$, a valence system $\cA=(Q,\to)$
over $\Gamma$, states $q_0,q_f\in Q$, and a scope bound $k$.  

Observe since no two distinct vertices are adjacent in $\Gamma$, every word
over $X_\Gamma$ contains only one context. Therefore, $\BSREACH$ is just the
problem whether there is any word $w\in X_\Gamma^*$ with $w\equiv_\Gamma 1$ and
$(q_0,\varepsilon)\autsteps(q_f,w)$. We reduce this problem to emptiness of
pushdown automata.

We construct a pushdown automaton $\calP$ as follows. It has the state set $Q$,
initial state $q_0$, and final state $q_f$. It stack alphabet is $\Delta=\{v^+,v^-,v^{\circ}\mid v\in V\}$. The transitions are as follows:
\begin{enumerate}
\item For every transition $p\xrightarrow{v^+} q$, add a transition from $p$ to $q$ that pushes $v^+$ on the stack.
\item For every transition $p\xrightarrow{v^-} q$, add a transition from $p$ to $q$ that pops $v^+$ from the stack.
\item For every transition $p\xrightarrow{v^-} q$ such that $v$ has a self-loop in $\Gamma$, add a transition from $p$ to $q$ that pushes $v^\circ$ on the stack.
\item For every transition $p\xrightarrow{v^+} q$ such that $v$ has a self-loop in $\Gamma$, add a transition from $p$ to $q$ that pops $v^\circ$ from the stack.
\end{enumerate}
Each of these transitions reads the empty word $\varepsilon$ from the input.

It is now straightforward to check that $(q_0,\varepsilon)\autsteps(q_f,w)$ for
some $w\in X_\Gamma^*$ with $w\equiv 1$ if and only if $\calP$ accepts the empty
word, if acceptance is defined by reaching the configuration
$(q_f,\varepsilon)$.  Since membership of pushdown automata can be decided in
polynomial time (see, e.g.~\cite[Proposition 5.8 and Theorem 4.30]{Sippu1988},
the \lcnamecref{upper-bound-k-input-anti-clique} follows.  
\end{proof}

\begin{restatable}{lemma}{upperBoundKInputSingleton}\label{upper-bound-k-input-singleton}
  If $\Gamma$ has a single vertex, then $\BSREACH{\Gamma}$
  belongs to $\NL$.
\end{restatable}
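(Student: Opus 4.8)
The plan is to observe that when $\Gamma$ has a single vertex $v$, scope-boundedness becomes vacuous, so that $\BSREACH(\Gamma)$ collapses to a pure reachability question solvable in $\NL$. Indeed, since $V=\{v\}$, every subset of $V$ is trivially an anti-clique, hence every computation $w\in X_\Gamma^*$ consists of a single context and there is only one weak dependence class. Consequently, for any reduction $\pi$ and any related positions $w[i]\mathrel{R_\pi}w[j]$, there are no contexts strictly between them, so their interaction distance is $1$ and $\mathsf{sc}(w)\le 1$ whenever $w\equiv_\Gamma\varepsilon$. Thus, for every scope bound $k\ge 1$ in the input, $\BSREACH(\Gamma)$ is equivalent to deciding whether there is a run from $(q_{\mathit{init}},\varepsilon)$ to $(q_{\mathit{fin}},w)$ for some $w$ with $w\equiv_\Gamma\varepsilon$, i.e. to reaching the neutral element of $\bbM\Gamma$.

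I would then split into two cases according to whether $v$ is looped. If $v$ is unlooped, then $v^+v^-\equiv_\Gamma\varepsilon$ while $v^-v^+\not\equiv_\Gamma\varepsilon$, so $w\equiv_\Gamma\varepsilon$ holds exactly when $w$ is a balanced sequence of pushes $v^+$ and pops $v^-$ that never pops on an empty stack; this is precisely the acceptance condition of a pushdown automaton with the single stack symbol $v^+$, i.e. a one-counter automaton. I would reuse the pushdown construction $\calP$ from the proof of \cref{upper-bound-k-input-anti-clique}, which for a single unlooped vertex uses only the stack symbol $v^+$ (the self-loop rules never fire), after first expanding each multi-letter transition label into a path of single-operation transitions through fresh intermediate states (a routine logspace preprocessing step). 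The resulting machine is a genuine one-counter automaton that accepts the empty word exactly when the desired run exists, and emptiness of one-counter automata is in $\NL$~\cite{DBLP:conf/time/DemriG07}.

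For the looped case, $v^+$ and $v^-$ are mutually inverse, so $\bbM\Gamma\cong(\Zn,+)$ and the counter may go negative without restriction. Here $\cA$ is precisely a one-dimensional integer VASS: the net effect of a transition word is its number of $v^+$ minus its number of $v^-$, and $w\equiv_\Gamma\varepsilon$ holds iff this net effect is $0$. Since there are no zero tests or sign constraints, the question reduces to whether there is a walk from $q_{\mathit{init}}$ to $q_{\mathit{fin}}$ whose edge weights sum to $0$, which is reachability to the zero configuration in a fixed-dimension integer VASS and hence decidable in $\NL$~\cite{GURARI1981220}.

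The only real subtlety, and the step I would treat most carefully, is the looped case: unlike the unlooped one-counter automaton, the $\Zn$-counter can dip arbitrarily far below zero, so one cannot simply track a nonnegative counter and must instead appeal to the fixed-dimension integer-VASS bound of~\cite{GURARI1981220} (or give a direct zero-weight-walk argument). Everything else---the triviality of the scope constraint and the logspace expansion of multi-letter labels into single operations---is routine, and combining the two cases yields the claimed $\NL$ upper bound.
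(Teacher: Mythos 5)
Your proof is correct, and its overall structure matches the paper's: you first observe that a single-vertex graph makes every computation a single context, so the scope constraint is vacuous and $\BSREACH(\Gamma)$ collapses to reaching some $w\equiv_\Gamma\varepsilon$, and you then resolve this by counting. The one genuine divergence is the looped case. The paper stays entirely within one-counter automata for both cases: when $v$ is looped, its automaton stores the absolute value of $|u|_{v^+}-|u|_{v^-}$ on a nonnegative counter and keeps the sign in the finite state, then invokes $\NL$ non-emptiness of one-counter automata~\cite{DBLP:conf/time/DemriG07}. You instead treat the system as a one-dimensional integer VASS and cite the fixed-dimension $\Zn$-VASS reachability bound~\cite{GURARI1981220}. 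Both routes are sound, but your remark that ``one cannot simply track a nonnegative counter'' in the looped case is too strong---the sign-in-state trick does exactly that, and it has the advantage of keeping the entire proof self-contained with a single $\NL$ black box rather than two. Your unlooped case, obtained by specializing the pushdown construction of \cref{upper-bound-k-input-anti-clique} to a single stack symbol (after the routine logspace expansion of multi-letter labels), is essentially the same one-counter machine the paper constructs directly.
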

\begin{proof}
Since $\Gamma=(V,I)$ has a single vertex, we have $X_\Gamma=\{v^+,v^-\}$
for the vertex $v\in V$.  Furthermore, let $\cA$ be a valence system over $\Gamma$.

Since every subset of $X_\Gamma$ is a single context,
$\BSREACH$ is just the problem of whether there exists a word $w\in X_\Gamma^*$
with $(q_0,\varepsilon)\autsteps(q_f,w)$ such that $w\equiv_\Gamma \varepsilon$.
The condition $w\equiv_\Gamma \varepsilon$ is a simple counting condition:
\begin{enumerate}
\item If $v$ has a self-loop, then $w\equiv_\Gamma \varepsilon$ if and only if
$|w|_{v^+}=|w|_{v^-}$.
\item If $v$ has no self-loop, then $w\equiv_\Gamma \varepsilon$ if and only if
$|w|_{v^+}=|w|_{v^-}$ and for every prefix $u$ of $w$, we have $|w|_{v^+}\ge |w|_{v^-}$.
\end{enumerate}
In each case, we shall see that it is easy to construct a one-counter automaton
for the set of $w\in X_\Gamma^*$ such that $(q_0,\varepsilon)\autsteps(q_f,w)$
and $w\equiv_\Gamma \varepsilon$. Since non-emptiness of one-counter automata
is decidable in $\NL$ (see, e.g.~\cite{DBLP:conf/time/DemriG07}), this implies the
\lcnamecref{upper-bound-k-input-singleton}.

First, suppose $v$ has a self-loop. Then, after reading a prefix $u$, the
one-counter automaton stores the number $|u|_{v^+}-|u|_{v^-}$ using its
counter. Here, if the number is negative, the counter contains its absolute
value and remembers the sign in its state.

If $v$ has no self-loop, then after reading a prefix $u\in X_\Gamma^*$, then
the one-counter automaton also stores $|u|_{v^+}-|u|_{v^-}$ on the
counter. However, now it must make sure that this value never drops below zero,
meaning that it does not need the extra bit for the sign.
\end{proof}

\section{Additional material for Section~\ref{sec:hardness}}\label{sec:appendix-hardness}
\subsection{Proof of Proposition~\ref{lower-bound-k-fixed-two-nonadjacent}}
We prove the following:
\lowerBoundKFixedTwoNonadjacent*
\begin{proof}
	Suppose $\Gamma=(V,I)$ and $u,v\in V$ are non-adjacent.  A valence
	system that only uses the vertices corresponding to $u,v$ will only
	have runs that consist of one context. Therefore, the general 
	reachability problem reduces to $\BSREACH_k(\Gamma)$ for each $k\ge 1$.

	If both $u$ and $v$ are looped, the general reachability problem is
	the rational subset membership for the graph group of the graph
	consisting of two non-adjacent vertices, for which $\P$-hardness was
	observed in \cite[Theorem III.4]{DBLP:conf/lics/HaaseZ19}. (More
	precisely, our problem here is membership of the neutral element in a
	rational subset, but the general rational subset membership problem
	reduces to this case in logspace~\cite{Lohrey2013a}).

	If both $u$ and $v$ are unlooped, then it is easy to reduce emptiness
	of a pushdown automaton to general reachability: We may assume that
	the pushdown automaton has a binary stack alphabet, $\{u, v\}$. We
	turn the pushdown automaton into a valence system over
	$\Gamma$ as follows: If it pushes $x$ for
	$x\in\{u,v\}$, we multiply $x^+$. If it pops $x$, we multiply
	$x^-$. Then clearly, the pushdown automaton has a run from the
	empty stack to the empty stack if and only if the reachability instance
	is positive.

	Finally, suppose $u$ is unlooped and $v$ is looped. Let $\Delta$ be the
	graph with vertices $a$ and $b$, where both $a$ and $b$ are unlooped. By
	our previous argument, we know that $\BSREACH_k(\Delta)$ is $\P$-hard.
	We reduce $\BSREACH_k(\Delta)$ to $\BSREACH_k(\Gamma)$ as follows.
	Given a valence system $\cA$ over $\Delta$, we construct a valence system
	$\hat{\cA}$ over $\Gamma$ with the same set of states and the following transitions:
	\begin{align*}
		&p\xrightarrow{u^+v^+} q   && \text{for every transition $p\xrightarrow{a^+}q$} \\
		&p\xrightarrow{v^-u^-} q && \text{for every transition $p\xrightarrow{a^-} q$}. \\
		&p\xrightarrow{u^+v^+v^+} q   && \text{for every transition $p\xrightarrow{b^+}q$} \\
		&p\xrightarrow{v^-v^-u^-} q && \text{for every transition $p\xrightarrow{b^-} q$}.
	\end{align*}

	Moreover, $\hat{\cA}$ has the same initial and final state as $\cA$.
	Then it is easy to see that $\hat{\cA}$ is a positive instance if and
	only if $\cA$ is a positive instance.
\end{proof}

\subsection{Construction in proof of Proposition~\ref{lower-bound-k-input-two-adjacent}}\label{lower-bound-k-input-two-adjacent-details}
   Let $\Gamma=(V,I)$ and suppose $u,v\in V$ are adjacent.
  Let $\cA=(Q,T,q_0,w_0)$ be a BQA and $q\in Q$ be a state.

  As above, we use the shorthands
  \[ \bitzero=u^+,~~~\bar{\bitzero}=u^-,~~~\bitone=u^+u^+,~~~\bar{\bitone}=u^-u^-,~~~\n=v^+v^-,~~~\biteps=u^+u^-.\]

  We construct a valence automaton
  $\hat{\cA}=(\hat{Q},\hat{T},\hat{q}_0,\hat{q}_f)$ over $\Gamma$ as
  follows. Its state set is $\hat{Q}=Q\cup\{\hat{q}_0,\hat{q}_f\}$,
  where $\hat{q}_0$ and $\hat{q}_f$ are fresh (initial and final)
  states.

  In the beginning, our valence automaton multiplies
  \[ t_0=(\bitzero\n\bitone\n\bitone\n)(\biteps\n\biteps\n\biteps\n\bitzero\n\bitone\n\bitone\n)^{n-1},\]
  meaning there is a transition $(\hat{q}_0,t_0,q_0)$.
  Moreover, in the end, our valence system multiplies
  \[ t_f=(\bar{\bitzero}\n\bar{\bitone}\n\bar{\bitone}\n)(\biteps\n\biteps\n\biteps\n\bar{\bitzero}\n\bar{\bitone}\n\bar{\bitone}\n)^{n-1}. \]
  This means there is a transition $(q_f,t_f,\hat{q}_f)$.
  Finally, for each transition $(p,x,y,p')$ in $\cA$, we have a transition
  $(p,t_{x,y},p')$ in $\hat{\cA}$, where
  \[ t_{x,y}=x_1\n x_2 \n x_3 \n y_1\n y_2\n y_3\n, \]
  in which
  \[ (x_1,x_2,x_3)=\begin{cases} (\bar{\bitzero},\bar{\bitone},\bar{\bitone}) & \text{if $x=0$}, \\
      (\bar{\bitone}, \bar{\bitzero},\bar{\bitzero}) & \text{if $x=1$}, \end{cases}~~~\text{and}~~~(y_1,y_2,y_3) = \begin{cases} (\bitzero, \bitone,\bitone) & \text{if $y=0$}, \\ (\bitone,\bitzero,\bitzero) & \text{if $y=1$}\end{cases}. \]

  Finally, if the vertices $1$ and $2$ in $\Gamma$ belong to different
  weak dependence classes, we set $k=3(2n-1)$. If they are weakly
  dependent, we set $k=6(2n-1)$. As described in \cref{sec:hardness}, it is
  now straightforward to verify that there is a $k$-scope-bounded run
  from $\hat{q}_0$ to $\hat{q}_f$ in $\hat{\cA}$ if and only if
  $(q_0,0^n)\autsteps(q_f,0^n)$ in $\cA$.

\subsection{Construction in proof of Proposition~\ref{lower-bound-k-fixed-unbounded-cliques}}

  We  construct a valence system $\hat{\cA}$ as follows. It has states $\hat{Q}=Q\cup\{\hat{q}_0, \hat{q}_f\}$,
  where $\hat{q}_0$ and $\hat{q}_f$ are its final and initial state, respectively. We use the idea sketched to maintain the $i$-th bit using $a_i$, $\bar{a}_i$, $b_i$, and $\bar{b}_i$.
  Thus, we first initialize all $n$ bits by using the string
  \[ t_{0} =t_{0,1}\cdots t_{0,n},~~~\text{where}~~~t_{0,i}= a_i^{k}b_i\bar{b}_i~\text{for $i=1,\ldots,n$}. \]
  This means, there is an edge  $(\hat{q}_0,t_0,q_0)$. At the end of the computation,
  we make sure that all bits are set to zero using the string
  \[ t_{f} = t_{f,1}\cdots t_{f,n},~~~\text{where}~~~t_{f,i}=\bar{a}_i^{k}b_i\bar{b}_i~\text{for $i=1,\ldots,n$}. \]
  Thus, there is an edge $(q_f,t_f,\hat{q}_f)$.

  In order to simulate a transition $(p,i,x,y,p')$ of the BVA, we have to first read the value $x$ from counter $i$, using $\bar{a}_i^{(1+2x)k}b_i\bar{b}_i$, then use $(a_i\bar{a}_ib_i\bar{b}_i)^k$ to create a large interaction distance, and finally use $a_i^{(1+2y)k}b_i\bar{b}_i$ to store the value $y$. Hence, for every transition $(p,i,x,y,p')$ in the BVA, we have a transition
 $(p,t_{i,x,y},p')$ in  $\hat{\cA}$, where
 \[ t_{i,x,y} = \bar{a}_i^{(1+2x)k} b_i\bar{b}_i (a_i\bar{a}_ib_i\bar{b}_i)^k a_i^{(1+2y)k} b_i\bar{b}_i.\]
 Note that we now indeed use $a_i^kb_i\bar{b}_i$ to store $0$ and $a_i^{3k}b_i\bar{b}_i$ to store $1$ (and analogously for reading the bits).
 Hence, by the observation, we have $(q_0,0^n)\autsteps (q_f,0^n)$ in $\cA$ if and only if $(\hat{q}_0,\varepsilon)\autsteps (\hat{q}_f,w)$ for some $w\in X_\Gamma^*$ with $w\equiv 1$ and $\mathsf{sc}(w)\le k$.

 \subsection{Proofs of Theorem~\ref{main:k-input} and Theorem~\ref{main:k-fixed}}\label{sec:appendix-hardness-main-results}

\subparagraph{Proof of \cref{main:k-input}} Let $\cG$ be a graph class. If all
the graphs in $\cG$ are anti-cliques, then the upper bounds are provided by
\cref{upper-bound-k-input-simple}.  $\NL$-hardness is inherited from the
reachability problem in finite directed graphs. Moreover, $\P$-hardness in the
case that $\cG$ contains a graph with two non-adjacent vertices follows from
\cref{lower-bound-k-fixed-two-nonadjacent}.  If there is a graph $\Gamma$ in
$\cG$ that is not an anti-clique, then $\Gamma$ has two non-adjacent vertices
and thus \cref{lower-bound-k-input-two-adjacent} yields $\PSPACE$-hardness of
$\BSREACH(\Gamma)$ and hence of $\BSREACH(\cG)$.

\subparagraph{Proof of \cref{main:k-fixed}} Now let $k\ge 1$ be fixed and let
$\cG$ be a class of graphs that is closed under strongly induced subgraphs. If
$\cG$ only contains cliques and the graphs in $\cG$ have bounded size, then
$\BSREACH_k(\cG)$ is in $\NL$ by
\cref{upper-bound-k-fixed-cliques-bounded-cliques}. Moreover, if the cliques in
$\cG$ have bounded size, then $\BSREACH_k(\cG)$ is in $\P$ by
\cref{upper-bound-k-fixed-bounded-cliques}.  As above, $\NL$-hardness is
inherited from reachability in finite directed graphs.  Furthermore,
$\P$-hardness follows from \cref{lower-bound-k-fixed-two-nonadjacent}.
Finally, if these two conditions are not met, then $\cG$ contains arbitrarily
large cliques. Since $\cG$ is closed under strongly induced subgraphs, this
implies that $\cG$ either (i)~contains an unlooped clique of every size or
(ii)~contains a looped clique of every size. Thus $\PSPACE$-hardness of
$\BSREACH_k(\cG)$ follows from \cref{lower-bound-k-fixed-unbounded-cliques}.

\end{document}